\documentclass{article}
\usepackage{graphicx}
\usepackage[utf8]{inputenc}
\usepackage{braket}
\usepackage[top=0.75in,left=0.85in,right=0.85in,bottom=0.75in]{geometry}
\usepackage{amsfonts}
\usepackage[T1]{fontenc}
\usepackage{verbatim}
\usepackage{hyperref}
\usepackage{amsthm}
\usepackage{amsmath}
\usepackage{amssymb}
\newtheorem{theorem}{Theorem}

\newcounter{main}[section]

\newtheorem{definition}[main]{Definition}
\newtheorem{example}[main]{Example}
\newtheorem{claim}[main]{Claim}
\newtheorem{proposition}[main]{Proposition}

\newtheorem{lemma}[main]{Lemma}
\newtheorem{corollary}[theorem]{Corollary}

\newtheorem{remark}[main]{Remark}

\usepackage{vwcol}
\usepackage{tikz}
\usepackage{tikz-cd}
\usepackage{mathtools}
\usepackage{tasks}
\usepackage{authblk}
\usepackage{appendix}
\usepackage{algorithm}
\usepackage{algpseudocode}
\usepackage{makecell}
\usepackage{wrapfig}
\usepackage{xspace}
\usetikzlibrary{calc}
\usetikzlibrary{decorations.pathreplacing,calligraphy}

\usepackage[sorting=nyt, 
style=alphabetic, 
backref, 
maxbibnames=20, 
doi=false,
url=true]{biblatex}
\DeclareMathOperator{\supp}{supp}
\DeclareMathOperator{\ev}{ev}
\DeclareMathOperator{\RM}{RM}
\DeclareMathOperator{\punc}{punc}

\DeclareMathOperator{\argmin}{argmin}

\title{Improved Quantum Codes with Transversal $T$ Gates}
\date{\today}
\author[1,2]{Adam Wills\textsuperscript{*}}
\affil[1]{Center for Theoretical Physics — a Leinweber Institute, \protect\\ Massachusetts Institute of Technology, Cambridge, MA\vspace*{5mm}}
\affil[2]{IBM Research, IBM T. J. Watson Research Center, NY}
\begin{document}

\maketitle

\begin{abstract}
    In this work, we study quantum CSS codes with transversal $T$ gates. Here, $T$ gate transversality is meant in the strongest sense; the application of physical $T$ to every physical qubit yields logical $T$ on every logical qubit, without Clifford corrections. Despite the importance of the $T$ gate in fault-tolerant quantum computation, the parameters of asymptotic families of such codes have not been improved since the work of Hastings and Haah in 2017~\cite{sublogarithmic_arxiv}, and Haah in 2018~\cite{haah2018towers}. In this work, we significantly broaden the achievable parameters of quantum code families with transversal $T$ gates, both expanding the regime of achievable polynomial rate and distance, and constructing such codes with constant rate and growing distance; this is the first time the latter has been achieved, even when allowing Clifford corrections after the transversal $T$ gate. These are also the first codes achieving $\gamma \to 0$ for a code with a transversal $T$ gate, where $\gamma$ is the overhead exponent of magic state distillation.

    To do this, we develop a framework of divisible decreasing monomial codes, punctured at a downward-closed set on the Boolean hypercube to create logical qubits. We prove a closed-form expression for the distance of such a code punctured at such a set, which may be of independent interest. We first instantiate this with an explicit construction based on weighted Reed-Muller codes, puncturing at low Hamming-weight points, and then with a randomised construction, where a small random set of points is protected from the puncturing to save quantum code distance, achieving improved parameters.
\end{abstract}
\vfill
\footnoterule
\noindent\textsuperscript{*}\texttt{a\_wills@mit.edu}
\pagebreak
\tableofcontents

\section{Introduction}

The theory of quantum error correction~\cite{shor1995scheme}, and the related field of fault-tolerant quantum computing~\cite{aharonov1997fault}, attempt to design quantum codes, and quantum fault tolerance protocols, with a mind to the eventual construction of a large-scale fault-tolerant quantum computer with low spacetime overhead. To support this effort, theoretical work has sought to understand what quantum codes, and quantum fault tolerance protocols, can and cannot achieve, in various well-motivated settings.

A central notion in these fields is that of transversality. The most general notion of transversality is as follows: on some quantum codes with the right structure, one may act on the physical qubits with a constant-depth circuit, and in doing so execute some useful logical operation on the logical qubits. The operation is inherently fault tolerant, since errors on the physical qubits may only spread by a constant amount in the operation, and is naturally low-overhead in space and time. Unfortunately, the Eastin-Knill theorem~\cite{eastin2009restrictions} says that no quantum code may support a universal transversal gateset. In other words, universal fault-tolerant logic will at some stage require some more expensive operation.

Many natural quantum codes such as the Steane $7$-qubit code~\cite{steane1996error}, and the 2D colour code~\cite{bombin2006topological}, support transversal gates restricted to the Clifford group. It is found to be much harder to construct quantum codes supporting non-Clifford transversal gates. Such codes are well-motivated, since the remaining fault-tolerant operation to achieve universality would be some Clifford operation, for which the corresponding fault-tolerant operation would generally be more lightweight: distilling and teleporting a Hadamard gate, for example.

In this direction, recent works constructed the first asymptotically good families of quantum codes with transversal non-Clifford gates. However, the gates that make up the physical transversal operation, and get implemented at the logical level, were multi-qubit gates such as $CCZ$~\cite{golowich2025asymptotically,nguyen2024good}, or larger gates~\cite{wills2024}. This is not ideal, as such multi-qubit gates are typically more difficult to execute at high fidelity in hardware, and the corresponding logical operation may be less directly useful. Moreover, multi-qubit transversal gates do not imply transversally addressable gates at lower levels in the Clifford hierarchy, for which additional structure is required~\cite{he2025quantum,he2025asymptotically}.\footnote{By contrast, for the single-qubit non-Clifford gates we are going to study, the implication is immediate; see Remark~\ref{rmk:addressability}.} While the codes constructed in the works~\cite{wills2024,golowich2025asymptotically,nguyen2024good} were not LDPC, and so are challenging to implement at the physical level, the lessons learned therein inspired an active, ongoing line of work seeking to construct better quantum LDPC codes with transversal non-Clifford gates~\cite{golowich2025quantum,scruby2024quantum,zhu2025transversal,golowich2025near,zhu2025non,zhu2026non,kobayashi2026clifford,breuckmann2026cups,li2025poincar,li2026transversal}, which could be run at the physical level: a highly desirable prospect.

Like the earlier works~\cite{wills2024,golowich2025asymptotically,nguyen2024good}, we aim to develop the theory around quantum codes with non-Clifford transversal gates. However, we make the crucial restriction that the non-Clifford gate we want our quantum codes to support transversally is the one-qubit gate $T \coloneq \text{diag}(1,e^{i\pi/4})$: the most widely-considered non-Clifford gate in fault-tolerant quantum computing given its simplicity, ease of consumption, how easily it may be executed in many hardware modalities, and the fact that its transversality implies the transversal addressability of gates in lower levels of the Clifford hierarchy. Concretely, we wish to study asymptotic families of quantum stabiliser codes\footnote{We restrict ourselves to stabiliser codes as it is often possible to construct non-stabiliser codes supporting exotic transversal gates~\cite{kubischta2023family,pollatsek2004permutationally,kubischta2024permutation,kubischta2024quantum,zhang2025transversal}, and in particular we suspect it should be possible to construct even further improved quantum codes with transversal $T$ gates by relaxing our restriction. However, stabiliser codes are almost exclusively considered in practice because it is not clear how crucial fault tolerance procedures, such as fault-tolerant syndrome extraction, may be executed on non-stabiliser codes. These are well-developed and well-understood procedures for stabiliser codes~\cite{Gottesman}. All codes constructed in this work will in fact be quantum CSS codes~\cite{calderbank1996good,steane1996multiple}}~\cite{gottesman1997stabilizer} for which the application of the $T$ gate separately on each physical qubit (the physical transversal $T$ gate) executes the logical $T$ gate separately on each logical qubit (the logical transversal $T$ gate). 

While our main motivation is simply theory development, especially informing potential development of quantum LDPC codes with transversal $T$ gates, there are several other possible impacts of this work. First, non-LDPC codes like ours can be run at the logical level of some LDPC code, for example as a magic state distillation scheme~\cite{MSD}. Because we do not allow Clifford corrections after the transversal $T$ gate, our protocols imply particularly fast protocols for distilling $T$ gates to $T$ gates. It will also turn out that the codes we construct with transversal $T$ gates will have the best known parameters, even when allowing Clifford corrections. More generally than magic state distillation, concatenation is emerging as a promising candidate for building quantum architectures~\cite{pattison2025hierarchical,gidney2025yoked,wills2026concatenating}. To perform computation with low overhead on such architectures, high-performance quantum codes with transversal non-Clifford gates may be desirable, where there is no restriction on the codes being LDPC.

There are two further impacts we envision of this work, which are specific to the way we go about constructing our codes. First, we construct our codes using punctured decreasing monomial codes. To establish the distance of the resulting quantum code, we provide a closed-form expression for the distance of a classical Boolean decreasing monomial code after puncturing at a downward-closed set on the Boolean hypercube, see Lemma~\ref{lem:monomial_puncture_distance}, which is new to the best of our knowledge. Given how widely-considered decreasing monomial codes are for classical communication (when considered as polar codes)~\cite{arikan2009channel,bardet2016algebraic,bioglio2020design}, this may be of independent interest. There is a further place, this time in the quantum literature, where the study of punctured multivariate evaluation codes may be of interest. Indeed, a key ingredient of the recent construction in~\cite{li2026transversal} was the proof that tensor products of randomly-punctured Reed-Solomon codes are two-way product expanding. However, the achieved field size is very large. One natural approach for reducing the field size is to consider punctured multivariate evaluation codes such as algebraic geometry or Reed-Muller/monomial codes. The key technical observation of this work is that the distance of punctured multivariate decreasing monomial codes is particularly well behaved when the puncture set is itself a downward-closed set. Thus, it is possible that our work on punctured decreasing monomial codes could influence that line of work.

\subsection{Overview of Results}

The main results of this work are to significantly expand the regime of parameters achievable by asymptotic families of quantum codes supporting transversal $T$ gates, both in the regime of polynomial rate and distance (see Theorem~\ref{thm:sub_constant_explicit} for the parameters we achieve with explicit families and Theorem~\ref{thm:sub_constant_exist} for the parameters we achieve with not necessarily explicit families), and constructing such codes in the regime of constant rate and growing distance for the first time (see Theorem~\ref{thm:constant_explicit} --- these code families are explicit). For concreteness, we state our notion of $T$ gate transversality in a definition here.
\begin{definition}[$T$ gate transversality]\label{def:T_gate_transversality}
    We say that a quantum code supports a transversal $T$ gate if the application of the physical $T$ gate separately on every physical qubit (the physical transversal $T$ gate) executes the logical $T$ gate separately on every logical qubit (the logical transversal $T$ gate).
\end{definition}

\begin{figure}[H]
        \centering
        \includegraphics[width=0.65\linewidth]
        {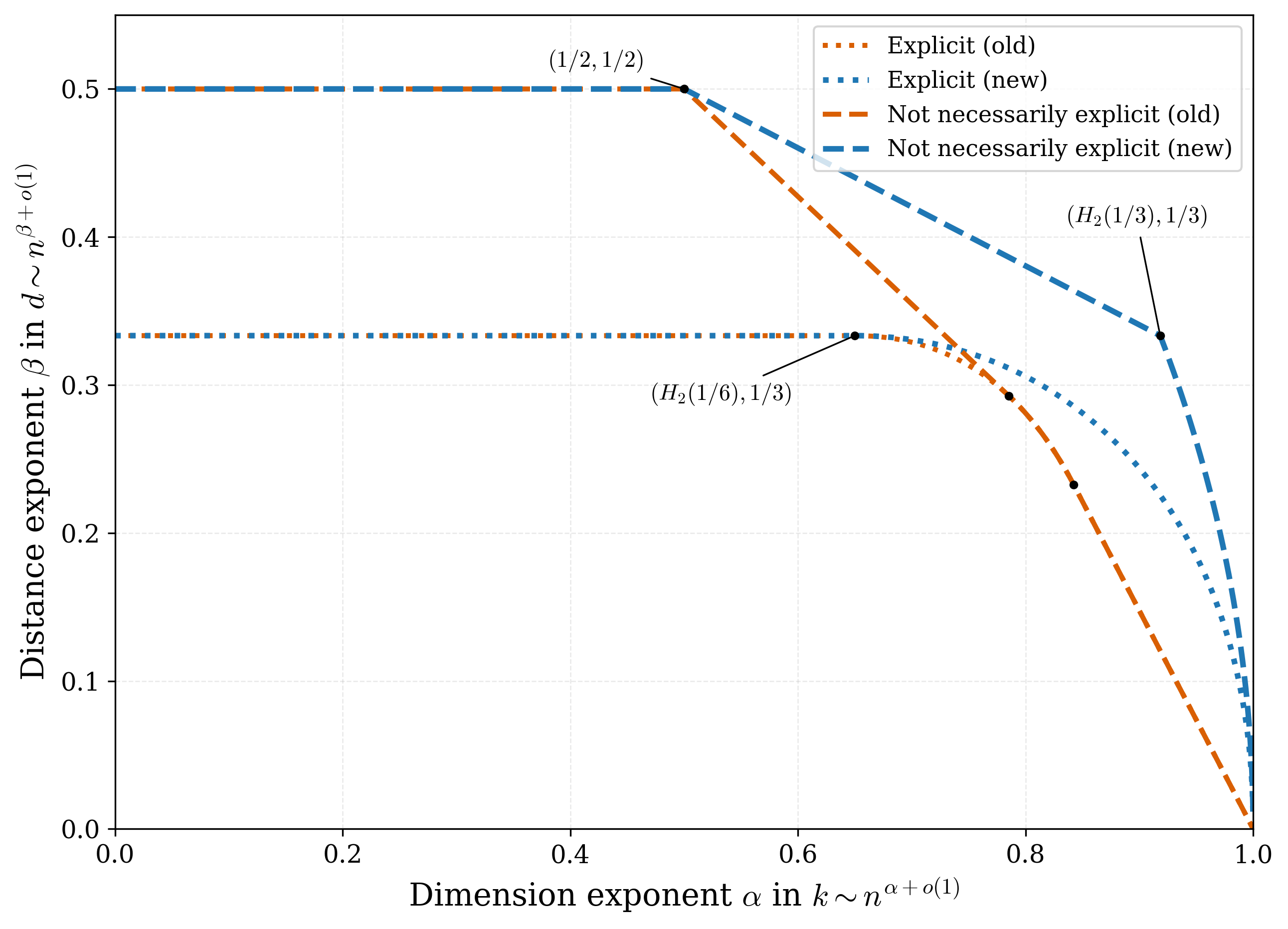}
    \caption{The boundary of the achievable dimension and distance exponents $(\alpha,\beta)$ in the polynomial rate and distance regime $k = \Omega\left(n^{\alpha + o(1)}\right)$, $d = \Omega\left(n^{\beta + o(1)}\right)$, for quantum CSS codes with transversal $T$ gates. We show both explicit constructions, and not necessarily explicit constructions, both new in this work, and previously known. We give the exact formulae for the boundaries of explicitly achievable exponents $\beta(\alpha)$, both new and existing, in Theorem~\ref{thm:sub_constant_explicit} and Remark~\ref{rmk:explicit_old_sub_constant}, and for not necessarily explicit families in Theorem~\ref{thm:sub_constant_exist} and Remark~\ref{rmk:exist_old_sub_constant}, respectively.}
    \label{fig:sub_constant_achievable}
\end{figure}
\begin{figure}[H]
        \centering
        \includegraphics[width=0.65\linewidth]
        {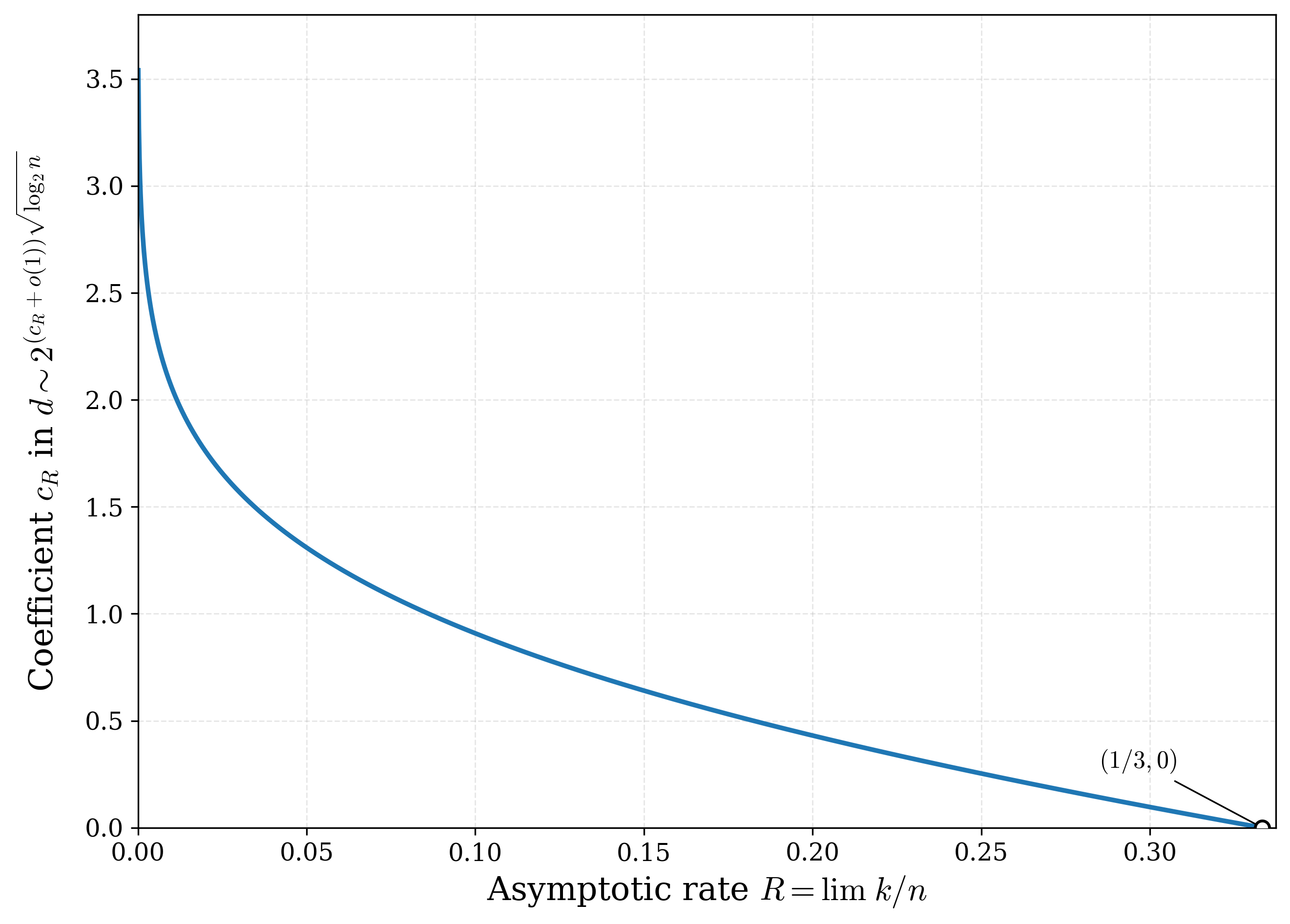}
    \caption{The boundary of the achievable asymptotic parameters in the constant-rate regime. For a given constant rate $R \in (0,1/3)$, we plot $c_R$, where the achievable distance scaling is the growing sub-polynomial expression $d = \Omega\left(2^{\left(c_R+o(1)\right)\sqrt{\log_2n}}\right)$. We do not find a difference between explicit families and not necessarily explicit families in the constant-rate regime, and this constant-rate/growing distance parameter regime was not accessible before this work, hence the single line on the figure. The exact formula for the achievable parameters in this regime is given in Theorem~\ref{thm:constant_explicit}.}
    \label{fig:constant_achievable_parameters}
\end{figure}
\pagebreak
\begin{remark}\label{rmk:T_gate_transversality}
    Note that it is common to consider $T$ gate transversality as the situation where physical $T$ on each physical qubit executes logical $T^\dagger$ on each logical qubit~\cite{hastings2018distillation}. We consider this situation to be the same as Definition~\ref{def:T_gate_transversality}, and do not distinguish between them throughout the paper. The reason is that one may move from one notion to the other in the same quantum CSS code, with a different choice of logical basis. Concretely, one may make the necessary basis change by imagining acting with the $X$ operator on every logical qubit; this makes the necessary change because $XT^\dagger X \propto T$.
\end{remark}

\begin{remark}\label{rmk:addressability}
    With straightforward generalisations of the notions considered in this paper, we may also construct improved quantum codes with transversal $R_\ell \coloneq \text{diag}(1,e^{2\pi i/2^\ell}) = T^{\frac{1}{2^{\ell-3}}}$ gates, for all $\ell \geq 3$. $R_\ell$ is in the $\ell$-th level of the Clifford hierarchy. Moreover, since $R_\ell$ is a single-qubit gate, and $R_\ell X R_\ell^\dagger X = R_{\ell-1}$ (up to a global phase), a code with a transversal $R_\ell$ gate implies a code with a transversally addressable $R_{\ell-1}$ gate. An analogous implication does not hold for transversal, and transversally addressable multi-qubit gates, for which additional structure is required~\cite{he2025quantum,he2025asymptotically}.
\end{remark}
\begin{remark}\label{rmk:explicitness}
    An explicit family of quantum codes is one for which there is a polynomial time algorithm constructing the code (more concretely, a generator matrix for the code). Such constructions are usually arrived at via constructive means. By contrast, one may show the existence of not necessarily explicit families via some randomised argument; see Section~\ref{sec:random_construction}.
\end{remark}

Turning to our main results, we begin by stating our result on the parameters we may achieve in the polynomial rate and distance regime with explicit families. The new and prior boundary of parameters achievable by explicit families is shown in Figure~\ref{fig:sub_constant_achievable}.
\begin{theorem}\label{thm:sub_constant_explicit}
    Consider\begin{equation}\label{eq:explicit_exponent_boundary}
    \beta_{\text{explicit}}(\alpha) = \begin{cases}
        \frac{1}{3} &\text{ if } 0 < \alpha \leq H_2(1/6)\\
        \frac{1-2q^*(\alpha)}{3} &\text{ if } H_2(1/6) < \alpha < 1
    \end{cases},
\end{equation}
where
\begin{equation}
    q^*(\alpha)\text{ is the unique positive solution to }1-2q = (1+q)H_2\left(\frac{3H_2^{-1}(\alpha)}{1+q}\right),
\end{equation}
    where $H_2$ is the binary entropy function, and $H_2^{-1}: (0,1) \to (0,1/2)$ is its inverse. For any constant $\alpha \in (0,1)$, there exists an explicit family of quantum CSS codes with transversal $T$ gates with parameters 
    \begin{equation}
        [[n,\Omega(n^{\alpha+o(1)}), \Omega(n^{\beta_{\text{explicit}}(\alpha)+ o(1)})]].
    \end{equation}
\end{theorem}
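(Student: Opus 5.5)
The construction will be a punctured Reed–Muller / decreasing monomial code. The starting point is the standard triorthogonality-type framework. Take $m$ variables and the Boolean hypercube $\{0,1\}^m$. Choose a decreasing (downward-closed) set of monomials $\mathcal{M}$ such that every product of three monomials from $\mathcal{M}$, together with the weight-divisibility needed for $T$, has degree below $m$. For plain Reed–Muller this means degree $r$ with $3r < m$, or $3r\le m-1$ after the usual divisibility bookkeeping. The classical code $C=\ev(\mathcal{M})$ is then triply even in the sense needed for transversal $T$, i.e.\ divisible by $8$ with the correct pairwise and triple-wise divisibility.

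Next, puncture at a downward-closed set $P\subseteq\{0,1\}^m$ of points, for instance all points of Hamming weight at most $w$, to produce logical qubits. The quantum CSS code has $X$-stabilisers given by the codewords of $C$ vanishing on $P$, restricted to the complement. Logical $X$ operators are indexed by $P$: each point $p\in P$ gives a codeword equal to $1$ at $p$ and $0$ elsewhere on $P$. Such codewords exist when the evaluation map from the monomials of $\mathcal{M}$ to $P$ is surjective, which for $P$ the Hamming ball of radius $w$ holds if $w\le r$ (use the monomials $x^S$ Möbius-inverted on $P$). Transversal $T$ then follows from the divisibility of $C$ in the usual way. So $k=|P|=\sum_{i\le w}\binom{m}{i}\approx 2^{mH_2(w/m)}$ and $n=2^m-|P|\approx 2^m$. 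Setting $w=\theta m$ gives $\alpha = H_2(\theta)$.

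The distance is controlled by Lemma~\ref{lem:monomial_puncture_distance}. $d_Z$ is the minimum weight of $C^\perp$-type words, roughly the dual Reed–Muller distance $2^{r+1}$ after puncturing. $d_X$ is the minimum weight, off $P$, of a codeword of $C$ not vanishing identically on $P$ modulo stabilisers. By the closed-form expression, for downward-closed $P$ this reduces to a minimisation over monomials $x^S\in\mathcal{M}$ of the number of points above $S$ outside $P$. That count is $2^{m-|S|}$ minus the points of weight at most $w$ above $S$. The binding case is when $2^{m-|S|}$ is nearly exhausted by $P$.

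To reach the target exponents I will work with weighted Reed–Muller codes, taking $m$ variables with weights chosen to balance the two distances. In the first regime, $\alpha\le H_2(1/6)$, take $r\approx m/3$, so $d\approx 2^{m/3}=n^{1/3}$, and $w=\theta m$ with $\theta\le 1/6$. With this choice the puncturing costs nothing in the exponent, because points of weight at most $m/6$ above a degree-$m/3$ monomial are negligible. This gives $\beta=1/3$ with $\alpha=H_2(\theta)$ up to $H_2(1/6)$.

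In the second regime, $\alpha>H_2(1/6)$, lower the degree to $r=(1-2q)m/3$ in a rescaled parametrisation and optimise $q$. The threshold equation $1-2q=(1+q)H_2\!\left(\frac{3H_2^{-1}(\alpha)}{1+q}\right)$ should come from equating the $2^{m-|S|}$ count against the Hamming-ball count inside a subcube of dimension proportional to $(1+q)$. In that count the puncture radius $3H_2^{-1}(\alpha)$ is measured relative to the subcube dimension. Solving yields $\beta=(1-2q^*)/3$.

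The main obstacle is this last step. I need to identify the correct weighted monomial set, which I expect to involve a block of $qm$-type extra variables to produce the $(1+q)$ factor. I also need to verify that the minimiser in the distance formula is attained at the boundary monomial, and that the resulting entropy comparison gives exactly the stated equation. Uniqueness of the positive solution $q^*$ follows by monotonicity: the left side decreases in $q$ and the right side increases. Explicitness is immediate since all objects are monomial evaluations.
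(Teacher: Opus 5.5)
There is a genuine gap. Your first regime is fine: plain Reed--Muller of degree $r\approx m/3$ with the Hamming-ball puncture of radius $w=\theta m$ is the $q=0$ (Hastings--Haah) case. It gives distance $\binom{r+1}{>w}=2^{(1/3+o(1))m}$ whenever $w\le (r+1)/2$, i.e.\ $\theta\le 1/6$.

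For $\alpha>H_2(1/6)$, however, you give no construction, as you acknowledge, and the route you sketch cannot reach the stated exponent. With any plain Reed--Muller downset, $8$-divisibility forces $r\le (m-1)/3$. The quantum distance is then $\binom{r+1}{>w}$, which for $w=pm>m/6$ is only $2^{m(\frac13 H_2(3p)+o(1))}$, and $w\le r$ forces $p<1/3$. Lowering $r$ only makes this worse. That is exactly the old boundary, strictly below $\beta_{\text{explicit}}$.

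Part of the trouble is a misreading of Lemma~\ref{lem:monomial_puncture_distance}. The quantum distance is $d_Z$, since here $d_X\ge d_Z$, and it comes from applying the lemma to $C(\Delta^\perp)$, giving $d=\min_{L\notin\Delta}|2^L\setminus\Gamma|$. The minimum-weight words are $\prod_{i\in A}(1+x_i)$, supported on subsets of $\bar A$, i.e.\ subcubes hanging from the origin. They are not supported on points above a monomial of $\mathcal{M}$. Those subcubes are exactly what a low-weight puncture set eats into, so $d_Z$ is not ``roughly $2^{r+1}$''.

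The missing idea is a single heavy variable. Take weights $\boldsymbol{s}=(1,\dots,1,h)$ with $h=qm$ and $\Delta=\mathcal{W}_{\boldsymbol{s},r}$, and \emph{raise} the degree to $r=\frac{1+q}{3}m$. Lemma~\ref{lem:divis_weighted_RM} only requires $3r<m-1+h$, $2r<m-1$ and $r<m-2$, so $C(\Delta)$ stays $8$-divisible for $q<1/2$. Puncture only inside the light variables: $\Gamma=\{A\subseteq[m-1]:|A|\le w\}$ with $w=pm$. The conditions $w,h\le r$ amount to $3p-1<q<1/2$.

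Since $\Delta^\perp=\mathcal{W}_{\boldsymbol{s},m-2+h-r}$, the minimisation in Proposition~\ref{prop:downsets_to_quantum_code} splits into two cases.
\begin{itemize}
\item If $m\in A\in\Delta^\perp$, then $\bar A\subseteq[m-1]$ has size at least $r+1$. This gives $\binom{r+1}{>w}=2^{m(\frac{1+q}{3}H_2(\frac{3p}{1+q})+o(1))}$.
\item If $m\notin A$, then $m\in\bar A$, and every subset of $\bar A$ containing $m$ escapes $\Gamma$. This gives at least $2^{r-h+1}=2^{m(\frac{1-2q}{3}+o(1))}$.
\end{itemize}
Your ``subcube of dimension proportional to $(1+q)$'' is the $(r+1)$-set in the first case. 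Your $(1-2q)m/3$ is $r-h$ in the second case, not a lowered degree.

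Finally, equate the two exponents. The left term is decreasing in $q$ and the right term is increasing. Check that the crossing lies in the feasible interval, and in the entropy branch $3p/(1+q)>1/2$; the other branch forces $q=0$, hence $p\le 1/6$. This yields the stated $q^*$ and $\beta=(1-2q^*)/3$.
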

\begin{remark}\label{rmk:explicit_old_sub_constant}
    Before this work, the boundary of the achievable exponents for explicit codes was given by
    \begin{equation}
        \beta_{\text{explicit, old}}(\alpha) = \begin{cases}
            \frac{1}{3} &\text{ if } 0<\alpha \leq H_2(1/6)\\
            \frac{1}{3}H_2(3H_2^{-1}(\alpha)) &\text{ if }H_2(1/6) < \alpha \leq \alpha^*\\
            \frac{1-\alpha}{\gamma^*} &\text{ if } \alpha^* \leq \alpha < 1,
        \end{cases}
    \end{equation}
    instead of Equation~\eqref{eq:explicit_exponent_boundary}, where $\alpha^* = H_2(p^*) \approx 0.842$, $p^* = \argmin_{p \in [1/6,1/3)}\frac{3(1-H_2(p))}{H_2(3p)} \approx 0.271$ and $\gamma^* = \min_{p \in [1/6,1/3)}\frac{3(1-H_2(p))}{H_2(3p)} \approx 0.678$. This function is obtained by considering Hastings and Haah's codes~\cite{hastings2018distillation}, and taking polynomially many copies thereof. Note that Haah's codes~\cite{haah2018towers} are not included here since they are not explicit.
\end{remark}

Theorem~\ref{thm:sub_constant_explicit} will be proved in Subsection~\ref{subsec:explicit_asymptotic_sub_constant_rate} using the preceding construction and results. We next state our second main result on the parameters we may achieve in the constant rate, growing distance regime with explicit families. The achievable parameters in this regime are plotted in Figure~\ref{fig:constant_achievable_parameters}.
\begin{theorem}\label{thm:constant_explicit} 
    For every constant $R \in (0,1/3)$, there exists an explicit family of quantum CSS codes with transversal $T$ gates and parameters
    \begin{equation}
        \left[\left[n,\left(R + o(1)\right)n,\Omega\left(2^{\left(c_R + o(1)\right)\sqrt{\log_2(n)}}\right)\right]\right],
    \end{equation}
    $c_R = -\Phi^{-1}\left(\frac{2R}{1+R}\right) > 0$, where $\Phi$ is the cumulative distribution function of the standard Gaussian.
\end{theorem}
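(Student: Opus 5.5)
The plan is to instantiate the framework of punctured divisible decreasing monomial codes with the simplest choice: plain Reed--Muller-type monomial codes on $m$ variables, with $n \approx 2^m$, punctured at a Hamming ball of low-weight points. Concretely, we will take $C_2 \subseteq C_1$ built from the spans of monomials of degree at most $r_2 < r_1$. The triorthogonality/divisibility condition behind transversal $T$ (products of three $C_1$-type monomials vanishing appropriately, i.e. a degree budget of roughly $3r_1 < m$ up to the usual parity/divisibility conditions) will force $r_1 \approx m/3$. The puncture set will be a downward-closed set $P$, namely all points of Hamming weight at most $w$, and each punctured point becomes a logical qubit. Thus $k = |P| = \sum_{j\le w}\binom{m}{j}$ and $n = 2^m - |P|$ plus whatever further conditions the framework needs.

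To obtain constant rate we need $|P|/2^m \to$ a constant. This is exactly the central limit regime: we choose $w = m/2 + t\sqrt{m}/2$ with $t$ fixed. Then $|P|/2^m \to \Phi(t)$, and the rate is $k/n \approx \Phi(t)/(1-\Phi(t))$ after accounting for the lost points. Setting $R = \Phi(t)/(1-\Phi(t))$ gives $\Phi(t) = R/(1+R)$, which is close to the claimed form. The claimed expression $2R/(1+R)$ indicates that the correct bookkeeping involves $X$- and $Z$-type constraints or an extra factor. I expect $n$ to count punctured coordinates appropriately, for instance via a split in which roughly half the coordinates are retained, so that $R = \Phi(t)\cdot(\ldots)$. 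I will fix these constants once the framework's rate formula from the construction section is in hand. Since $t<0$ is needed for $R<1/3$, we get $c_R = -t > 0$.

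For the distance, we invoke Lemma~\ref{lem:monomial_puncture_distance}, the closed-form distance of a decreasing monomial code punctured at a downward-closed set. For monomials $x_S$ the relevant quantity should be $\min_S |\{y \succeq \text{support}(S)\} \setminus P|$-type counts, namely the number of points above $x_S$'s support lying outside $P$. For $|S| = s \le r_1 \approx m/3$, the points dominating $1_S$ have weight at least $s$. Taking the minimising $S$ and subtracting the points of weight at most $w$, we are left with a count of order $\sum_{j > w-s}\binom{m-s}{j}$. With $m-s \approx 2m/3$ and $w - s$ slightly below $(m-s)/2 + \ldots$, one of these terms dominates. The quantum distance is the minimum over the $X$ and $Z$ sides. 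The $Z$ side uses the dual code, whose relevant degree is about $m - r_1 - 1 \approx 2m/3$, and this side is the bottleneck. The optimising monomial has degree near the top of the allowed range, leaving about $\sqrt{m}$ free directions beyond the threshold, and yields a distance of order $2^{(-t+o(1))\sqrt{m}}$. Since $m = \log_2 n + O(1)$, this matches $2^{(c_R+o(1))\sqrt{\log_2 n}}$.

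The main obstacle is this last step. We must evaluate the closed-form distance from Lemma~\ref{lem:monomial_puncture_distance} precisely, identify the minimising monomial (I expect maximal degree on the dual side), and show that the count of surviving dominating points is $2^{(c_R+o(1))\sqrt m}$ rather than something exponential in $m$. The latter would contradict the $T$-transversality tradeoff, so tracking the $\sqrt m$ scale through binomial tail estimates (Hoeffding/CLT-type bounds) is where care is needed. Explicitness is immediate, since the generator matrices are monomial evaluation matrices of size $\mathrm{poly}(n)$.
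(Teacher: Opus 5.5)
\textbf{The gap: plain Reed--Muller codes cannot give constant rate.} This breaks your proposal at the first step. In this framework each logical qubit is a punctured coordinate, and the punctured columns of the generator matrix must be linearly independent; in the paper this is guaranteed by $\Gamma\subseteq\Delta$ via Claim~\ref{claim:column_rank}. So $k\le\dim C$. Eight-divisibility of $\RM(m,r_1)$ forces $3r_1<m$, hence
\[k\le\dim\RM(m,r_1)=2^{(H_2(1/3)+o(1))m}=o(2^m),\]
and no puncture set yields $k=\Theta(n)$. The same bound holds in any variant where the $X$-type logicals and stabilisers are drawn from $\RM(m,r_1)$.

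Concretely, your ball $P=\{A:|A|\le w\}$ with $w\approx m/2$ is not contained in $\Delta=\{A:|A|\le r_1\}$. The non-annihilation hypothesis of Lemma~\ref{lem:monomial_puncture_distance} also fails. Take $A\in\Delta^\perp$ with $|A|=m-r_1-1$. The dual codeword $\prod_{i\in A}(1+x_i)$ is supported on the subcube $2^{\bar A}$, and $|\bar A|=r_1+1<w$, so this subcube lies entirely inside $P$. Hence the distance formula returns $0$, and the quantum code is not even well defined.

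Your reading of the lemma is also off. It counts points of the subcube $2^{\bar A}$ (points below $\bar A$) that avoid the puncture set, not points above a monomial's support. Its minimisers are the products $\prod_{i\in A}(1+x_i)$, not monomials. The factor-of-two discrepancy you noticed in $2R/(1+R)$ is a symptom of using the wrong code.

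\textbf{The missing idea: a constant-rate $8$-divisible decreasing monomial code.} The paper uses the weighted Reed--Muller downset $\Delta=\mathcal{W}_{\boldsymbol{s},r}$ with $\boldsymbol{s}=(1,\ldots,1,h)$. By Lemma~\ref{lem:divis_weighted_RM}, divisibility becomes $3r<m-1+h$, $2r<m-1$ and $r<m-2$. So a heavy variable with $h\approx m/2$ lets the total weight be $r\approx m/2$ instead of $m/3$.

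One then punctures only inside the half-cube where the heavy variable is $0$, taking $\Gamma=\{A\subseteq[m-1]:|A|\le w\}$. This gives $\Gamma\subseteq\Delta$ and $k/2^m\to\Phi(-c_R)/2$. That is exactly where $\kappa_R=2R/(1+R)$ comes from, since $k/(2^m-k)\to\kappa_R/(2-\kappa_R)=R$.

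By Lemma~\ref{lem:weighted_params}, the distance is $\min\{\binom{r+1}{>w},\,2^{r-h+1}+\binom{r-h+1}{>w}\}$. The paper chooses
\begin{align*}
w&=\tfrac m2-\tfrac{c_R}{2}\sqrt{m-1},\\
r&=\tfrac m2-c_R\sqrt{m-1}\bigl(\tfrac12-\tfrac{2}{\log_2(m-1)}\bigr),\\
h&=\tfrac m2-3c_R\sqrt{m-1}\bigl(\tfrac12-\tfrac{2}{\log_2(m-1)}\bigr).
\end{align*}
Then $r-h=(c_R+o(1))\sqrt m$ and $r-w=2c_R\sqrt{m-1}/\log_2(m-1)$. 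Using $\binom{r+1}{r-w}\ge\bigl(\frac{r+1}{r-w}\bigr)^{r-w}$, both terms of the minimum are $2^{(c_R+o(1))\sqrt m}$.

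Within this construction the $\sqrt m$ scale is forced by the parameter constraints, not by a Gaussian tail estimate. Indeed, $3r<m-1+h$ gives $r-h<m-1-2r$. Combined with $w\le r$ and the rate-determined choice of $w$, this gives $r-h<c_R\sqrt{m-1}$.
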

Theorem~\ref{thm:constant_explicit} will be proved in Subsection~\ref{subsec:explicit_constant_rate} using the preceding construction and results. It is the first result to establish that the overhead exponent of magic state distillation~\cite{bravyi2012magic} $\gamma = \frac{\log(n/k)}{\log(d)} \to 0$ using a quantum code with a transversal $T$ gate (or even a transversal $T$ gate that allows Clifford corrections),\footnote{This includes results where the physical transversal $T$ gate induces a logical $T$ gate on some subset of the logical qubits~\cite{reddy2026realizing}.} improving upon the value of $\gamma \approx 0.678$ in Hastings and Haah's paper~\cite{hastings2018distillation}. Previously, $\gamma\to 0$ had only been achieved with codes supporting larger multi-qubit or qudit gates~\cite{krishna2018towards,wills2024,golowich2025asymptotically,nguyen2024good,nguyen2025quantum}.\footnote{Since it is possible to make those codes asymptotically good, for those larger gates it is possible to obtain $\gamma = 0$~\cite{wills2024,golowich2025asymptotically,nguyen2024good}.}
\begin{corollary}
    There exists an explicit family of quantum codes supporting a transversal $T$ gate whose magic state distillation exponent $\gamma = \frac{\log(n/k)}{\log(d)}$ tends to zero.
\end{corollary}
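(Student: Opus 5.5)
The corollary follows directly from Theorem~\ref{thm:constant_explicit}; all that remains is to compute $\gamma$ for the family it provides. I will fix any constant $R \in (0,1/3)$, say $R = 1/4$, and take the explicit family of quantum CSS codes with transversal $T$ gates and parameters $[[n,(R+o(1))n,\Omega(2^{(c_R+o(1))\sqrt{\log_2 n}})]]$. Since $R < 1/3$, we have $2R/(1+R) < 1/2$, so $c_R = -\Phi^{-1}(2R/(1+R)) > 0$.

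For the numerator, $k = (R+o(1))n$ gives $n/k = 1/(R+o(1))$, which tends to the constant $1/R$. Hence $\log(n/k) \to \log(1/R)$, which is bounded.

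For the denominator, the distance bound gives a constant $C>0$ with $d \geq C\, 2^{(c_R+o(1))\sqrt{\log_2 n}}$. Because $c_R>0$, for large $n$ the exponent is at least $(c_R/2)\sqrt{\log_2 n}$. Therefore
\begin{equation}
\log_2 d \geq \log_2 C + \tfrac{c_R}{2}\sqrt{\log_2 n} \to \infty.
\end{equation}
Both logarithms must be taken in the same base, but $\gamma$ does not depend on the base because it is a ratio of logarithms. Combining the two bounds,
\begin{equation}
\gamma = \frac{\log(n/k)}{\log d} = O\!\left(\frac{1}{\sqrt{\log_2 n}}\right) \to 0.
\end{equation}

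Explicitness and the transversal $T$ property both come from Theorem~\ref{thm:constant_explicit}, so the corollary follows. There is no real obstacle here. The only points needing care are that $c_R$ is strictly positive, so the distance genuinely grows, and that the rate stays bounded away from zero, so the numerator remains bounded.
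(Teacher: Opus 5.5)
Your proposal is correct and matches the paper's reasoning: the corollary follows directly from Theorem~\ref{thm:constant_explicit}. The constant rate keeps $\log(n/k)$ bounded, and since $c_R>0$ the distance $2^{(c_R+o(1))\sqrt{\log_2 n}}$ sends $\log d \to \infty$, so $\gamma = O(1/\sqrt{\log n}) \to 0$. You could add that $n/k \to 1/R > 1$, so $\gamma$ is eventually positive and the upper bound genuinely gives convergence to zero.
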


By relaxing the requirement for the family to be explicit, we find that we obtain further new parameters in the polynomial rate and distance regime. The new and prior boundary of achievable parameters is shown in Figure~\ref{fig:sub_constant_achievable}.
\begin{theorem}\label{thm:sub_constant_exist}
    Consider
    \begin{equation}\label{eq:exist_exponent_boundary}
        \beta_{\text{exist}}(\alpha) = \begin{cases}
            1/2 &\text{ if } 0 < \alpha \leq 1/2\\
            1/2 - \frac{\alpha-1/2}{6\left(H_2(1/3)-1/2\right)} &\text{ if } 1/2 \leq \alpha \leq H_2(1/3)\\
            1-2H_2^{-1}(\alpha) &\text{ if } H_2(1/3) \leq \alpha < 1
        \end{cases},
    \end{equation}
    where $H_2$ is the binary entropy function and $H_2^{-1}:(0,1)\to(0,1/2)$ is its inverse. For any constant $\alpha \in (0,1)$, there exists a family of quantum CSS codes with transversal $T$ gates with parameters 
    \begin{equation}
        [[n,\Omega(n^{\alpha+o(1)}), \Omega(n^{\beta_{\text{exist}}(\alpha)+ o(1)})]].
    \end{equation}
\end{theorem}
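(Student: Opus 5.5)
Theorem~\ref{thm:sub_constant_exist} should follow from the randomised construction of Section~\ref{sec:random_construction}, instantiated with Boolean Reed--Muller codes (a special case of divisible decreasing monomial codes) on $m$ variables, with $n \approx 2^m$. We start from a classical code $C = \RM(r,m)$ with $r$ slightly below $m/3$, so that $C$ is triply even in the sense needed for transversal $T$. We then puncture at a downward-closed set $S$, namely a Hamming ball $\{x : |x| \le w\}$ of radius $w = pm$, to create $k \approx |S| \approx 2^{mH_2(p)}$ logical qubits. However, we protect from the puncturing a small random subset $P \subseteq S$, which we keep as physical coordinates. The transversal-$T$ condition (divisibility of weights by $8$ on the appropriate coset, with logical $T$ and no Clifford correction) should be inherited directly from the framework, because protecting points only adds coordinates on which the relevant polynomials are evaluated. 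The remaining work is therefore in the distance and in optimising the parameters.

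For the distance, I would use Lemma~\ref{lem:monomial_puncture_distance}. It gives, in closed form, the distance of the decreasing monomial code punctured at a downward-closed set. There are two contributions to control.
\begin{enumerate}
\item \emph{$Z$-type distance.} This comes from the dual-type code, roughly $\RM(m-r-1,m)$ punctured at $S$. Its minimum weight is governed by low-degree monomials whose supports avoid $S$, giving a bound of order $2^{m-w}$-type terms.
\item \emph{$X$-type distance.} This is a contribution from words supported near $S$. In the purely punctured (explicit) construction this is the bottleneck, producing the $(1-2q^*)/3$ loss.
\end{enumerate}
The random set $P$ is meant to repair the second contribution. Low-weight logical operators correspond to low-degree polynomials that vanish on all unpunctured points but are nonzero on $S$. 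If each point of $S$ is kept independently with suitable probability, then a union bound over such polynomials (counted by $2^{\binom{m}{\le \deg}}$) against the probability that a fixed nonzero polynomial avoids $P$ on its support in $S$ should show that w.h.p.\ every such polynomial hits $P$ in many places. This pushes the distance up to nearly the full $2^{m(1-2p)}$ or $2^{m/2}$ scale.

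The three regimes of $\beta_{\text{exist}}$ then correspond to the following choices.
\begin{enumerate}
\item For $\alpha \le 1/2$, take $p$ small so that $k \le n^{1/2}$. The distance is capped at $n^{1/2}$ by the random-protection argument, since roughly $\sqrt n$ random points are needed.
\item For $\alpha \ge H_2(1/3)$, the Hamming-ball term dominates, giving $d \approx 2^{m(1-2p)}$ with $\alpha = H_2(p)$. This is $\beta = 1-2H_2^{-1}(\alpha)$.
\item In between, we take copies/concatenation-free interpolation. Concretely, we take polynomially many copies of the two endpoint codes (at $\alpha=1/2$ and $\alpha=H_2(1/3)$), or equivalently vary the protection density. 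This gives the linear segment, using that $\beta$ is concave-achievable by such mixtures, together with the slope computation $1/(6(H_2(1/3)-1/2))$.
\end{enumerate}

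The main obstacle is the probabilistic step. The union bound requires the number of candidate low-weight logical operators to be small compared to the failure probability. Doing this requires a structural description, from Lemma~\ref{lem:monomial_puncture_distance}, of which polynomials can have small weight after puncturing, namely those concentrated on $S$. With that description, the count and the probability can be balanced without destroying the rate; the exponents should be chosen so that the losses are absorbed in the $o(1)$ terms.
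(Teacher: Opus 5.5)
\textbf{There is a genuine gap.} Your construction cannot reach most of the claimed curve, however the protected set is chosen. With $C=\RM(m,r)$ and $3r<m$, the dual code contains the subcube indicator $\prod_{i\in A}(1+x_i)$ with $|A|=m-r-1$, which has weight $2^{r+1}\le 2^{m/3+O(1)}$. Keeping extra points can only lower its punctured weight. Choosing $\bar A$ to contain a punctured point, the argument at the end of the proof of Proposition~\ref{prop:downsets_to_quantum_code} shows it is a nontrivial $Z$-logical. So $\beta\le 1/3$ always. The same holds for every $8$-divisible decreasing monomial code, since such a $\Delta$ must omit some set of size $\lfloor (m-1)/3\rfloor+1$. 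Hence the whole range $\alpha<H_2(1/3)$, where $\beta_{\text{exist}}>1/3$, is out of reach, and your claim that random protection gives $d\approx n^{1/2}$ for $\alpha\le 1/2$ cannot hold. Likewise $\alpha>H_2(1/3)$ is out of reach with plain RM, since $k\le\dim\RM(m,r)=2^{mH_2(1/3)+o(m)}$ and $w=pm>r$ violates $\Gamma\subseteq\Delta$.

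The paper uses two ingredients you do not have. For $\alpha>H_2(1/3)$ it uses the weighted RM code with weights $(1,\ldots,1,h)$, $h\approx(3p-1)m$, $r\approx pm$. Its distance bottleneck $2^{r-h+1}=2^{m(1-2p)+o(m)}$ comes from $Z$-logicals on sets $L\ni m$. There is no separate ``$X$-type'' contribution, since $d=d_Z$ by Proposition~\ref{prop:quantum_code_logical_properties}. For $\beta>1/3$ it imports Haah's divisible tower codes $[[n,\Omega(n^{1/2}),\Omega(n^{1/2})]]$, converted to strict transversality by encoding qubit $i$ into a length-$a_i$ repetition code. It then \emph{concatenates} them with codes at $(H_2(1/3),1/3)$: with $n_1=N^\theta$ and $n_2=N^{1-\theta}$ the distances multiply, so exponent pairs combine convexly. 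Copies cannot do this. A direct sum has distance $\min(d_1,d_2)$, and copies of a single code only slide its exponent pair towards $(1,0)$.

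Second, the protection step fails as stated. Keeping an independent random subset $P$ of the ball destroys the downset property of the puncture set, so Lemma~\ref{lem:monomial_puncture_distance} no longer applies. Its lower bound uses that $V\notin\Gamma$ forces the whole block $\mathcal S_V$ of supersets of $V$ outside $\Gamma$. The lemma also gives no classification of low-weight words. A union bound over the $2^{\binom{m}{\le m-r-1}}=2^{\Theta(2^m)}$ polynomials defining $Z$-logicals cannot be paid for by failure probabilities of order $e^{-\Theta(2^{r})}$ for words of weight near $2^{r+1}$.

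The paper's key idea is to randomise a hypergraph rather than the points. It takes a random $t$-uniform hypergraph $\mathcal H$ on $[m-1]$, with $t\approx\sqrt m$, such that every $(r+1)$-subset contains a hyperedge. The union bound in Lemma~\ref{lem:hypergraph_existence} then runs only over the $\binom{m-1}{r+1}$ such subsets. It punctures $\Gamma_{w,\mathcal H}$, the sets of size at most $w$ containing no hyperedge. This is still a downset, so $d=\min_{L\notin\Delta}|2^L\setminus\Gamma|$ holds. Each $L\subseteq[m-1]$ with $|L|\ge r+1$ contains a protected subcube $[E,L]$ of size $2^{|L|-t}$, and sets $L\ni m$ contribute $2^{|L|-1}$. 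Lemma~\ref{lem:puncture_set_scaling} shows the protection costs only a $(1-o(1))$ factor in $k$. This yields $(H_2(p),1/3)$ for all $p\le 1/3$, in particular the endpoint $(H_2(1/3),1/3)$ of the concatenation segment.
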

\begin{remark}\label{rmk:exist_old_sub_constant}
    Before this work, the boundary of the achievable exponents for not necessarily explicit codes was given by
    \begin{equation}
        \beta_{\text{exist, old}}(\alpha) = \begin{cases}
            \frac{1}{2} &\text{ if } 0<\alpha \leq \frac{1}{2}\\
            \frac{1}{2} + s^\dagger\left(\alpha-\frac{1}{2}\right) &\text{ if }\frac{1}{2}<\alpha \leq \alpha^\dagger\\
            \frac{1}{3}H_2(3H_2^{-1}(\alpha)) &\text{ if }\alpha^\dagger < \alpha \leq \alpha^*\\
            \frac{1-\alpha}{\gamma^*} &\text{ if } \alpha^* \leq \alpha < 1,
        \end{cases}
    \end{equation}
    instead of Equation~\eqref{eq:exist_exponent_boundary}, where $\alpha^\dagger = H_2(p^\dagger) \approx 0.785$, $p^\dagger \approx 0.234$ is the unique solution in $(1/6,1/3)$ to
    \begin{equation}
        \frac{\frac{1}{3}H_2(3p)-\frac{1}{2}}{H_2(p)-\frac{1}{2}}
        =
        \frac{H_2'(3p)}{H_2'(p)},
    \end{equation}
    and
    \begin{equation}
        s^\dagger
        =
        \frac{\frac{1}{3}H_2(3p^\dagger)-\frac{1}{2}}{H_2(p^\dagger)-\frac{1}{2}}
        \approx -0.727.
    \end{equation}
    Moreover, $\alpha^* = H_2(p^*) \approx 0.842$, $p^* = \argmin_{p \in [1/6,1/3)}\frac{3(1-H_2(p))}{H_2(3p)} \approx 0.271$ and $\gamma^* = \min_{p \in [1/6,1/3)}\frac{3(1-H_2(p))}{H_2(3p)} \approx 0.678$. This function is obtained by considering Hastings and Haah's codes~\cite{hastings2018distillation}, Haah's divisible tower codes~\cite{haah2018towers}, concatenating the two constructions, and taking polynomially many copies thereof.
\end{remark}
\begin{remark}
    We found that our randomised construction (see Section~\ref{sec:random_construction}) did not improve on the parameters of our explicit construction in the constant-rate regime (see Theorem~\ref{thm:constant_explicit}).
\end{remark}

\subsection{Overview of Techniques}
We will now overview the techniques we use to construct quantum codes with transversal $T$ gates. 

\subsubsection{\texorpdfstring{$T$}{} Gate Transversality from Classical Divisibility}\label{subsubsec:T_gate_classical_divis}

We will begin with an overview of the so-called ``$X$-generator matrix'' formalism in which we work, which is standard in the literature~\cite{bravyi2012magic,hastings2018distillation}, and show how it may be instantiated with classical divisible codes to obtain quantum codes with transversal $T$ gates.

One considers a binary matrix $G \in \mathbb{F}_2^{s \times n}$ with $s \leq n$, and some integer $k \in \{1, \ldots, s-1\}$; usually one would make the restriction that $G$ is full-rank. The first $k$ rows of $G$ are called the \textit{logical rows} of $G$, and the latter $s-k$ rows of $G$ are called the \textit{stabiliser rows} of $G$. We use $G_1$ and $G_0$ to denote the sub-matrices formed by the logical rows and stabiliser rows, respectively, and $\mathcal{G}_1, \mathcal{G}_0, \mathcal{G}$ to denote the binary vector spaces spanned by the logical rows, stabiliser rows, and all rows of $G$, respectively. From $G$, we define a quantum CSS code~\cite{calderbank1996good,steane1996multiple} $\mathcal{Q}[G]$ of length $n$, whose $X$ stabilisers are given by $\mathcal{G}_0$, and whose $Z$ stabilisers are given by $\mathcal{G}^\perp$. By placing different requirements on $G$, one is able to obtain that $\mathcal{Q}[G]$ has different, desirable transversality properties. We denote the rows of $G$ by $(g^a)_{a=1}^s$, by $|v|$ the usual Hamming weight of a binary vector $v$, and by $v_1*v_2$ the overlap of two binary vectors $v_1$ and $v_2$, that is, their componentwise multiplication. We consider the following properties. For all $a,b,c \in [s]$,
\begin{align}
    |g^a| &\equiv \begin{cases}
        7 \pmod 8 &\text{ if } 1 \leq a \leq k\label{eq:G_property_first}\\
        0 \pmod 8 &\text{ if } k+1 \leq a \leq s
    \end{cases},\\
    |g^a * g^b| &\equiv 0 \pmod 4 \text{ for all } 1 \leq a < b \leq s,\\
    |g^a * g^b * g^c| &\equiv 0 \pmod 2 \text{ for all } 1 \leq a < b < c \leq s.\label{eq:G_property_last}
\end{align}
In words, we want the overlap of any three rows to have even Hamming weight, the overlap of any two rows to have Hamming weight divisible by $4$, and any row to have Hamming weight divisible by $8$, unless it is a logical row, in which case it must have Hamming weight equivalent to $7 \pmod 8$. Given these conditions, one can show~\cite{hastings2018distillation,bravyi2012magic} that $\mathcal{Q}[G]$ has $k$ logical qubits, and that (under a natural choice of logical basis) the physical transversal $T$ gate $T^{\otimes n}$ implements the logical transversal $T^\dagger$, that is, $\overline{T^{\dagger \otimes k}}$. As stated in Remark~\ref{rmk:T_gate_transversality}, we do not distinguish between the situation where $T^{\otimes n}$ implements $\overline{T^{\dagger \otimes k}}$ and $\overline{T^{\otimes k}}$; it is actually easier to consider the former in this exposition. Note that the construction of $\mathcal{Q}[G]$ from the matrix $G$, and the conditions being placed on $G$, turn out to be necessary and sufficient for a quantum CSS code to have a transversal $T$ gate~\cite{rengaswamy2020optimality,hu2022designing}.\footnote{Strictly speaking, this is the necessary and sufficient condition for a ``positively-signed CSS code''~\cite{rengaswamy2020optimality,hu2022designing}, but very similar conditions hold for other signs.}
\begin{remark}
    Weaker, but more easily attained, constraints can be placed on $G$, to obtain weaker properties on $\mathcal{Q}[G]$. For example, the triorthogonality condition~\cite{bravyi2012magic} is a strict weakening, namely, for all $a,b,c \in [s]$,
    \begin{equation}
        |g^a*g^b*g^c| \equiv \begin{cases}
            1 \pmod 2 &\text{ if } 1 \leq a = b = c \leq k\\
            0 \pmod 2 &\text{ otherwise}
        \end{cases}.
    \end{equation}
    This leads to $\mathcal{Q}[G]$ again having $k$ logical qubits, and having a transversal $T$ gate up to Clifford corrections, meaning that there is some (generally non-transversal) Clifford operator on $n$ qubits, $U$, for which $UT^{\otimes n}$ implements $\overline{T^{\otimes k}}$. Interestingly, while the triorthogonality condition is mathematically much weaker than the full requirements above for a transversal $T$ gate, the best asymptotic families of quantum codes satisfying the triorthogonality condition also satisfy the stronger condition for a transversal $T$ gate, both prior to this work~\cite{hastings2018distillation,haah2018towers}, and in this work. In particular, asymptotically good triorthogonal codes for qubits are not known.
    
    One may weaken the condition on $G$ further. Indeed, given two permutations $\pi_1, \pi_2$ on $n$ elements, suppose that for all $a,b,c \in [s]$,
    \begin{equation}
        |g^a*\pi_1(g^b)*\pi_2(g^c)| \equiv \begin{cases}
            1 \pmod 2 &\text{ if } 1 \leq a = b = c \leq k\\
            0 \pmod 2 &\text{ otherwise}
        \end{cases}.
    \end{equation}
    If $G$ satisfies this condition, then $\mathcal{Q}[G]$ supports a transversal $CCZ$ gate between three codeblocks. This condition is weaker still, and in this case that is reflected in the parameters known to be achievable by quantum codes;~\cite{golowich2025asymptotically,nguyen2024good} constructed asymptotically good quantum codes satisfying this condition.
    
    Before moving on, we comment that when $G$ is a matrix over a larger field $\mathbb{F}_q$, and the quantum code $\mathcal{Q}[G]$ is for Galois qudits~\cite{wills2026review}, there is a much richer array of conditions that $G$ can satisfy to achieve different transversality properties on $\mathcal{Q}[G]$~\cite{wills2024,Gong2026BinaryExtensionFields}.
\end{remark}
The most common way, and the way we pursue in this paper, to obtain matrices $G$ with the conditions in Equations~\eqref{eq:G_property_first} to~\eqref{eq:G_property_last}, is to consider a binary classical linear code $C$, for which $\dim C = s$, and write down a generator matrix $\hat{G} \in \mathbb{F}_2^{s \times N}$ for it.  We then choose a set of its coordinates $\Gamma \subseteq [N]$ of size $k = |\Gamma|$. We make such a choice that $C^\perp$ does not have any non-zero codewords supported entirely inside $\Gamma$. In other words, the columns of $\hat{G}$ corresponding to $\Gamma$ are linearly independent, and after row operations and column permutations, $\hat{G}$ may be rewritten as
\begin{equation}\label{eq:semi_systematic_form_overview}
    \hat{G} = \begin{pmatrix}
        I_k & G_1\\
        0 & G_0
    \end{pmatrix},
\end{equation}
where $I_k$ is the $k \times k$ identity matrix, and we obtain the matrix $G = \left(\begin{smallmatrix}G_1\\G_0\end{smallmatrix}\right)$, with $k$ logical rows, $s-k$ stabiliser rows, and $n \coloneq N-k$ columns.

Usually, we require that $C$ has some algebraic properties, which will translate into the desired properties of $G$, and the transversality properties of $\mathcal{Q}[G]$. For our case, suppose that $C$ is an $8$-divisible code, meaning that all of its codewords have Hamming weight divisible by $8$. Then, by the Ward identities~\cite{ward1990weight}, any overlap of two of its codewords have weight divisible by $4$, and any overlap of three of its codewords have weight divisible by $2$, and it follows that the resulting $G$ satisfies Equations~\eqref{eq:G_property_first} to~\eqref{eq:G_property_last}.

While we know that the quantum code has $k$ logical qubits, it in fact turns out that its distance is equal to its $Z$-distance; see~\cite{bravyi2012magic} or the preliminary material in Subsection~\ref{subsec:prelims_X_gen}. The distance of the quantum code $\mathcal{Q}[G]$ is thus
\begin{equation}
    d = d_Z = \min_{v \in \mathcal{G}_0^\perp \setminus \mathcal{G}^\perp}|v|.
\end{equation}
Often, we use the lower bound
\begin{equation}\label{eq:lower_bound_degen}
    d = d_Z \geq d\left(\mathcal{G}_0^\perp\right),
\end{equation}
where $d\left(\mathcal{G}_0^\perp\right)$ is the classical distance of the classical code $\mathcal{G}_0^\perp$. In this paper, we will take this lower bound, and it will turn out to be tight; see Proposition~\ref{prop:downsets_to_quantum_code} and its proof. Notice that, because $\mathcal{G}_0$ is nothing more than the code $C$ shortened at the coordinates $\Gamma$, the code $\mathcal{G}_0^\perp$ is nothing more than the code $C^\perp$ punctured at the coordinates $\Gamma$, that is, $\mathcal{G}_0^\perp = \text{punc}_\Gamma(C^\perp)$.

Since we require $C$ to be an $8$-divisible binary linear code, it is natural to look to binary Reed-Muller codes for $C$; for example, this is how one may construct the $[[15,1,3]]$ code~\cite{knill1996threshold,steane1999quantum,MSD}. The Reed-Muller code $\RM(m,r)$ is $8$-divisible if $3r < m$. It is instructive to consider why.

The codewords of the code $\RM(m,r)$ are in one-to-one correspondence with $m$-variate polynomials over $\mathbb{F}_2$ of degree at most $r$; each variable satisfies $x_i^2 = x_i$ in these polynomials, meaning that each term in each polynomial is some product of a distinct collection of $x_i$.\footnote{Formally, the codewords are in one-to-one correspondence with elements of the polynomial ring $\mathbb{F}_2[x_1, x_2, \ldots, x_m]/(x_1^2-x_1, x_2^2-x_2, \ldots, x_m^2-x_m)$ of degree at most $r$.} To pass from a polynomial to a codeword, the polynomial is evaluated over the entire Boolean hypercube $\mathbb{F}_2^m$ (in particular, the code has length $2^m$), but it is helpful to think of the codewords themselves directly as polynomials. The set of monomials of degree at most $r$, $\left\{x^A: A \subseteq [m]: |A| \leq r\right\}$ (where we use the shorthand $x^A \coloneq \prod_{i \in A}x_i$), forms a basis for $\RM(m,r)$ under this correspondence. It is clear that the Hamming weight of a codeword corresponding to $\prod_{i \in S}x_i$ is $2^{m-|S|}$, for every $S \subseteq [m]$.

Now, the Ward identities~\cite{ward1990weight} tell us that a (binary, linear) classical code $C$ is $8$-divisible if and only if it has a basis, all of whose elements have Hamming weight divisible by $8$, for which any pair have overlap with Hamming weight divisible by $4$, and for which any triple have Hamming weight divisible by $2$; we prove this in Lemma~\ref{lemma:ward_criteria} for completeness. If $f$ and $g$ are polynomials, and $\ev(f)$ and $\ev(g)$ are the corresponding codewords (formed by evaluating them), the crucial fact is that $\ev(f)*\ev(g) = \ev(fg)$, that is, the overlap of the two codewords is the evaluation of the product of the polynomials. As a concrete example, given $A, B \subseteq [m]$, we have $\ev(x^A)*\ev(x^B) = \ev(x^{A \cup B})$. Given the above facts, one can see that $\RM(m,r)$ is $8$-divisible for $3r < m$; in particular, given any $A,B,C \subseteq [m]$ for which $|A|, |B|, |C| \leq r$, we have $|A \cup B \cup C| \leq m-1$, and $|\ev(x^A)*\ev(x^B)*\ev(x^C)|$ is even.

Before the present work, the best-known quantum codes with transversal $T$ gates were given by both the work of Hastings and Haah~\cite{hastings2018distillation} and Haah~\cite{haah2018towers}, which were optimal in different parameter regimes. Our main inspiration in this paper comes from the work of Hastings and Haah~\cite{hastings2018distillation}. Therein, the authors consider the classical code $\RM(m,r)$, for $3r < m$, and consider applying the construction above. Now, the naive thing to do would be to take $C = \RM(m,r)$ for $r = \left\lfloor\frac{m-1}{3}\right\rfloor$, for which $C^\perp = \RM(m,m-r-1)$, and consider any set of its coordinates $\Gamma \subseteq [2^m]$ of size $k = |\Gamma| < 2^{r+1}$. Since $C^\perp$ has distance $2^{r+1}$, the coordinates $\Gamma$ cannot entirely contain a non-zero codeword of $C^\perp$, and so the generator matrix $\hat{G}$ for $\RM(m,r)$ may be written into the form of Equation~\eqref{eq:semi_systematic_form_overview}. One then forms the resulting $G$, and $\mathcal{Q}[G]$. We may then form a lower bound for the quantum code
\begin{equation}\label{eq:loose_distance_bound}
    d \geq d\left(\mathcal{G}_0^\perp\right) = d\left(\text{punc}_\Gamma\left(C^\perp\right)\right) \geq d\left(C^\perp\right) - k = 2^{r+1} - k.
\end{equation}
Of the two lower bounds in Equation~\eqref{eq:loose_distance_bound}, the first was mentioned above in Equation~\eqref{eq:lower_bound_degen}, and the second is a standard fact that puncturing a code in $k$ positions cannot drop its distance by more than $k$. For the construction in~\cite{hastings2018distillation}, the first bound is tight (and the corresponding bound in our work will be tight), but the second bound is not, at least when one makes a good choice of $\Gamma$.\footnote{It is interesting to note that, for the construction of asymptotically good quantum codes with non-Clifford transversal gates such as $CCZ$, and larger gates~\cite{wills2024,golowich2025asymptotically,nguyen2024good}, the actual choice of $\Gamma$ does not matter, beyond its size. In this work, we must ensure that $\Gamma$ has the right structure, not just size.} The breakthrough made in~\cite{hastings2018distillation} is based on making a careful choice for $\Gamma$, and accordingly getting a better value of $d$. Indeed, Hastings and Haah~\cite{hastings2018distillation} make the particular choice $\Gamma = \{A \subseteq [m]: |A| \leq w\}$ for some integer $0 \leq w < r$,\footnote{In words, they puncture the points on the Boolean hypercube whose Hamming weights are at most $w$.} and are able to show that the resulting quantum code has distance $\sum_{i=w+1}^{r+1}\left(\begin{smallmatrix}r+1\\i\end{smallmatrix}\right)$. This can strictly improve the general lower bound in Equation~\eqref{eq:loose_distance_bound}, which is $2^{r+1} - \sum_{i=0}^w\left(\begin{smallmatrix}m\\i\end{smallmatrix}\right)$ for this $\Gamma$.

\subsubsection{Decreasing Monomial Codes, Downset Punctures, and Protected Points}\label{subsubsec:discussion_decreasing_monomial_downset}

To improve upon the construction in Hastings and Haah, we first revisit the Reed-Muller code $\RM(m,r)$. While it is $8$-divisible for $3r<m$, we note that it has a relatively low dimension for an $8$-divisible code. Our simple starting intuition is that we would want to obtain an $8$-divisible code $C$ with as large a dimension as possible, because we cannot puncture more than $\dim(C)$ points ($k < s$ above), and since the number of logical qubits will equal the size of the punctured set $\Gamma$, we should start by obtaining as large an $8$-divisible code $C$ as possible. To do this, we turn to binary monomial codes: natural generalisations of Reed-Muller codes. We give an intuitive overview here; the formal overview is given in the preliminaries. With $2^{[m]}$ the set of subsets of $[m]$, a binary monomial code is specified by a set $\Delta \subseteq 2^{[m]}$. The code, denoted $C(\Delta)$, is the code spanned by (the evaluations of) the monomials $x^A$, over $A \in \Delta$; these monomials form a basis for the code. 

We will restrict ourselves to the well-behaved and widely-studied sub-class of \textit{decreasing} monomial codes, for which $\Delta$ is a downward-closed set, meaning that if $A \in \Delta$, and $B \subseteq A$, then $B \in \Delta$. For brevity, we refer to such a subset of $2^{[m]}$ as a \textit{downset}~\cite{srinivasan2019decoding}. As one example of decreasing monomial codes being nicely behaved, their dual codes are easy to characterise, and are also decreasing monomial codes; one can convince oneself that $C(\Delta)^\perp = C(\Delta^\perp)$, where
\begin{align}
    \Delta^\perp &= \{A \subseteq [m]: A \cup B \neq [m] \text{ for all } B \in \Delta\}\\
    &= \{A \subseteq [m]: \bar{A} \notin \Delta\},\label{eq:dual_downset_expos}
\end{align}
where $\bar{A} = [m]\setminus A$ in the latter expression. We call $\Delta^\perp$ the downset dual to $\Delta$.

Clearly, $\RM(m,r)$ is an example of a decreasing monomial code. However, it is possible to construct very large, in fact constant rate, decreasing monomial codes with the $8$-divisibility property. Indeed, let $\Delta = 2^{[m-3]}$, that is, $\Delta$ is formed of all subsets of $\{1, 2, \ldots, m-3\}$. By the same considerations that showed that $\RM(m,r)$ was an $8$-divisible code for $3r<m$, one can check that $C(\Delta)$ is $8$-divisible with this choice of $\Delta$, and moreover $C(\Delta)$ is a classical code of length $2^m$ and dimension $2^{m-3}$, and therefore has constant rate.

Such an extreme choice as $\Delta = 2^{[m-3]}$ would, however, lead to bad quantum code parameters. The reason is that the code dual to $C(\Delta)$ contains the codeword corresponding to $x_1x_2\ldots x_{m-1}$, which has weight two. Puncturing cannot increase this distance,\footnote{In this work, we do not consider situations where entire codewords (of the dual code) can be punctured away, that is, we only consider situations where puncturing is an injective operation; this means that puncturing cannot increase distance. We discuss this more below, and call the condition that no entire codeword is punctured (that puncturing is injective) the non-annihilation condition.} and so the quantum code has distance at most two, because the quantum code has distance equal to that of the punctured form of $C(\Delta)^\perp$; see Proposition~\ref{prop:downsets_to_quantum_code}. A different, less extreme choice than $\Delta = 2^{[m-3]}$, namely \textit{weighted Reed-Muller codes}~\cite{sorensen1992weighted,geil2013weighted}, will work out well, however. For these codes, one picks a vector of weights $\boldsymbol{s}$, a vector of positive integers of length $m$, and given an integer ``total weight'' $r$, one forms a decreasing monomial code from $\Delta = \{A \subseteq [m]: \sum_{i \in A}s_i \leq r\}$. We will return to this below, where we make the simple choice $\boldsymbol{s} = (\underbrace{1, 1, \ldots, 1}_{m-1}, h)$, for some positive integer $h$.

Motivated by the desire to use decreasing monomial codes for this problem, we need to understand their distance, and their minimum weight codewords. This is well understood (see Theorem 3.9 of~\cite{camps2020polar}), and the distance can be seen as a special case of the much more general footprint bound~\cite{hoholdt1998or,geil2000footprints}). Indeed, the distance of $C(\Delta)$ is simply
\begin{equation}
    d(C(\Delta)) = \min_{B \in \Delta}2^{m-|B|}:
\end{equation}
a direct generalisation of the distance of Reed-Muller codes. Moreover, one can write down examples of minimum-weight codewords. For example, letting $A = \text{argmax}_{B \in \Delta}|B|$ (picking arbitrarily if there are multiple $B$ giving the maximum), the polynomial
\begin{equation}
    \prod_{i \in A}(1+x_i) = \sum_{B \subseteq A}x^B
\end{equation}
yields a codeword of $C(\Delta)$ of weight $2^{m-|A|}$.

At this point, the key observation we will make will be that the distance of punctured decreasing monomial codes is particularly well-behaved, \textit{when the puncture set $\Gamma$ is itself a downset}. In other words, we want to specify two downsets $\Delta, \Gamma \subseteq 2^{[m]}$, and consider $\text{punc}_\Gamma\left(C(\Delta)\right)$. We show a closed-form expression for the distance of this code in Lemma~\ref{lem:monomial_puncture_distance}, under a natural (also closed form) restriction that says that no codewords of $C(\Delta)$ are completely contained inside the set of punctured points $\Gamma$, so that puncturing is an injective code operation; let us refer to this as the non-annihilation condition. Under the non-annihilation condition, we will prove that
\begin{equation}
    d\left(\text{punc}_\Gamma\left(C(\Delta)\right)\right) = \min_{A \in \Delta}\left|\{V \subseteq \bar{A}: V \in \bar{\Gamma}\}\right|.\footnote{The non-annihilation condition in fact exactly says that this right-hand side is positive, which is sensible since we are claiming it as the distance of a code.}
\end{equation}
We can provide intuition here by showing that the distance is at most the right-hand side. Indeed, suppose that $A\in \Delta$ is a choice saturating the minimum on the right-hand side. We consider the polynomial as above, $\prod_{i \in A}(1+x_i)$. Before puncturing, its support is exactly $2^{\bar{A}}$, that is, the set of subsets of $\bar{A}$. After puncturing the points $\Gamma$, its remaining support is $\{V \subseteq \bar{A}: V \in \bar{\Gamma}\}$.

With this exposition, we may finally present the explicit construction of quantum codes with transversal $T$ gates developed in Section~\ref{sec:explicit_construction}. We let $\Delta$ be a downset corresponding to a weighted Reed-Muller code with weights $\boldsymbol{s} = (\underbrace{1, 1, \ldots, 1}_{m-1}, h)$ and some total weight $r$. By taking $h$ larger, we may construct increasingly high-rate decreasing monomial codes with the $8$-divisibility property.\footnote{Note that we in fact found that taking $h$ larger than $1$ only helped for relatively high quantum code rates; for relatively low rates, we do not improve on the parameters of~\cite{hastings2018distillation,haah2018towers}; see Figure~\ref{fig:sub_constant_achievable}.} The distance of our quantum code is at least\footnote{This will in fact turn out to be an equality in our constructions, in other words, the lower bound of Equation~\eqref{eq:lower_bound_degen} will always be tight for us.}
\begin{align}
    d\left(\text{punc}_\Gamma(C(\Delta)^\perp)\right) &= \min_{A \in \Delta^\perp}\left|\{V \subseteq \bar{A}: V \in \bar{\Gamma}\}\right|\\
    &=\min_{A \notin \Delta}\left|\{V \subseteq A: V \in \bar{\Gamma}\}\right|,\label{eq:quantum_distance_before_rewrite}
\end{align}
where going into the second line we use the expression for the dual downset in Equation~\eqref{eq:dual_downset_expos}. For this expression to be valid, we need the non-annihilation condition on $C(\Delta^\perp)$, but this turns out to be easily satisfied by enforcing $\Gamma \subseteq \Delta$. We will go on to show that with the simple choice of
\begin{equation}\label{eq:starting_gamma_form}
    \Gamma = \{A \subseteq [m-1]: |A| \leq w\},
\end{equation}
for some $0 \leq w < r$, which ensures $\Gamma \subseteq \Delta$, we can obtain improved explicit quantum codes with transversal $T$ gates, including those with constant rate and growing distance.

In Section~\ref{sec:random_construction}, we will improve on the parameters in our explicit construction (at least in the regime of polynomial rate and distance), by making the same choice for $\Delta$, but a slightly more fine-grained choice for $\Gamma$. We will still start from a $\Gamma$ of the form of Equation~\eqref{eq:starting_gamma_form}, for some $w$, but we will ``protect'' some points from this puncture set. The intuition for this is as follows. One may rewrite the distance of the quantum code from Equation~\eqref{eq:quantum_distance_before_rewrite} succintly as
\begin{equation}\label{eq:succinct_distance_expression}
    d = \min_{A \notin \Delta}\left|2^A \setminus \Gamma\right|,
\end{equation}
where $2^A$ is the set of subsets of $A \subseteq [m]$. We consider a hypergraph $\mathcal{H}$ on the first $m-1$ variables, where each hyperedge has size $t$, for some integer $t \leq w$. For some integer $y > t$, we will say that $\mathcal{H}$ \textit{protects the $y$-sets} if, for every subset $L$ of $[m-1]$ of size $y$, $L \supseteq E$ for some hyperedge $E$ of $\mathcal{H}$. The more fine-grained choice of $\Gamma$ we make is
\begin{equation}
    \Gamma = \{A \subseteq [m-1]: |A| \leq w \text{ and } E \not\subseteq A \text{ for all hyperedges } E \text{ of } \mathcal{H}\}.
\end{equation}
In other words, $\mathcal{H}$ ``protects'' certain points from being punctured by $\Gamma$. Note that $\Gamma$ is still a downset.

If $A$ is a set not in $\Delta$ of size $|A| \geq y$ with $A \subseteq [m-1]$, then $A$ is protected from the puncturing, since there is some hyperedge $E$ of $\mathcal{H}$ (of size $t$) for which $E \subseteq A$. Every set contained in $A$ that contains $E$ is protected from the puncture, of which there are $2^{|A|-|E|}$, and we establish that $\left|2^A\setminus \Gamma\right| \geq 2^{|A|-|E|}$. On the other hand, if $A$ is a minimal set not in $\Delta$ for which $A \not\subseteq [m-1]$, i.e., $m \in A$, then $A$ has at least $2^{|A|-1}$ subsets not in $\Gamma$, namely, those containing $m$, and we have $\left|2^A \setminus \Gamma\right| \geq 2^{|A|-1}$. Bounding the distance of the quantum code then becomes a case of arguing the lower bound on $\left|2^{A}\setminus \Gamma\right|$ over every set $A \notin \Delta$; see Section~\ref{sec:random_construction}.

For our choice of the hypergraph $\mathcal{H}$, we make a simple existence argument based on choosing hyperedges randomly. We find that doing so allows us to choose a relatively small number of hyperedges to protect the $y$-sets for a relatively large $y$. Because the number of hyperedges is relatively small, we retain a relatively large puncture set, and therefore a high-rate quantum code, but now with an improved distance in certain cases.

\subsection{Relation to Prior Works}

The first quantum code to support a transversal $T$ gate was the $[[15,1,3]]$ quantum (punctured) Reed-Muller code~\cite{knill1996threshold,steane1999quantum,MSD}. It was later realised that 3-dimensional topological codes could support non-Clifford transversal gates like the $T$ gate~\cite{bombin2007topological,kubica2015universal}. In 2012, Bravyi and Haah developed the triorthogonality framework~\cite{bravyi2012magic}. This allows one to construct codes for which the physical transversal $T$ gate, followed by some physical Clifford correction, executes the logical transversal $T$ gate: a weaker notion than that in this paper. Necessary and sufficient conditions for several related notions of T-gate transversality, including codespace preservation and specified logical actions, were subsequently developed~\cite{rengaswamy2020optimality,hu2022designing}. For example, a CSS-T code~\cite{rengaswamy2020optimality,camps2024toward,camps2024algebraic,camps2024binary,bodur2026schur} is a quantum CSS code for which the $T$ gate on every physical qubit only preserves the codespace, where there is no requirement on the corresponding logical action; in general the logical action may be the identity. Indeed,~\cite{Berardini2025AsymptoticallyGoodCSST} constructed the first asymptotically good CSS-T codes, and the logical action is the identity. \cite{Reddy2026AsymptoticallyGoodCSS} then constructed an asymptotically good family for which the physical transversal $T$ gate induces the Clifford operator $S^\dagger$ on every logical qubit. Further recent works~\cite{cao2026quantum,reddy2026realizing} have used novel methods to construct new families of codes for which transversal physical $T$ realises novel logical non-Clifford operations, for example $T$ and $T^\dagger$ on specified sets of logical qubits.

Bravyi and Haah's paper~\cite{bravyi2012magic} was also part of a line of work reducing the magic state distillation exponent $\gamma$~\cite{MSD,Meier2013MagicState,Jones2013Multilevel,Haah2017MagicState,hastings2018distillation,krishna2018towards,wills2024,golowich2025asymptotically,nguyen2024good,nguyen2025quantum}, eventually achieving $\gamma = 0$ with codes supporting transversal multi-qubit or qudit gates. Of these, the codes in~\cite{hastings2018distillation} support a transversal $T$ gate in the sense of this paper, up to the equivalence mentioned in Remark~\ref{rmk:T_gate_transversality}. Another key work from a similar time was that of Haah~\cite{haah2018towers}, and together these two papers constitute the best asymptotic families of quantum codes with transversal $T$ gates prior to this work. A work that derives different results, but is closely related to ours in some of its techniques, is~\cite{krishna2018magic}, which studies magic state distillation protocols by using punctured polar codes, that is, punctured decreasing monomial codes; we also use punctured decreasing monomial codes for this purpose in this paper. Many works are also studying non-Clifford transversal gates on quantum LDPC codes~\cite{golowich2025quantum,scruby2024quantum,zhu2025transversal,golowich2025near,zhu2025non,zhu2026non,kobayashi2026clifford,breuckmann2026cups,li2025poincar,li2026transversal}. Here, the primary focus is on transversal $CCZ$ gates because the theory is more mature than that for $T$ gates, and in particular the logical action in all of these constructions is some product of $CCZ$ gates, as far as we know.\footnote{For most of these constructions, the physical gate is transversal $CCZ$, and for all as far as we know the logical action is some product of $CCZ$ gates. Nevertheless, if the physical gate is transversal $CCZ$, one can concatenate with the $[[8,3,2]]$ quantum code, for which physical transversal $T$ implements logical $CCZ$, to obtain an LDPC code for which physical transversal $T$ implements the same logical product of $CCZ$ gates.}

Finally, further recent works have studied quantum codes with transversal non-Clifford gates in the finite-length regime. Notably,~\cite{JainAlbert2025Transversal} studies quantum codes supporting transversal $T$ gates in this setting, with some asymptotic families. Other works have studied and classified quantum codes with other notions of non-Clifford gate transversality, some allowing Clifford corrections, in the finite-length regime~\cite{nezami2022classification,cervia2025magic,Jacinto2026CompactMagicState,Singh2026BorrowedIdentities,Gong2026BinaryExtensionFields}, with the primary motivation being to construct or classify small magic state distillation protocols.

\section{Preliminaries}

\subsection{Classical Decreasing Monomial Codes}\label{subsec:params_and_duals}

Throughout, we consider the Boolean hypercube in $m$ variables, that is, $\mathbb{F}_2^m$. We identify points on the Boolean hypercube with subsets of $[m] \coloneq \{1, 2, \ldots, m\}$, that is, $\mathbb{F}_2^m \equiv 2^{[m]}$, in the natural way. Accordingly, we will think of a subset $\Delta \subseteq 2^{[m]}$ as both a set of subsets of $[m]$, and as a set of points in the Boolean hypercube. We work with the polynomial ring $\mathbb{F}_2[x_1, x_2, \ldots, x_m]/(x_1^2-x_1, x_2^2-x_2, \ldots, x_m^2-x_m)$, whose elements are the polynomials in $m$ variables, with coefficients in $\mathbb{F}_2$, for which each variable satisfies $x_i^2 = x_i$. The set of polynomials in this ring is then in one-to-one correspondence with the set of functions $f : \mathbb{F}_2^m \to \mathbb{F}_2$. In turn, these are in one-to-one correspondence with the subsets of $\mathbb{F}_2^m$, where a polynomial $f: \mathbb{F}_2^m \to \mathbb{F}_2$ is uniquely specified by its support $\supp(f)$. Given a point $A \in \mathbb{F}_2^m$, we write $x^A$ for the monomial corresponding to $A$, which is an element of the above polynomial ring.

We write $\ev: \mathbb{F}_2[x_1, x_2, \ldots, x_m]/(x_1^2-x_1, x_2^2-x_2, \ldots, x_m^2-x_m) \to \mathbb{F}_2^{2^m}$ for the evaluation map, which evaluates a function $f$ at all points in the hypercube $\mathbb{F}_2^m$. More formally, we identify the $2^m$ coordinates in $\mathbb{F}_2^{2^m}$ with bit strings $x \in \mathbb{F}_2^m$ (in some fixed ordering). Then, $\ev(f)$ is a bit string of length $2^m$, where for each $x \in \mathbb{F}_2^m$, the $x$'th entry of $\ev(f)$ is $f(x)$. Given two bit strings $x$ and $y$ of the same length, $x*y$ denotes the componentwise multiplication of $x$ and $y$, otherwise known as their overlap. We will make use of the fact that $\ev(f)*\ev(g) = \ev(fg)$.

\begin{definition}[Downset]
    A set of points in the hypercube $\Delta \subseteq 2^{[m]} \equiv \mathbb{F}_2^m$ is called downward-closed if, given $A \in \Delta$, and given some $B \in \mathbb{F}_2^m$ for which $B \subseteq A$, we have $B \in \Delta$. For the sake of brevity, we also call such a $\Delta$ a \textit{downset}, using the language of~\cite{srinivasan2019decoding}.
\end{definition}

\begin{definition}[(Decreasing) Monomial Codes]\label{def:decreasing_monomial_code}
    Given a non-empty set of points $\Delta \subseteq 2^{[m]}$, the binary affine monomial code corresponding to $\Delta$ is
    \begin{equation}
        C(\Delta) \coloneq \text{span}_{\mathbb{F}_2}\{\ev(x^A):A \in \Delta\}.
    \end{equation}
    If $\Delta$ is empty, then $C(\Delta) = \{0\}$. Since we only study binary affine monomial codes, we will sometimes simply call $C(\Delta)$ the monomial code corresponding to $\Delta$ for brevity. If $\Delta$ is a downset, the corresponding code is called a decreasing (binary affine) monomial code.
\end{definition}
It is easy to characterise the weight of a codeword corresponding to a monomial.
\begin{proposition}\label{prop:monomial_codeword_weight}
    Given $A \in 2^{[m]}$, the weight of the codeword $\ev(x^A)$ is
    \begin{equation}
        2^{m-|A|}.
    \end{equation}
    In particular, $\ev(x^A)$ has even Hamming weight if and only if $A \neq [m]$.
\end{proposition}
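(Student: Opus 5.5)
The plan is to compute the support of $x^A$ directly. Under the identification $\mathbb{F}_2^m \equiv 2^{[m]}$, a point $V \subseteq [m]$ corresponds to the bit string with $x_i = 1$ exactly when $i \in V$. The monomial $x^A = \prod_{i \in A} x_i$ evaluates at $V$ to the product $\prod_{i\in A}\mathbf{1}[i \in V]$. This product equals $1$ precisely when every factor is $1$, that is, when $A \subseteq V$. Hence
\begin{equation}
    \supp\left(\ev(x^A)\right) = \{V \subseteq [m] : A \subseteq V\}.
\end{equation}
For $A = \emptyset$ the empty product is $1$, so this formula still holds.

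Next I count the sets in this support. A set $V$ with $A \subseteq V \subseteq [m]$ is determined uniquely by $V \setminus A$, which can be any subset of $[m]\setminus A$. The map $V \mapsto V\setminus A$ is therefore a bijection onto $2^{[m]\setminus A}$. This gives Hamming weight $|2^{[m]\setminus A}| = 2^{m-|A|}$.

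For the parity statement, $2^{m-|A|}$ is even exactly when $m - |A| \geq 1$, i.e. when $|A| < m$. Since $A \subseteq [m]$, the condition $|A| < m$ is equivalent to $A \neq [m]$. In the remaining case $A = [m]$, the weight is $2^0 = 1$, which is odd.

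No step is a real obstacle here. The only care needed is to fix the identification of points with subsets consistently, so that evaluating a monomial means testing containment, and to note that the case $A=\emptyset$ is covered.
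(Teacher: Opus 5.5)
Your proof is correct and follows the paper's argument: the support of $x^A$ is exactly the set of points where every variable in $A$ equals $1$ and the remaining $m-|A|$ variables are free, giving weight $2^{m-|A|}$. You also spell out the parity consequence and the case $A=\emptyset$, which the paper leaves implicit.
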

\begin{proof}
    The function $x^A$ is supported at a point in $\mathbb{F}_2^m$ if and only if all variables corresponding to $A$ take the value $1$; the remaining $m-|A|$ variables are arbitrary.
\end{proof}
\begin{lemma}[Parameters of Binary Monomial Codes]
    Given $\emptyset \neq \Delta \subseteq \mathbb{F}_2^m$, $C(\Delta)$ is a linear code of length $2^m$ with dimension $|\Delta|$. Its minimum distance is
    \begin{equation}
        d(C(\Delta)) = \min_{A \in \Delta} 2^{m-|A|}.
    \end{equation}
\end{lemma}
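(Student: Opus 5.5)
Length $2^m$ is immediate from the definition of $\ev$. For the dimension, I will show that the monomials $\{x^A : A \in 2^{[m]}\}$ are linearly independent as functions $\mathbb{F}_2^m \to \mathbb{F}_2$, so their evaluations are linearly independent vectors in $\mathbb{F}_2^{2^m}$. The cleanest route is triangularity. Order the points of the hypercube by inclusion (any linear extension) and note that $\ev(x^A)$ is supported exactly on $\{V : V \supseteq A\}$. Take a nontrivial combination $\sum_{A\in S}\ev(x^A)$ and an inclusion-minimal $A_0\in S$. Evaluated at the point $A_0$, only $x^{A_0}$ contributes, so the sum is $1\neq 0$ there. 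Hence $\dim C(\Delta)=|\Delta|$. (Alternatively, this follows from the one-to-one correspondence between the reduced polynomial ring and functions stated in the preliminaries, which has dimension $2^m$ on both sides.)

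For the distance, the upper bound is direct. For any $A\in\Delta$, the codeword $\ev(x^A)$ has weight $2^{m-|A|}$ by Proposition~\ref{prop:monomial_codeword_weight}, so $d(C(\Delta))\le \min_{A\in\Delta}2^{m-|A|}$.

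The lower bound is the main step. I will show that any nonzero $f=\sum_{A\in S}x^A$ with $\emptyset\ne S\subseteq\Delta$ has weight at least $2^{m-|A^*|}$ for some $A^*\in S$; this suffices since $2^{m-|A^*|}\ge\min_{A\in\Delta}2^{m-|A|}$. Pick $A^*\in S$ of maximal size. For each assignment $b\in\mathbb{F}_2^{\bar{A^*}}$ to the variables outside $A^*$, restrict $f$ to the subcube obtained by fixing those variables to $b$. The restriction is a polynomial in the variables $x_i$, $i\in A^*$. Every monomial $x^A$ with $A\ne A^*$, $A\in S$, satisfies $A\not\supseteq A^*$ by maximality of $|A^*|$. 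It therefore restricts to either $0$ or a monomial in a proper subset of $A^*$. Consequently the coefficient of the top monomial $x^{A^*}$ in the restriction is exactly $1$, so the restriction is a nonzero polynomial (again by uniqueness of the reduced representation) and is nonzero at some point of the subcube. There are $2^{m-|A^*|}$ disjoint such subcubes, which gives the bound.

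The only delicate point is checking that the restriction keeps $x^{A^*}$ with coefficient one. Substituting constants for variables outside $A^*$ into a monomial $x^A$ yields $x^{A\cap A^*}$ or $0$, and $A\cap A^*=A^*$ would force $A\supseteq A^*$. I expect this step to be routine once maximal cardinality, rather than merely inclusion-maximality, is chosen. Notably, the downset property of $\Delta$ is not needed; the lemma holds for arbitrary $\Delta$, consistent with its statement.
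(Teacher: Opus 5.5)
Your proof is correct, and your length, dimension and upper-bound arguments match the paper's. Your parenthetical alternative for the dimension is exactly the paper's argument, and your triangularity argument is a direct proof that $\ev$ is injective.

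The genuine difference is the lower bound on the distance. The paper does not argue it directly: it sets $r=\max_{A\in\Delta}|A|$, notes $C(\Delta)\subseteq\RM(m,r)$, and quotes the known minimum distance $2^{m-r}$ of Reed--Muller codes. You instead prove the bound from scratch. You split $\mathbb{F}_2^m$ into the $2^{m-|A^*|}$ subcubes obtained by fixing the coordinates outside $A^*$, and observe that the restriction of $f$ to each has coefficient $1$ on $x^{A^*}$, so it is a nonzero function there. This is in effect a self-contained proof of the Reed--Muller distance.

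It is also essentially the argument the paper itself uses later, in the lower bound of Lemma~\ref{lem:monomial_puncture_distance}. Your subcubes are the sets $\mathcal{S}_V$ there, and the paper's identity $\sum_{y\in\mathcal{S}_V}f(y)=c_A$ is your top-coefficient observation phrased as a parity count. The paper's route here is shorter but leans on an external fact. Yours is self-contained and is the version that generalises: when $\Gamma$ is a downset, every subcube indexed by $V\notin\Gamma$ avoids $\Gamma$ entirely, so the same count gives the punctured bound.

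One small correction to your closing remark: inclusion-maximality of $A^*$ in $S$ works just as well as maximal cardinality. With that choice, any $A\in S$ with $A\supseteq A^*$ is forced to equal $A^*$. This is the choice the paper makes in Lemma~\ref{lem:monomial_puncture_distance}.
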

\begin{proof}
    The length of $C(\Delta)$ is immediate from its definition. Its dimension is also immediate since monomials are linearly independent functions, and the evaluation map is a linear bijection. For the distance, let $r \coloneq \max_{A \in \Delta}|A|$. Then, $C(\Delta)$ is contained in $\RM(m,r)$, which has minimum distance $2^{m-r}$, and so the claimed distance is indeed a lower bound. The upper bound follows from Proposition~\ref{prop:monomial_codeword_weight}.
\end{proof}
\begin{remark}
    It turns out that the parameters of binary monomial codes are much easier to prove than general $q$-ary codes, for which one must use the footprint bound from Gr\"obner basis theory~\cite{hoholdt1998or,geil2000footprints}.
\end{remark}
Next, we study the dual code to decreasing monomial codes. First, we make a definition.
\begin{definition}[Dual Downset]\label{def:dual_downset}
    Given a downset $\Delta \subseteq \mathbb{F}_2^m$, its dual downset is
    \begin{equation}
        \Delta^\perp \coloneq \{A \subseteq [m]: A \cup B \neq [m] \text{ for all } B \in \Delta\}.
    \end{equation}
\end{definition}
\begin{remark}\label{rmk:equivalent_dual_downset_characterisation}
    An equivalent, and sometimes convenient characterisation of the dual downset is
    \begin{equation}
        \Delta^\perp = \{A \subseteq [m]: \bar{A} \notin \Delta\}.
    \end{equation}
\end{remark}
\begin{lemma}[Dual of Decreasing Monomial Codes]\label{lem:monomial_code_dual}
    Given a downset $\Delta$, its dual $\Delta^\perp$ is a downset, and moreover the dual of the corresponding decreasing monomial code is the decreasing monomial code
    \begin{equation}
        C(\Delta)^\perp = C(\Delta^\perp).
    \end{equation}
\end{lemma}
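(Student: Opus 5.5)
The plan is to prove the three claims in order: $\Delta^\perp$ is a downset, $C(\Delta^\perp)\subseteq C(\Delta)^\perp$, and the two codes have equal dimension.

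\emph{$\Delta^\perp$ is a downset.} This is immediate from Definition~\ref{def:dual_downset}. If $A\in\Delta^\perp$ and $A'\subseteq A$, then for every $B\in\Delta$ we have $A'\cup B\subseteq A\cup B\neq[m]$. Hence $A'\cup B\neq[m]$, and so $A'\in\Delta^\perp$. The equivalence with Remark~\ref{rmk:equivalent_dual_downset_characterisation} is also quick, and I will use it for counting. Suppose $\bar A\in\Delta$. Then taking $B=\bar A$ gives $A\cup B=[m]$. Conversely, suppose $A\cup B=[m]$ for some $B\in\Delta$. Then $\bar A\subseteq B$, so $\bar A\in\Delta$ because $\Delta$ is a downset.

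\emph{Orthogonality.} It suffices to check the spanning monomials. Take $A\in\Delta^\perp$ and $B\in\Delta$. The inner product of $\ev(x^A)$ and $\ev(x^B)$ is the parity of $|\ev(x^A)*\ev(x^B)|=|\ev(x^{A\cup B})|$. By Proposition~\ref{prop:monomial_codeword_weight} this weight is $2^{m-|A\cup B|}$. It is even because $A\cup B\neq[m]$. Bilinearity then gives $C(\Delta^\perp)\subseteq C(\Delta)^\perp$.

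\emph{Dimension count.} By the parameters lemma, $\dim C(\Delta)=|\Delta|$ and $\dim C(\Delta^\perp)=|\Delta^\perp|$, and we have $\dim C(\Delta)^\perp=2^m-|\Delta|$. Complementation $A\mapsto\bar A$ is a bijection on $2^{[m]}$. Under Remark~\ref{rmk:equivalent_dual_downset_characterisation} it maps $\Delta^\perp$ bijectively onto $2^{[m]}\setminus\Delta$. So $|\Delta^\perp|=2^m-|\Delta|$, and the inclusion is an equality.

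The only care needed is in edge cases. If $\Delta$ is empty, then $C(\Delta)=\{0\}$, $\Delta^\perp=2^{[m]}$, and $C(\Delta^\perp)=\mathbb{F}_2^{2^m}$, which is consistent. If $\Delta=2^{[m]}$, then $\Delta^\perp=\emptyset$ and the dual is $\{0\}$, so both sides are trivial. The dimension formula for $\Delta^\perp=\emptyset$ therefore needs only the convention $C(\emptyset)=\{0\}$. I do not expect any genuine obstacle; the main care point is remembering that the downset property of $\Delta$ is exactly what makes the two characterisations of $\Delta^\perp$ agree.
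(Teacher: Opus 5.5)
Your proof is correct and has the same structure as the paper's: the downset property, orthogonality of the spanning monomials via $|\ev(x^{A\cup B})| = 2^{m-|A\cup B|}$ being even, and the dimension count $|\Delta^\perp| = 2^m - |\Delta|$. The only difference is that you get the count directly from the complementation bijection $A\mapsto\bar A$ between $\Delta^\perp$ and $2^{[m]}\setminus\Delta$, whereas the paper builds $\Delta$ up one element at a time; your version is cleaner, also covers the empty $\Delta$, and proves Remark~\ref{rmk:equivalent_dual_downset_characterisation}, which the paper states without proof.
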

\begin{remark}
    This fact is well-known, see for example Proposition 6 of~\cite{bardet2016algebraic}, although we use language here that we feel to be most appropriate for our purposes.
\end{remark}
\begin{proof}
    Given $A \in \Delta^\perp$, we have $A \cup B \neq [m]$ for all $B \in \Delta$. For all $C \subseteq A$, for all $B \in \Delta$, we have $C \cup B \subseteq A \cup B \neq [m]$, and so $\Delta^\perp$ is a downset.

    To see the code duality, we start by showing that $|\Delta^\perp| = 2^m - |\Delta|$. To see this, first notice the simple case $\Delta = \{(0, 0, \ldots, 0)\}$, we have $\Delta^\perp = 2^{[m]} \setminus \{(1, 1, \ldots, 1)\}$, in which case we have $|\Delta| = 1$ and $|\Delta^\perp| = 2^m-1$. Then, consider building up $\Delta$ by adding elements one-by-one, ensuring $\Delta$ is a downset at each step. Every time the size of $\Delta$ increases by one, the size of $\Delta^\perp$ decreases by one, and so we indeed have $|\Delta^\perp| = 2^m - |\Delta|$. 

    $|\Delta^\perp| = 2^m - |\Delta|$ implies $\dim C(\Delta)^\perp = \dim C(\Delta^\perp)$, and so we must only show one side is contained in the other. We conclude the proof by noting that if $A \in \Delta$ and $B \in \Delta^\perp$, we have
    \begin{align}
        \ev(x^A)\cdot \ev(x^B) &= |\ev(x^A) * \ev(x^B)| \\
        &= |\ev(x^Ax^B)|\\
        &= |\ev(x^{A\cup B})| \\
        &\equiv 0 \pmod 2.
    \end{align}
    In the first line, we recall the notation $*$ for the componentwise multiplication of two bit strings. Going into the second line, we have used the fact that, given any two functions $f,g$, we have $\ev(fg) = \ev(f) * \ev(g)$. Going into the final line, we have used the fact that $A \cup B \neq [m]$; see Proposition~\ref{prop:monomial_codeword_weight}.
\end{proof}
\subsection{Divisibility of Classical Codes}

Above, in Proposition~\ref{prop:monomial_codeword_weight}, we established the weight of any codeword corresponding to a monomial. It is much harder in general to characterise the weights of codewords corresponding to more general polynomials. However, we can get a handle on their properties using the Ward criteria, as follows.
\begin{definition}[$8$-Divisibility of a Binary Code]
    A binary linear code $C$ is called $8$-divisible if all its codewords have Hamming weight divisible by $8$.
\end{definition}
\begin{lemma}[Ward Criteria for Divisibility]\label{lemma:ward_criteria}
    Given a binary linear code $C$, we consider a basis for it; call the basis $c_1, c_2, \ldots, c_s$, where $s = \dim C$. $C$ being $8$-divisible is equivalent to the following conditions all being true simultaneously:
    \begin{enumerate}
        \item For each $\alpha \in [s]$, $|c_\alpha|$ is divisible by $8$;
        \item For each $\alpha,\beta \in [s]$ with $\alpha<\beta$, $|c_\alpha*c_\beta|$ is divisible by $4$;
        \item For each $\alpha,\beta,\gamma \in [s]$ with $\alpha < \beta < \gamma$, $|c_\alpha*c_\beta*c_\gamma|$ is divisible by $2$.
    \end{enumerate}
\end{lemma}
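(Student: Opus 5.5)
The plan is to prove both directions through the standard identity expressing the weight of a sum of binary vectors in terms of the weights of their overlaps. For vectors $v_1,\ldots,v_t \in \mathbb{F}_2^n$, inclusion--exclusion gives
\begin{equation}
    \Bigl|\sum_{\alpha=1}^t v_\alpha\Bigr| = \sum_{\emptyset \neq S \subseteq [t]} (-2)^{|S|-1}\Bigl|\mathop{\ast}_{\alpha \in S} v_\alpha\Bigr|.
\end{equation}
I would prove this first, pointwise. A coordinate lying in the supports of exactly $j \geq 1$ of the $v_\alpha$ contributes $\sum_{i=1}^{j}\binom{j}{i}(-2)^{i-1} = \frac{1-(-1)^j}{2}$ to the right-hand side. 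This is $1$ if $j$ is odd and $0$ if $j$ is even, which matches the contribution of that coordinate to the left-hand side.

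For the direction ``conditions $\Rightarrow$ $8$-divisible'', write an arbitrary codeword as $c = \sum_{\alpha \in T} c_\alpha$ for some $T \subseteq [s]$ and apply the identity. We reduce each term modulo $8$:
\begin{itemize}
    \item Terms with $|S|=1$ are $|c_\alpha| \equiv 0 \pmod 8$ by condition 1.
    \item Terms with $|S|=2$ are $-2|c_\alpha * c_\beta|$, and $|c_\alpha * c_\beta| \equiv 0 \pmod 4$ by condition 2, so these are $\equiv 0 \pmod 8$.
    \item Terms with $|S|=3$ are $4|c_\alpha*c_\beta*c_\gamma|$, and the overlap is even by condition 3, so these are $\equiv 0 \pmod 8$.
    \item Terms with $|S| \geq 4$ carry a factor $2^3$ and vanish modulo $8$ automatically.
\end{itemize}
Hence $|c| \equiv 0 \pmod 8$.

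For the converse, assume $C$ is $8$-divisible.
\begin{itemize}
    \item Condition 1 is immediate, since each $c_\alpha$ is a codeword.
    \item For condition 2, apply the identity to the two codewords $c_\alpha, c_\beta$: $|c_\alpha + c_\beta| = |c_\alpha| + |c_\beta| - 2|c_\alpha * c_\beta|$. Three of these weights are divisible by $8$, so $2|c_\alpha*c_\beta| \equiv 0 \pmod 8$, giving condition 2.
    \item For condition 3, apply the identity to $c_\alpha, c_\beta, c_\gamma$. The left side and the singleton terms are divisible by $8$, and the pairwise terms are $-2\cdot(\text{multiple of }4)$ by condition 2 just established. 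Hence $4|c_\alpha*c_\beta*c_\gamma| \equiv 0 \pmod 8$, so the triple overlap is even.
\end{itemize}

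There is no serious obstacle. The only step requiring care is the pointwise verification of the inclusion--exclusion identity with the $(-2)^{|S|-1}$ coefficients, which is the binomial computation carried out above.
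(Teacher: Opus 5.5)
Your proof is correct and follows essentially the same route as the paper: the inclusion--exclusion formula for the weight of a sum of basis vectors, reduced modulo $8$, gives the forward direction, and its two- and three-vector instances give the converse. Your pointwise verification of the full $(-2)^{|S|-1}$ identity makes explicit the step the paper's sketch leaves implicit, namely that the terms with $|S|\geq 4$ vanish modulo $8$.
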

\begin{remark}
    Notice that if the Ward criteria hold for a particular basis of $C$, the code $C$ is $8$-divisible, and therefore the Ward criteria hold for all bases of $C$.
\end{remark}
The proof may be found in~\cite{ward1990weight}. Indeed, that reference proves the natural generalisation of the above statement for all $2^e$-divisibility. We provide a sketch here for completeness.
\begin{proof}
    Any codeword $c$ of $C$ is uniquely specified by some set of basis elements $B \subseteq [s]$ and equals
    \begin{equation}
        c = \sum_{b \in B} c_b,
    \end{equation}
    where $c = 0$ if $B = \emptyset$.
    The inclusion-exclusion formula then gives us
    \begin{equation}
        |c| \equiv \sum_{b \in B}|c_b| - 2\sum_{\{b,c\} \subseteq B} |c_b*c_c| + 4\sum_{\{b,c,d\} \subseteq B} |c_b*c_c*c_d| \pmod 8.
    \end{equation}
    If the Ward criteria hold for this basis at $e = 3$, then the code $C$ is $8$-divisible. 
    
    On the other hand, if the code $C$ is $8$-divisible, given a basis $c_1, c_2, \ldots, c_s$, their weights must be divisible by $8$. We then have that the weight of the overlap of any two/three basis elements is divisible by four/two from a similar inclusion/exclusion principle. For example,
    \begin{equation}
        2|c_1*c_2| = |c_1| + |c_2| - |c_1+c_2|,
    \end{equation}
    and the right-hand side is divisible by $8$.
\end{proof}

Using the above lemma, it is possible to provide simple sufficient conditions under which a (decreasing) monomial code is $2^e$-divisible. Since the transversality of $T$ gates is our main focus ($e=3$ in the previous lemma), we only show this case for simplicity.

\begin{proposition}[$8$-Divisibility of (Decreasing) Monomial Codes]\label{prop:monomial_divisibility}
    Consider a set of points $\Delta \subseteq 2^{[m]}$, the binary affine monomial code corresponding to $\Delta$, $C(\Delta)$, is $8$-divisible if the following conditions hold simultaneously:
    \begin{enumerate}
        \item For each $A \in \Delta$, $\left|A\right| \leq m-3$;
        \item For each $A,B \in \Delta$, $\left|A\cup B\right| \leq m-2$;
        \item For each $A,B,C \in \Delta$, $\left|A \cup B \cup C\right| \leq m-1$.
    \end{enumerate}
\end{proposition}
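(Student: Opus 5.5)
The plan is to apply the Ward criteria of Lemma~\ref{lemma:ward_criteria} to the natural monomial basis $\{\ev(x^A) : A \in \Delta\}$ of $C(\Delta)$. These evaluations are linearly independent, since monomials are linearly independent functions and $\ev$ is a linear bijection, so they form a basis of $C(\Delta)$. By the lemma, $8$-divisibility follows once we check the three weight conditions on this basis. No downset assumption is needed anywhere, which matches the statement being phrased for arbitrary $\Delta$.

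The key tool is the multiplicativity $\ev(f)*\ev(g) = \ev(fg)$, together with $x^A x^B = x^{A\cup B}$ in the ring where $x_i^2 = x_i$. The checks then go as follows.

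\emph{Single basis elements.} For $A \in \Delta$, Proposition~\ref{prop:monomial_codeword_weight} gives $|\ev(x^A)| = 2^{m-|A|}$. Condition 1 gives $m-|A| \geq 3$, so this weight is divisible by $8$.

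\emph{Pairs.} For distinct $A,B \in \Delta$, we have $|\ev(x^A)*\ev(x^B)| = |\ev(x^{A\cup B})| = 2^{m-|A\cup B|}$. Condition 2 makes this divisible by $4$.

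\emph{Triples.} For distinct $A,B,C \in \Delta$, the triple overlap is $\ev(x^{A\cup B\cup C})$, with weight $2^{m-|A\cup B\cup C|}$. Condition 3 makes this even.

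There is no real obstacle here. The only points needing care are the identification of overlaps with unions of index sets, and the fact that the Ward lemma applies to any basis, so using the monomial basis is legitimate.
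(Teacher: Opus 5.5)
Your proposal is correct and matches the paper's proof: both apply the Ward criteria (Lemma~\ref{lemma:ward_criteria}) to the monomial basis $\{\ev(x^A): A\in\Delta\}$, identify the pairwise and triple overlaps with $\ev(x^{A\cup B})$ and $\ev(x^{A\cup B\cup C})$, and read off the divisibility from Proposition~\ref{prop:monomial_codeword_weight}. You are also right that no downset hypothesis is needed, consistent with the statement being phrased for arbitrary $\Delta$.
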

\begin{proof}
    For $A \in \Delta$, $\ev(x^A)$ form a basis for the code $C(\Delta)$. Noting that we have $\ev(x^A)*\ev(x^B) = \ev(x^{A\cup B})$ and $\ev(x^A)*\ev(x^B)*\ev(x^C) = \ev(x^{A \cup B \cup C})$, the statement follows from Proposition~\ref{prop:monomial_codeword_weight} and Lemma~\ref{lemma:ward_criteria}.
\end{proof}
\begin{remark}
    While we only proved sufficiency of the conditions in Proposition~\ref{prop:monomial_divisibility}, the conditions turn out to be necessary and sufficient.
\end{remark}

\subsection{Classical Weighted Reed-Muller Codes}
Let us now introduce a generalisation of Reed-Muller codes which will be important throughout the paper.

\begin{definition}[Weighted Reed-Muller Codes]\label{def:weighted_RM}
    Let $\boldsymbol{s} = (s_1, s_2, \ldots, s_m)$ be a vector of positive integers. Given a bit string $A \in \mathbb{F}_2^m$, or equivalently a subset $A \subseteq [m]$, we define a generalised weight
    \begin{equation}\label{eq:gen_weight_first_def}
        |A|_{\boldsymbol{s}} = \sum_{i=1}^mA_is_i,
    \end{equation}
    viewing $A$ as a bit string, or
    \begin{equation}
        |A|_{\boldsymbol{s}} = \sum_{i \in A}s_i,
    \end{equation}
    viewing $A$ as a subset of $[m]$.
    Given a choice of $\boldsymbol{s}$, and a positive integer $r$, we have a classical weighted Reed-Muller code~\cite{sorensen1992weighted,geil2013weighted} defined as $C(\mathcal{W}_{\boldsymbol{s},r})$, where $\mathcal{W}_{\boldsymbol{s},r} \coloneq \{A \subseteq [m]: |A|_{\boldsymbol{s}} \leq r\}$, in the notation of Definition~\ref{def:decreasing_monomial_code}.
\end{definition}
The following is a known result~\cite{sorensen1992weighted}, but we prove it here for completeness.
\begin{proposition}[Dual of a Weighted Reed-Muller Code]\label{prop:dual_weighted_rm} Consider a classical weighted Reed-Muller code $C(\mathcal{W}_{\boldsymbol{s},r})$ as in Definition~\ref{def:weighted_RM}. Let $S = \sum_{i=1}^ms_i$, and assume $S \geq r+1$. The dual of $C(\mathcal{W}_{\boldsymbol{s},r})$ is $C(\mathcal{W}_{\boldsymbol{s},S-r-1})$.
\end{proposition}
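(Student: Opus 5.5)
The plan is to reduce the statement to Lemma~\ref{lem:monomial_code_dual}. Since every $s_i$ is positive, $\mathcal{W}_{\boldsymbol{s},r}$ is a downset: if $B \subseteq A$ then $|B|_{\boldsymbol{s}} \leq |A|_{\boldsymbol{s}} \leq r$. It is also non-empty, because $|\emptyset|_{\boldsymbol{s}} = 0 \leq r$. Lemma~\ref{lem:monomial_code_dual} then gives $C(\mathcal{W}_{\boldsymbol{s},r})^\perp = C(\mathcal{W}_{\boldsymbol{s},r}^\perp)$. It therefore suffices to prove the set identity
\begin{equation}
    \mathcal{W}_{\boldsymbol{s},r}^\perp = \mathcal{W}_{\boldsymbol{s},S-r-1}.
\end{equation}

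To prove this identity I will use the characterisation of the dual downset in Remark~\ref{rmk:equivalent_dual_downset_characterisation}: $A \in \mathcal{W}_{\boldsymbol{s},r}^\perp$ if and only if $\bar{A} \notin \mathcal{W}_{\boldsymbol{s},r}$. The generalised weight is additive over disjoint unions, so $|\bar{A}|_{\boldsymbol{s}} = S - |A|_{\boldsymbol{s}}$. This gives the chain
\begin{equation}
    A \in \mathcal{W}_{\boldsymbol{s},r}^\perp \iff |\bar{A}|_{\boldsymbol{s}} > r \iff S - |A|_{\boldsymbol{s}} \geq r+1 \iff |A|_{\boldsymbol{s}} \leq S-r-1.
\end{equation}
The middle step uses integrality of the weights to replace $> r$ by $\geq r+1$. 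The last condition is exactly $A \in \mathcal{W}_{\boldsymbol{s},S-r-1}$.

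The only delicate point is degenerate cases. The hypothesis $S \geq r+1$ ensures $S-r-1 \geq 0$, so $\emptyset \in \mathcal{W}_{\boldsymbol{s},S-r-1}$. Hence the target downset is non-empty and the dual code is a genuine monomial code, consistent with the convention $C(\emptyset)=\{0\}$ in Definition~\ref{def:decreasing_monomial_code}. This also matches $\dim C(\mathcal{W}_{\boldsymbol{s},r}) = |\mathcal{W}_{\boldsymbol{s},r}| < 2^m$, since $[m]$ has weight $S > r$.

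Although the statement restricts to positive $r$, the argument works verbatim for any $r \geq 0$. I expect no real obstacle: the whole proof is the complement identity $|\bar{A}|_{\boldsymbol{s}} = S - |A|_{\boldsymbol{s}}$ combined with the earlier duality lemma.
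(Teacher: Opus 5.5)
Your proof is correct and follows the paper's own argument: reduce via Lemma~\ref{lem:monomial_code_dual} to the set identity $\mathcal{W}_{\boldsymbol{s},r}^\perp = \mathcal{W}_{\boldsymbol{s},S-r-1}$, then verify that identity using the complement characterisation of Remark~\ref{rmk:equivalent_dual_downset_characterisation} together with $|\bar{A}|_{\boldsymbol{s}} = S - |A|_{\boldsymbol{s}}$. Your extra checks — that $\mathcal{W}_{\boldsymbol{s},r}$ is a non-empty downset, and that integrality lets you replace $>r$ by $\geq r+1$ — are details the paper leaves implicit.
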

\begin{proof}
    Since $\mathcal{W}_{\boldsymbol{s},r}$ is a downset, by Lemma~\ref{lem:monomial_code_dual}, it is enough to prove that $\mathcal{W}_{\boldsymbol{s},r}^\perp = \mathcal{W}_{\boldsymbol{s},S-r-1}$. To do this, we use the alternative characterisation of the dual downset, as in Remark~\ref{rmk:equivalent_dual_downset_characterisation}. For any set $B \subseteq [m]$, we have $B \notin \mathcal{W}_{\boldsymbol{s},r}$ if and only if $|B|_{\boldsymbol{s}} > r$. Moreover, for any $B \subseteq [m], |B|_{\boldsymbol{s}}> r$ if and only if $|\bar{B}|_{\boldsymbol{s}}\leq S-r-1$, and the result follows.
\end{proof}
We give sufficient conditions for weighted Reed-Muller codes to be $8$-divisible.
\begin{lemma}[$8$-Divisibility of Weighted Reed-Muller Codes]\label{lem:divis_weighted_RM}
    Consider a classical weighted Reed-Muller code $C(\mathcal{W}_{\boldsymbol{s},r})$ as in Definition~\ref{def:weighted_RM}. Let $S = \sum_{i=1}^ms_i$, $s^{(1)}$ be the largest of the $s_i$, breaking ties arbitrarily, and let $s^{(2)}$ be the second-largest of the $s_i$, again breaking ties arbitrarily, and where we may have $s^{(2)} = s^{(1)}$. Suppose
    \begin{align}
        3r &< S\\
        2r &< S-s^{(1)}\\
        r &< S-s^{(1)}-s^{(2)}.
    \end{align}
    Then, $C(\mathcal{W}_{\boldsymbol{s},r})$ is $8$-divisible.
\end{lemma}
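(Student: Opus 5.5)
The plan is to apply Proposition~\ref{prop:monomial_divisibility} directly, verifying its three conditions for $\Delta = \mathcal{W}_{\boldsymbol{s},r}$. The key tool is a translation between ordinary cardinality and generalised weight. For any $U \subseteq [m]$ we have $|U|_{\boldsymbol{s}} + |\bar{U}|_{\boldsymbol{s}} = S$. So $|U| \leq m-j$ (for $j = 1,2,3$) holds if and only if $\bar{U}$ has at least $j$ elements. This complement condition will be forced by showing that $|\bar{U}|_{\boldsymbol{s}}$ is strictly larger than the total weight of any set of fewer than $j$ coordinates.

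Condition 3 (triple unions). For $A,B,C \in \Delta$, subadditivity gives $|A\cup B\cup C|_{\boldsymbol{s}} \leq 3r < S$. Hence $|\overline{A\cup B\cup C}|_{\boldsymbol{s}} > 0$, so the complement is non-empty, i.e.\ $|A\cup B\cup C| \leq m-1$.

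Condition 2 (pairs). For $A,B \in \Delta$ we get $|\overline{A\cup B}|_{\boldsymbol{s}} \geq S - 2r > s^{(1)}$. A set with at most one element has weight at most $s^{(1)}$. So the complement has at least two elements, i.e.\ $|A\cup B| \leq m-2$.

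Condition 1 (single sets). For $A \in \Delta$ we get $|\bar{A}|_{\boldsymbol{s}} \geq S - r > s^{(1)} + s^{(2)}$. Any set of at most two coordinates has weight at most $s^{(1)}+s^{(2)}$, since the two largest weights bound any pair and all weights are positive. So $|\bar{A}| \geq 3$, i.e.\ $|A| \leq m-3$.

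With all three conditions verified, Proposition~\ref{prop:monomial_divisibility} gives $8$-divisibility. There is no real obstacle here. The only point needing care is the fact that the maximum weight of a $j$-subset, for $j \leq 2$, is the sum of the $j$ largest $s_i$. This holds with ties allowed, and uses positivity of the $s_i$ so that smaller sets weigh no more.
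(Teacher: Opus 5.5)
Your proposal is correct and follows essentially the paper's proof. Both verify the three conditions of Proposition~\ref{prop:monomial_divisibility} using subadditivity of the generalised weight. The paper phrases the key step as ``the lightest subset of size $m-1$ (resp.\ $m-2$) has weight $S-s^{(1)}$ (resp.\ $S-s^{(1)}-s^{(2)}$)'', while you phrase the same bound through complements via $|\bar{U}|_{\boldsymbol{s}} = S-|U|_{\boldsymbol{s}}$.
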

\begin{proof}
    Since $\mathcal{W}_{\boldsymbol{s},r}$ is a downset, it suffices to check that it satisfies the conditions of Proposition~\ref{prop:monomial_divisibility}. Given $A,B,C \in \mathcal{W}_{\boldsymbol{s},r}$, we have $|A\cup B \cup C|_{\boldsymbol{s}} \leq |A|_{\boldsymbol{s}} + |B|_{\boldsymbol{s}} + |C|_{\boldsymbol{s}} \leq 3r < S$. Since $[m]$ is the unique set of weight $S$, we cannot have $A \cup B \cup C = [m]$. Moreover, we have
    \begin{equation}
        \min_{\substack{D \subseteq [m]\\|D| = m-1}}|D|_{\boldsymbol{s}} = S-s^{(1)}.
    \end{equation}
    For every $A,B \in \mathcal{W}_{\boldsymbol{s},r}$, we have $|A\cup B|_{\boldsymbol{s}} \leq 2r < S-s^{(1)}$, and so $A \cup B$ cannot contain a subset of $[m]$ of size $m-1$; in other words, $|A \cup B| \leq m-2$. Finally,
    \begin{equation}
        \min_{\substack{D \subseteq [m]\\|D| = m-2}} |D|_{\boldsymbol{s}} = S-s^{(1)}-s^{(2)},
    \end{equation}
    and it follows that every $A \in \mathcal{W}_{\boldsymbol{s},r}$ has $|A| \leq m-3$.
\end{proof}

\subsection{Quantum CSS Codes from X-Generator Matrices}\label{subsec:prelims_X_gen}

We now review a standard construction for developing quantum CSS codes, where $X$ stabilisers and $X$ logicals are developed from a single classical code, often with algebraic properties, in order to ensure the quantum code has some desirable transversal gates. See~\cite{nielsen2000quantum} for the preliminary material on quantum CSS codes.

Let $C$ be some classical linear code. We write down a generator matrix for $C$, call it $\hat{G}$, that is, a matrix whose rows form a basis for $C$. Consider some set of columns of $\hat{G}$ that are linearly independent. We permute the columns of $\hat{G}$ such that these become the $k$ left-most columns of $\hat{G}$, and after row operations, we write $\hat{G}$ into the form
\begin{equation}\label{eq:semi_systematic_form}
    \hat{G} = \begin{pmatrix}
        I_k & G_1\\
        0 & G_0
    \end{pmatrix},
\end{equation}
where $I_k$ is the $k \times k$ identity matrix. It is convenient to write
\begin{equation}\label{eq:X_stab_mat}
    G \coloneq \begin{pmatrix}
        G_1\\
        G_0
    \end{pmatrix},
\end{equation}
and we let $\mathcal{G}, \mathcal{G}_1$, and $\mathcal{G}_0$ denote the row spans of the matrices $G$, $G_1$ and $G_0$, respectively. We then form a quantum CSS code $\mathcal{Q}$, whose $X$ stabilisers are given by $\mathcal{G}_0$, and whose $Z$ stabilisers are given by $\mathcal{G}^\perp$. We then have the following.
\begin{proposition}\label{prop:quantum_code_logical_properties}
    Let $C$ be a self-orthogonal classical code, that is, $C \subseteq C^\perp$. The resulting quantum CSS code $\mathcal{Q}$ from the above construction satisfies the following:
    \begin{enumerate}
        \item $\mathcal{Q}$ has $k$ logical qubits;
        \item The distance of $\mathcal{Q}$ is equal to its $Z$-distance, which is
        \begin{equation}
            d_Z(\mathcal{Q}) = \min_{f \in \mathcal{G}_0^\perp\setminus\mathcal{G}^\perp}|f|;
        \end{equation}
        \item The rows of $G_1$ may be used as a canonical basis of non-trivial $X$ and $Z$-logical operators.
    \end{enumerate}
\end{proposition}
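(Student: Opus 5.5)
The proof is linear algebra on the semi-systematic form \eqref{eq:semi_systematic_form}, using only $C\subseteq C^\perp$. Write each row of $\hat G$ as $(e_a, g^a)$ for $a\le k$ and $(0,g^a)$ for $a>k$. Then, for $a,b\le k$, self-orthogonality gives $\delta_{ab}+g^a\cdot g^b=0$, so $g^a\cdot g^b=\delta_{ab}$ over $\mathbb{F}_2$. Similarly, any pair involving a row of $G_0$ satisfies $g^a\cdot g^b=0$. Hence $\mathcal{G}_0\subseteq\mathcal{G}^\perp$, which is exactly the CSS commutation condition for $X$-stabilisers $\mathcal{G}_0$ and $Z$-stabilisers $\mathcal{G}^\perp$. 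I would first check that $G$ has full row rank $s$. If $\sum_a\lambda_a g^a=0$, then pairing with $g^b$ for $b\le k$ forces $\lambda_b=0$. The remaining relation $\sum_{a>k}\lambda_a g^a=0$ then forces $\lambda=0$, because the rows $(0,g^a)$ are independent rows of $\hat G$. So $\dim\mathcal{G}_1=k$ and $\mathcal{G}_1\cap\mathcal{G}_0=0$.

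For item 1, the number of logical qubits is $n-\dim\mathcal{G}_0-\dim\mathcal{G}^\perp=n-(s-k)-(n-s)=k$.

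For item 3, the $X$-logicals are $(\mathcal{G}^{\perp})^{\perp}=\mathcal{G}$ modulo $\mathcal{G}_0$, so the rows of $G_1$ represent a basis of them. The $Z$-logicals are $\mathcal{G}_0^\perp$ modulo $\mathcal{G}^\perp$, a space of dimension $k$. Each $g^a$ with $a\le k$ lies in $\mathcal{G}_0^\perp$ because it is orthogonal to $G_0$. Since $g^a\cdot g^b=\delta_{ab}$, these vectors are independent modulo $\mathcal{G}^\perp$ and pair symplectically with the $X$-logicals. They therefore form a canonical conjugate basis, with the same vector serving as $\overline{X}_a$ and $\overline{Z}_a$.

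For item 2, the formula for $d_Z$ is just the definition: the $Z$-logicals are $\mathcal{G}_0^\perp\setminus\mathcal{G}^\perp$. The substantive part, and the one I expect to need the most care, is showing $d_X\ge d_Z$. Here $d_X=\min_{f\in\mathcal{G}\setminus\mathcal{G}_0}|f|$. Take $f\in\mathcal{G}\setminus\mathcal{G}_0$. Since $\mathcal{G}\subseteq\mathcal{G}_0^\perp$ (rows of $G$ are orthogonal to $G_0$), we have $f\in\mathcal{G}_0^\perp$. I then show $f\notin\mathcal{G}^\perp$. Write $f=\sum_{a\le k}\lambda_a g^a+h$ with $h\in\mathcal{G}_0$ and some $\lambda_b\neq0$. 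Then $f\cdot g^b=\lambda_b=1$, so $f\notin\mathcal{G}^\perp$. Thus $f$ is a valid $Z$-logical representative, so $|f|\ge d_Z$ and $d=\min(d_X,d_Z)=d_Z$. The only subtlety is the orthonormality $g^a\cdot g^b=\delta_{ab}$, which comes from the identity block together with self-orthogonality.
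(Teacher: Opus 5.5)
Your proposal is correct and follows essentially the paper's route: orthonormality $g^a\cdot g^b=\delta_{ab}$ from the identity block plus self-orthogonality, a dimension count for $k$, and the inclusion $\mathcal{G}\setminus\mathcal{G}_0\subseteq\mathcal{G}_0^\perp\setminus\mathcal{G}^\perp$ to get $d_X\ge d_Z$, which you verify elementwise rather than through the paper's decomposition $\mathcal{G}_0^\perp=\mathcal{G}_1\oplus\mathcal{G}^\perp$. One harmless slip: $\mathcal{G}_0\subseteq\mathcal{G}^\perp$ is not the CSS commutation condition (that condition is $\mathcal{G}_0\subseteq(\mathcal{G}^\perp)^\perp=\mathcal{G}$, which holds trivially), though the fact itself is true and is exactly what you use later.
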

\begin{remark}
    This proof is quite standard; see Lemma 1 of~\cite{bravyi2012magic}, although we include it here for completeness.
\end{remark}
\begin{proof}
    Let $G$ have $s$ rows and $n$ columns. Let the rows of $G$ be $(g_i)_{i=1}^s$. The self-orthogonality of $C$, as well as the structure of the matrix in Equation~\eqref{eq:semi_systematic_form} imply that
    \begin{equation}\label{eq:g_row_overlaps}
        g_i \cdot g_j = \begin{cases}
            1 &\text{ if } 1 \leq i = j \leq k\\
            0 &\text{ otherwise.}
        \end{cases}
    \end{equation}
    We then consider any linear relation
    \begin{equation}
        \sum_{i=1}^kc_ig_i = \sum_{i=k+1}^sc_ig_i
    \end{equation}
    for any $(c_i)_{i=1}^s \in \mathbb{F}_2^s$. Taking the inner product with $g_j$, for each $j = 1, \ldots, k$ in turn, yields $c_j = 0$ for each $j = 1, \ldots, k$. This tells us that $\mathcal{G}_0 \cap \mathcal{G}_1 = 0$, and that the rows of $G_1$ are linearly independent. The number of logical qubits of $\mathcal{Q}$ is therefore
    \begin{equation}
        \dim\mathcal{G} - \dim\mathcal{G}_0 = k.
    \end{equation}
    Next, let us show that $\mathcal{G}_0^\perp = \mathcal{G}_1 + \mathcal{G}^\perp$. Given Equation~\eqref{eq:g_row_overlaps}, we have $\mathcal{G}_1 \cap \mathcal{G}^\perp = 0$, and therefore
    \begin{align}
        \dim(\mathcal{G}_1+\mathcal{G}^\perp) &= \dim\mathcal{G}_1 + \dim\mathcal{G}^\perp\\
        &= k + n - \dim\mathcal{G}\\
        &= k + n - k - \dim\mathcal{G}_0\\
        &=n-\dim\mathcal{G}_0\\
        &= \dim\mathcal{G}_0^\perp.
    \end{align}
    Moreover, $\mathcal{G}_1 \subseteq \mathcal{G}_0^\perp$ by Equation~\eqref{eq:g_row_overlaps}, and $\mathcal{G}^\perp \subseteq \mathcal{G}_0^\perp$ by definition, giving $\mathcal{G}_1 + \mathcal{G}^\perp \subseteq \mathcal{G}_0^\perp$, and we indeed have $\mathcal{G}_0^\perp = \mathcal{G}_1 + \mathcal{G}^\perp$.

    This fact, along with $\mathcal{G}_1 \cap \mathcal{G}^\perp = 0$ give us
    \begin{equation}
        \mathcal{G}_0^\perp \setminus\mathcal{G}^\perp = \{x+y:x \in \mathcal{G}_1\setminus\{0\}, y \in \mathcal{G}^\perp\}.
    \end{equation}
    However, because $\mathcal{G}^\perp \supseteq \mathcal{G}_0$, we have
    \begin{align}
        \mathcal{G}_0^\perp \setminus \mathcal{G}^\perp \supseteq \{x+y: x \in \mathcal{G}_1\setminus\{0\}, y \in \mathcal{G}_0\}
        = \mathcal{G}\setminus \mathcal{G}_0,
    \end{align}
    where in the equality we have used $\mathcal{G}_0 \cap \mathcal{G}_1 = 0$. Since the $X$ and $Z$-distances are
    \begin{align}
        d_X &= \min_{f \in \mathcal{G}\setminus\mathcal{G}_0}|f|\\
        d_Z &= \min_{f \in \mathcal{G}_0^\perp \setminus \mathcal{G}^\perp}|f|,
    \end{align}
    respectively, we have $d_X \geq d_Z$, and the distance of the quantum code is equal to its $Z$ distance, which has the given expression.

    Finally, the rows of $G_1$ satisfy the correct orthogonality relation, both with themselves, and with the stabilisers to form a canonical basis of both $X$ and $Z$ non-trivial logical operators. Moreover, $\mathcal{G}_0 \cap \mathcal{G}_1 = 0$ and $\mathcal{G} = \mathcal{G}_0 + \mathcal{G}_1$ implies that $(g_i + \mathcal{G}_0)_{i=1}^k$ forms a basis for $\mathcal{G}/\mathcal{G}_0$, whereas $\mathcal{G}_0^\perp = \mathcal{G}_1 + \mathcal{G}^\perp$ and $\mathcal{G}_1 \cap \mathcal{G}^\perp = 0$ imply that $(g_i +\mathcal{G}^\perp)_{i=1}^k$ forms a basis for $\mathcal{G}_0^\perp/\mathcal{G}^\perp$.
\end{proof}
We omit the proof of the following proposition linking the divisibility of the classical code $C$ to its transversality properties, since it is a standard calculation~\cite{hastings2018distillation,bravyi2012magic}. In the statement, we consider the one-qubit gate which is in exactly the third level of the Clifford hierarchy~\cite{gottesman1999demonstrating}:
\begin{equation}
    T = \begin{pmatrix}
        1 & 0\\
        0 & e^{\frac{i\pi}{4}}
    \end{pmatrix}.
\end{equation}
\begin{proposition}\label{prop:quantum_code_transversality_from_divisibility}
    Suppose that the classical linear code $C$ in the above construction is $8$-divisible, and suppose that we make a choice of logical basis for the quantum CSS code $\mathcal{Q}$ as given in Proposition~\ref{prop:quantum_code_logical_properties}. The physical operator acting transversally with $T$ on each physical qubit executes the operator $T^\dagger$ on each logical qubit transversally, that is,
    \begin{equation}
        T^{\otimes n} = \overline{(T^\dagger)^{\otimes k}}.
    \end{equation}
    In short, $\mathcal{Q}$ supports a transversal $T$ gate (without Clifford corrections).
\end{proposition}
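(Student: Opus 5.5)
The plan is the standard computation in the logical basis of Proposition~\ref{prop:quantum_code_logical_properties}. For $u \in \mathbb{F}_2^k$, write the logical basis state as the uniform superposition over a coset,
\begin{equation}
\overline{\ket{u}} \propto \sum_{y \in \mathcal{G}_0} \ket{uG_1 + y},
\end{equation}
where $uG_1 = \sum_{a=1}^k u_a g^a$. This is a valid choice: it is stabilised by the $X$ stabilisers $\mathcal{G}_0$. It is also stabilised by the $Z$ stabilisers $\mathcal{G}^\perp$, since every string in the support lies in $\mathcal{G}$. Logical $Z$ operators given by rows $g^a$ of $G_1$ act with eigenvalue $(-1)^{u_a}$, by Equation~\eqref{eq:g_row_overlaps} and $g^a \perp \mathcal{G}_0$. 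Since $T^{\otimes n}$ is diagonal, acting on $\ket{x}$ as $\omega^{|x|}$ with $\omega = e^{i\pi/4}$, it suffices to show that for every $x = uG_1 + y$ with $y \in \mathcal{G}_0$,
\begin{equation}
|x| \equiv -|u| \pmod 8 .
\end{equation}
Then $T^{\otimes n}\overline{\ket{u}} = \omega^{-|u|}\overline{\ket{u}}$, which is exactly $\overline{(T^\dagger)^{\otimes k}}$.

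To prove the congruence, I would lift to the code $C$. The vector $\hat{x} = (u, x)$ is the corresponding combination of rows of $\hat{G}$ in the form of Equation~\eqref{eq:semi_systematic_form}, so $\hat{x} \in C$. By $8$-divisibility, $|\hat{x}| = |u| + |x| \equiv 0 \pmod 8$, which gives $|x| \equiv -|u|$ immediately. This avoids needing the Ward criteria for $G$ directly. Alternatively, one could argue via Lemma~\ref{lemma:ward_criteria} and the conditions in Equations~\eqref{eq:G_property_first}--\eqref{eq:G_property_last}, expanding $|x|$ by inclusion--exclusion modulo $8$, but the lift is cleaner.

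The remaining care points are the following.
\begin{enumerate}
\item Checking that these states really are the logical basis associated with the canonical $X$ and $Z$ logicals of Proposition~\ref{prop:quantum_code_logical_properties}. Logical $X_a$ is $X^{g^a}$, which maps $u \mapsto u + e_a$ as required.
\item Noting that $8$-divisibility implies self-orthogonality of $C$, since pairwise overlaps are even, so Proposition~\ref{prop:quantum_code_logical_properties} applies.
\item Confirming that there is no stray global phase, which holds because every term in the superposition acquires the identical phase.
\end{enumerate}
I expect no real obstacle; the main subtlety is matching the basis convention, so that the result is $T^\dagger$ rather than $T$. Remark~\ref{rmk:T_gate_transversality} already treats that distinction as immaterial.
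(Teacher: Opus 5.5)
Your proof is correct. The paper does not prove this proposition itself; it defers to the standard calculation of Bravyi--Haah and Hastings--Haah.

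The route it sketches in Subsection~\ref{subsubsec:T_gate_classical_divis} obtains the key congruence $|uG_1+y| \equiv -|u| \pmod 8$ differently from you. There, $8$-divisibility of $C$ is first transferred via the Ward criteria (Lemma~\ref{lemma:ward_criteria}) to the row conditions \eqref{eq:G_property_first}--\eqref{eq:G_property_last} on $G$: logical rows of weight $7 \bmod 8$, stabiliser rows $0 \bmod 8$, pairwise overlaps $0 \bmod 4$, triple overlaps even. Then $|uG_1+y|$ is expanded by inclusion--exclusion modulo $8$ to give $7|u|$.

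You instead observe that $(u, uG_1+y)$ is itself a codeword of $C$ in the form \eqref{eq:semi_systematic_form}. The congruence then takes one line and the Ward criteria are not needed at all; self-orthogonality only requires the identity $2|c*c'| = |c|+|c'|-|c+c'|$.

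The row-condition route has its own advantages. It is phrased intrinsically in terms of the matrix $G$ defining the quantum code, which is the form of the necessary-and-sufficient characterisation the paper cites. It is also the same computation that handles the triorthogonal matrices in the remark after \eqref{eq:G_property_last}, where no ambient $8$-divisible code exists and the surviving single-row and pairwise terms become Clifford corrections. Your lift is shorter, and it is the natural argument whenever the divisible code $C$ is in hand, as it always is in this paper.

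Your check that the coset states match the canonical logical operators of Proposition~\ref{prop:quantum_code_logical_properties} (via \eqref{eq:g_row_overlaps}) is what is needed. So is your observation that every term of the superposition picks up the same phase, so no relative phase appears.
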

\begin{remark}
    Given a quantum CSS code with a choice of logical basis for which $T^{\otimes n} = \overline{(T^\dagger)^{\otimes k}}$, one can change its logical basis so that $T^{\otimes n} = \overline{T^{\otimes k}}$; see Remark~\ref{rmk:T_gate_transversality}.
\end{remark}

\section{Quantum Codes from Decreasing Monomial Codes Punctured at a Downset}

\subsection{The Distance of Classical Decreasing Monomial Codes Punctured at a Downset}

To construct improved quantum codes with transversal $T$ gates, we will need to be able to establish the distance of classical decreasing monomial codes punctured at a downset; we find a closed-form expression for this quantity in this section.
\begin{definition}[Punctured Monomial Codes]\label{def:punc_monomial_codes}
    Given some $\Delta \subseteq 2^{[m]}$, the coordinates of the code $C(\Delta)$ may themselves be identified with the coordinates of the Hamming cube $\mathbb{F}_2^m \equiv 2^{[m]}$. Given some set $\Gamma \subseteq \mathbb{F}_2^m$, we let $C(\Delta, \Gamma)$ be the monomial code $C(\Delta)$ punctured at the points $\Gamma$, that is,
    \begin{equation}
        C(\Delta, \Gamma) \coloneq \punc_{\Gamma}C(\Delta).
    \end{equation}
    This is a code of length $2^m - |\Gamma|$. When $\Delta$ and $\Gamma$ are both downsets, we call $C(\Delta, \Gamma)$ a decreasing monomial code punctured at a downset.
\end{definition}
The following lemma on the distance of a decreasing monomial code punctured at a downset is new as far as we know.
\begin{lemma}[Distance of a Decreasing Monomial Code Punctured at a Downset]\label{lem:monomial_puncture_distance}
    Let $\Delta, \Gamma \subseteq 2^{[m]}$ be downsets. Suppose that
    \begin{equation}\label{eq:non_annihilate_assumption}
        \bar{A} \in\bar{\Gamma} \text{ for every } A \in \Delta.\footnote{For clarity, note that $A \subseteq [m]$, and so $\bar{A} \coloneq [m]\setminus A$, whereas $\Gamma \subseteq 2^{[m]}$, and so $\bar{\Gamma} \coloneq 2^{[m]} \setminus \Gamma$.}
    \end{equation}
    Let $C(\Delta, \Gamma)$ be the corresponding decreasing monomial code punctured at a downset. The minimum distance of $C(\Delta, \Gamma)$ is
    \begin{equation}
        d(C(\Delta, \Gamma)) = \min_{A \in \Delta}|\{V \subseteq \bar{A}: V \in \bar{\Gamma}\}|.
    \end{equation}
\end{lemma}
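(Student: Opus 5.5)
The plan is to prove matching upper and lower bounds on the minimum weight of $\punc_\Gamma C(\Delta)$. Throughout, I identify a point $V \in \mathbb{F}_2^m$ with a subset of $[m]$, and note that a point survives the puncturing exactly when $V \in \bar{\Gamma}$.

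\textbf{Upper bound.} Fix $A \in \Delta$ and consider the polynomial $f_A = \prod_{i \in A}(1+x_i) = \sum_{B \subseteq A} x^B$. Since $\Delta$ is a downset, every $B \subseteq A$ lies in $\Delta$, so $f_A \in C(\Delta)$. Its support is exactly $\{V : V \cap A = \emptyset\} = 2^{\bar{A}}$. After puncturing $\Gamma$, the surviving support is therefore $\{V \subseteq \bar{A} : V \in \bar{\Gamma}\}$. This set is non-empty, because the hypothesis~\eqref{eq:non_annihilate_assumption} says $\bar{A} \in \bar{\Gamma}$. Hence $\ev(f_A)$ gives a nonzero punctured codeword of that weight, and minimising over $A \in \Delta$ bounds the distance from above.

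\textbf{Lower bound.} This is the main step. Let $f \in C(\Delta)$ be a nonzero polynomial, written in the monomial basis; all monomials appearing in $f$ are indexed by sets in $\Delta$. I pick a monomial $x^A$ of $f$ that is maximal under inclusion among the monomials of $f$, so $A \in \Delta$.

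For each $y \subseteq \bar{A}$, restrict $f$ to the subcube $Q_y = \{V : V \cap \bar{A} = y\}$, i.e.\ fix the variables outside $A$ according to $y$. Under this substitution, a monomial $x^B$ of $f$ becomes either $0$ or $x^{B \cap A}$. The only monomial of $f$ that can produce $x^A$ is one with $B \supseteq A$, and by maximality that is $x^A$ itself. Thus the restriction is a polynomial in the variables of $A$ whose top coefficient (of $x^A$) is $1$. Such a polynomial has odd weight on $2^A$: the parity of the weight equals the coefficient of the full monomial, by M\"obius inversion, or equivalently by Proposition~\ref{prop:monomial_codeword_weight} since only $x^A$ has odd weight on that subcube. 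So $f$ has at least one support point $V_y \in Q_y$ for every $y \subseteq \bar{A}$.

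Now use that $\Gamma$ is a downset. Since $V_y \supseteq y$, if $y \in \bar{\Gamma}$ then $V_y \in \bar{\Gamma}$ as well. The points $V_y$ lie in disjoint subcubes, so they are distinct. Therefore the punctured codeword has weight at least $|\{y \subseteq \bar{A} : y \in \bar{\Gamma}\}|$, which is at least the claimed minimum. In particular it is positive, so puncturing is injective and the distance is well defined.

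I expect the main obstacle to be this lower bound, specifically choosing the \emph{inclusion-maximal} monomial so that the top coefficient survives on every fibre. It is also where the downset property of $\Gamma$ enters, to transfer the ``not punctured'' property from the fibre label $y$ up to the actual support point $V_y$. The upper bound is routine once the product $\prod_{i \in A}(1+x_i)$ is identified.
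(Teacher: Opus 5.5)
Your proposal is correct and follows essentially the same route as the paper. The upper bound uses $\prod_{i \in A}(1+x_i)$. The lower bound picks an inclusion-maximal monomial $x^A$ of $f$ and shows that $f$ has odd weight on each of the disjoint subcubes $\{U \sqcup V : U \subseteq A\}$ for $V \subseteq \bar{A}$, which are your $Q_y$; the downset property of $\Gamma$ then keeps the subcubes indexed by $V \in \bar{\Gamma}$ entirely unpunctured. Your fibre-restriction phrasing of the parity step is a repackaging of the paper's direct computation $\sum_{y \in \mathcal{S}_V} f(y) = c_A = 1$.
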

\begin{remark}
    Note that the assumption $\bar{A} \in \bar{\Gamma}$ for every $A \in \Delta$ implies that $|\{V \subseteq \bar{A}: V \in \bar{\Gamma}\}|$ is positive for every $A \in \Delta$, and thus we are not claiming a distance of zero, which would be meaningless. Moreover, the condition of Equation~\eqref{eq:non_annihilate_assumption} is a natural one, because it says that in our puncturing, we are not removing an entire codeword of the original code $C(\Delta)$; in other words, we are only considering the situation that puncturing is an injective code operation; we referred to this in the discussion in Section~\ref{subsubsec:discussion_decreasing_monomial_downset} as the non-annihilation condition, because it says that no codeword is ``annihilated'' by the puncturing.
\end{remark}
\begin{proof}
    First, we show the distance upper bound. For any $A \in \Delta$, consider the polynomial
    \begin{equation}
        f = \prod_{i \in A}(1+x_i).
    \end{equation}
    By expanding out the product, this polynomial is equal to
    \begin{equation}
        f = \sum_{B \subseteq A} x^B.
    \end{equation}
    Because $\Delta$ is a downset, the evaluation of this polynomial is a member of $C(\Delta)$. On $\mathbb{F}_2^m$, the polynomial evaluates to $1$ if and only if $x_i = 0$ for all $i \in A$, meaning that its support is exactly the set of points with the variables in $A$ set to $0$, and the variables in $\bar{A}$ allowed to vary. In other words, the support of $f$ is exactly
    \begin{equation}
        \supp(f) = \{V \subseteq \bar{A}\} \subseteq \mathbb{F}_2^m.
    \end{equation}
    Puncturing the points $\Gamma$, the support of $f$ is now
    \begin{equation}
        \text{supp}(f|_{\bar{\Gamma}}) = \{V \subseteq \bar{A}: V \in \bar{\Gamma}\},
    \end{equation}
    which is non-empty by the assumption in Equation~\eqref{eq:non_annihilate_assumption}.
    This means that, for every $A \in \Delta$, we have
    \begin{equation}
        d(C(\Delta, \Gamma)) \leq |\{V \subseteq \bar{A}: V \in \bar{\Gamma}\}|,
    \end{equation}
    and so
    \begin{equation}
        d(C(\Delta, \Gamma)) \leq \min_{A \in \Delta}|\{V \subseteq \bar{A}: V \in \bar{\Gamma}\}|.
    \end{equation}
    For the distance lower bound, let us consider any non-zero polynomial
    \begin{equation}
        f = \sum_{B \in \Delta}c_Bx^B: c_B \in \mathbb{F}_2.
    \end{equation}
    We begin by lower bounding the size of the support of $f$ on the whole Boolean hypercube $\mathbb{F}_2^m$, before considering the size of its support on $\bar{\Gamma}$.
    Let $A$ be any maximal monomial in the support of $f$, that is, let $A$ be any element in $\Delta$ for which $c_A = 1$ and for which $B \in \Delta$ with $A \subsetneq B \implies c_B = 0$. For every $V \subseteq \bar{A}$, we define a set of points $\mathcal{S}_V \subseteq \mathbb{F}_2^m$:
    \begin{equation}
        \mathcal{S}_V \coloneq \{U \sqcup V: U \subseteq A\}.
    \end{equation}
    We will show that for every $V$, $f$ must be supported at some point in $\mathcal{S}_V$. Since the $\mathcal{S}_V$ are disjoint, and there are $2^{m-|A|}$ possible choices for $V$, the weight of $f$ must be at least $2^{m-|A|}$. Indeed, for any $V$, consider the sum 
    \begin{align}
        \sum_{y \in \mathcal{S}_V}f(y) &= \sum_{y \in \mathcal{S}_V}\sum_{B \in \Delta}c_Bx^B(y)\\
        &=\sum_{B \in \Delta}c_B\sum_{y \in \mathcal{S}_V}x^B(y).
    \end{align}
    Given $B \in 2^{[m]}$ and $y \in \mathbb{F}_2^m$, we have $x^B(y) = 1$ if and only if $B \subseteq y$, where we recall the natural identification of $2^{[m]}$ with $\mathbb{F}_2^m$ as in Subsection~\ref{subsec:params_and_duals}. For each $B\in \Delta$ for which $c_B = 1$, we will now evaluate $\sum_{y \in \mathcal{S}_V}x^B(y)$. We first note that if $B \setminus A \not\subseteq V$ then $x^B(y) = 0$ for all $y \in \mathcal{S}_V$, because $y \setminus A = V$ for all $y \in \mathcal{S}_V$. On the other hand, suppose that $B \setminus A \subseteq V$, and suppose for the moment that $B \neq A$. Since $A$ was chosen to be maximal among the sets for which $c_A = 1$, the only possibility is that $B \cap A \neq A$. There are therefore an even number of points $y \in \mathcal{S}_V$ for which $B \subseteq y$, meaning that $\sum_{y \in \mathcal{S}_V} x^B(y) = 0$. The only remaining possibility is $B = A$, for which we have $\sum_{y \in \mathcal{S}_V} x^B(y) = 1$, and we conclude that
    \begin{equation}
        \sum_{y \in \mathcal{S}_V}f(y) = c_A = 1.
    \end{equation}
    This implies that $f$ is supported somewhere in $\mathcal{S}_V$, as required.

    We have established that $f$ has weight at least $2^{m-|A|}$ on the whole Boolean hypercube by establishing that it is supported somewhere in the $2^{m-|A|}$ disjoint sets $\mathcal{S}_V$, for each $V \subseteq \bar{A}$. Given any $V \subseteq \bar{A}$ for which $V \in \bar{\Gamma}$, we have $\mathcal{S}_V \subseteq \bar{\Gamma}$, which follows from the fact that $\Gamma$ is a downset. We conclude that the weight of $f$ on $\bar{\Gamma}$ is at least the number of subsets $V \subseteq \bar{A}$ for which $V \in \bar{\Gamma}$, which is at least the claimed distance.
\end{proof}
\begin{example}\label{example:hh_first}
    The choices in Hastings and Haah~\cite{hastings2018distillation} can be reproduced with the downsets
    \begin{align}
        \Delta &= \{A \subseteq [m]: |A| \leq r\}\\
        \Gamma &= \{A \subseteq [m]: |A| \leq w\}
    \end{align}
    for some integers $0\leq w<r<m/2$. Note that making this choice of $\Delta$ means that the un-punctured code $C(\Delta)$ in this case is simply the Reed-Muller code $\text{RM}(m,r)$. Our formula for the distance of the resulting punctured Reed-Muller code $C(\Delta, \Gamma)$ reproduces their computed formula, which is $\sum_{i=w+1}^{m-r}\left(\begin{smallmatrix}
        m-r\\i
    \end{smallmatrix}\right)$. Interestingly, they establish this particular formula using a different, inductive proof; see Lemma $1$ in~\cite{hastings2018distillation}.
\end{example}

\subsection{Classical to Quantum Codes}

We will now instantiate the construction of Section~\ref{subsec:prelims_X_gen} with the classical code $C(\Delta)$, and a puncture set $\Gamma$.

Let $\Delta, \Gamma \subseteq 2^{[m]}$ be downsets. Write a generator matrix $\hat{G}(\Delta)$ for the code $C(\Delta)$, that is, $\hat{G}(\Delta) \in \mathbb{F}_2^{|\Delta| \times 2^m}$, and the rows of $\hat{G}(\Delta)$ form a basis for $C(\Delta)$. Permute the columns of $\hat{G}(\Delta)$ such that the columns corresponding to $\Gamma$ are the left-most $|\Gamma|$ columns.
\begin{claim}\label{claim:column_rank}
    Suppose $\Gamma \subseteq \Delta$. Then, the columns of $\hat{G}(\Delta)$ corresponding to $\Gamma$ are linearly independent.
\end{claim}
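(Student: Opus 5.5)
Columns of $\hat{G}(\Delta)$ indexed by $\Gamma$ are linearly independent exactly when no nonzero codeword of $C(\Delta)^\perp$ is supported inside $\Gamma$. Indeed, a dependency $\sum_{y\in\Gamma}\lambda_y \hat{G}_{\cdot,y}=0$ is the same as a vector $\lambda$ supported in $\Gamma$ that is orthogonal to every row, i.e.\ a dual codeword. By Lemma~\ref{lem:monomial_code_dual}, $C(\Delta)^\perp=C(\Delta^\perp)$. So the plan is to show that every nonzero element of $C(\Delta^\perp)$ has support meeting $\bar{\Gamma}$. Equivalently, I will show that puncturing $C(\Delta^\perp)$ at $\Gamma$ is injective.

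The natural route is to apply Lemma~\ref{lem:monomial_puncture_distance} to the pair of downsets $(\Delta^\perp,\Gamma)$. The lemma requires the non-annihilation hypothesis: $\bar{A}\in\bar{\Gamma}$ for every $A\in\Delta^\perp$. By Remark~\ref{rmk:equivalent_dual_downset_characterisation}, $A\in\Delta^\perp$ means $\bar{A}\notin\Delta$. Since $\Gamma\subseteq\Delta$, this gives $\bar{A}\notin\Gamma$, i.e.\ $\bar{A}\in\bar{\Gamma}$, so the hypothesis holds.

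The lemma then gives $d(C(\Delta^\perp,\Gamma))=\min_{A\in\Delta^\perp}|\{V\subseteq\bar{A}:V\in\bar{\Gamma}\}|$. Each term is at least $1$, because $V=\bar{A}$ itself qualifies. The lemma's proof in fact shows more than a distance bound: every nonzero $f\in C(\Delta^\perp)$ has nonzero weight on $\bar{\Gamma}$. Concretely, taking $A$ to be a maximal monomial of $f$, the parity argument shows that $f$ is supported somewhere in $\mathcal{S}_{\bar{A}}$. We also have $\mathcal{S}_{\bar{A}}\subseteq\bar{\Gamma}$, since its elements contain $\bar{A}\notin\Gamma$ and $\Gamma$ is a downset. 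Hence no nonzero dual codeword lives inside $\Gamma$, and the columns are independent.

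The only subtle point is making sure the lemma's lower-bound argument applies to every nonzero codeword, and not merely to a minimum-weight one. If one prefers to avoid leaning on the lemma's internals, the same conclusion can be proved directly. Take any nonzero $f=\sum_{B\in\Delta^\perp}c_Bx^B$ with a maximal monomial $A$, and compute $\sum_{y\in\mathcal{S}_{\bar{A}}}f(y)=c_A=1$, exactly as in the lemma. This is a routine re-run of that calculation. Empty $\Delta^\perp$ is trivial, since then there is no nonzero dual codeword at all.
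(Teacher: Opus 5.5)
Your proof is correct, but it takes a different route from the paper's.

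The paper argues directly on the generator matrix. Since $\Gamma\subseteq\Delta$, the rows $\ev(x^\gamma)$ for $\gamma\in\Gamma$ can be taken as the first $|\Gamma|$ rows. The $|\Gamma|\times|\Gamma|$ submatrix on the columns indexed by $\Gamma$ then has entries $x^{\gamma_j}(\gamma_i)=1$ if and only if $\gamma_j\subseteq\gamma_i$. Listing $\Gamma$ in order of nondecreasing size makes this submatrix unitriangular, hence invertible, so the columns are independent.

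You instead dualise: a column dependency is a nonzero word of $C(\Delta)^\perp=C(\Delta^\perp)$ supported inside $\Gamma$. You rule this out with the parity computation $\sum_{y\in\mathcal{S}_{\bar{A}}}f(y)=c_A=1$ over the supersets of $\bar{A}$, where $A$ is a maximal monomial of $f$. You then use $\bar{A}\notin\Delta\supseteq\Gamma$ and the downset property of $\Gamma$ to place $\mathcal{S}_{\bar{A}}$ inside $\bar{\Gamma}$. You also correctly identify the one subtlety. The distance formula of Lemma~\ref{lem:monomial_puncture_distance} alone does not give injectivity, because a codeword vanishing on $\bar{\Gamma}$ would never appear among the nonzero words of the punctured code. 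What is needed is the lemma's lower-bound argument, which bounds the weight on $\bar{\Gamma}$ of every nonzero $f$, and that argument does apply.

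The paper's argument is more elementary and more general: it uses neither code duality nor the fact that $\Delta$ and $\Gamma$ are downsets, only $\Gamma\subseteq\Delta$. Your route costs those extra hypotheses, though both hold in context. In exchange, it shows that the claim is exactly the non-annihilation (injectivity) condition for $\punc_\Gamma C(\Delta^\perp)$. Proposition~\ref{prop:downsets_to_quantum_code} verifies that condition anyway before invoking Lemma~\ref{lem:monomial_puncture_distance}. So under your approach the column-rank claim and the distance analysis rest on a single mechanism. This also matches the introduction's formulation that $C^\perp$ should have no nonzero codeword supported inside $\Gamma$.
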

\begin{proof}
    Choose the first $|\Gamma|$ rows of $\hat{G}(\Delta)$ to correspond to the monomials $\{x^\gamma: \gamma \in \Gamma\}$. Consider the $|\Gamma| \times |\Gamma|$ top-left sub-matrix of $\hat{G}(\Delta)$. We may further pick an order for the monomials of $\Gamma$, say, $\Gamma = \{\gamma_1, \gamma_2, \ldots, \gamma_{|\Gamma|}\}$ such that if $j > i$ then $\gamma_j \not \subseteq \gamma_i$ (this may be achieved by simply writing $\emptyset \in \Gamma$ first, then all of the sets $\gamma \in \Gamma$ of size one, then all of the sets of size two, and so on). We have that $x^{\gamma_j}(\gamma_i) = 0$ if $\gamma_j \not\subseteq \gamma_i$, and $x^{\gamma_i}(\gamma_i) = 1$, and therefore this $|\Gamma| \times |\Gamma|$ sub-matrix is upper-triangular with ones on the diagonal. It follows that this sub-matrix is full-rank, and thus that the first $|\Gamma|$ columns of $\hat{G}(\Delta)$ are linearly independent.
\end{proof}
Because of Claim~\ref{claim:column_rank}, under the assumption that $\Gamma \subseteq \Delta$, we may perform row operations on $\hat{G}(\Delta)$ such that its left-most $|\Gamma|$ columns take the form
\begin{equation}
    \begin{pmatrix}
        I_{|\Gamma|}\\
        0
    \end{pmatrix},
\end{equation}
where $I_{|\Gamma|}$ is the $|\Gamma| \times |\Gamma|$ identity matrix. We write $\hat{G}(\Delta)$ as
\begin{equation}
    \hat{G}(\Delta) = \begin{pmatrix}
        I_{|\Gamma|} & G_1(\Delta, \Gamma)\\
        0 & G_0(\Delta, \Gamma)
    \end{pmatrix},
\end{equation}
where we have defined matrices $G_1(\Delta, \Gamma)$ and $G_0(\Delta, \Gamma)$. We write
\begin{equation}
    G(\Delta, \Gamma) \coloneq \begin{pmatrix}
        G_1(\Delta, \Gamma)\\
        G_0(\Delta, \Gamma)
    \end{pmatrix}.
\end{equation}
\begin{definition}[Quantum Code from a Decreasing Monomial Code Punctured at a Downset]
    Given downsets $\Delta, \Gamma \subseteq \mathbb{F}_2^m$ such that $\emptyset \in \Gamma \subseteq \Delta$, with the notation of the matrices in Equations~\eqref{eq:semi_systematic_form} and~\eqref{eq:X_stab_mat}, let $\mathcal{G}_1(\Delta, \Gamma)$ be the row span of $G_1(\Delta, \Gamma)$, let $\mathcal{G}_0(\Delta, \Gamma)$ be the row span of $G_0(\Delta, \Gamma)$, and let $\mathcal{G}(\Delta, \Gamma)$ be the row span of $G(\Delta, \Gamma)$. The quantum CSS code corresponding to the downsets $\Delta, \Gamma$ is denoted $\mathcal{Q}(\Delta, \Gamma)$. It is the quantum CSS code with $X$-stabiliser space $\mathcal{G}_0(\Delta, \Gamma)$, and $Z$-stabiliser space $\mathcal{G}(\Delta, \Gamma)^\perp$; see Section~\ref{subsec:prelims_X_gen}.
\end{definition}
\begin{proposition}\label{prop:downsets_to_quantum_code}
    Consider downsets $\Delta, \Gamma \subseteq \mathbb{F}_2^m$ for which $\emptyset \in \Gamma \subseteq \Delta$ and for which $\Delta \subseteq \Delta^\perp$, using the notation of Definition~\ref{def:dual_downset}. Then, $\mathcal{Q}(\Delta, \Gamma)$ has $|\Gamma|$ logical qubits, and moreover its distance is equal to its $Z$-distance, which is in turn
    \begin{equation}\label{eq:quantum_code_distance}
        d_Z(\mathcal{Q}(\Delta, \Gamma)) = \min_{A \in \Delta^\perp}|\{V \subseteq \bar{A}:V \in \bar{\Gamma}\}|.
    \end{equation}
    Moreover, if $C(\Delta)$ is an $8$-divisible classical code, the quantum code $\mathcal{Q}(\Delta, \Gamma)$ has a transversal $T$ gate (without Clifford corrections), with the natural choice of logical basis as in Proposition~\ref{prop:quantum_code_logical_properties}.
\end{proposition}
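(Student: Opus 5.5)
The plan is to reduce everything to the earlier general results: Proposition~\ref{prop:quantum_code_logical_properties} for the qubit count and the distance formula, Lemma~\ref{lem:monomial_puncture_distance} to evaluate that formula, and Proposition~\ref{prop:quantum_code_transversality_from_divisibility} for transversality. The only new work is identifying $\mathcal{G}_0(\Delta,\Gamma)^\perp$ and showing the inequality $d_Z \ge d(\mathcal{G}_0^\perp)$ is an equality.

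First, self-orthogonality. The hypothesis $\Delta\subseteq\Delta^\perp$ together with Lemma~\ref{lem:monomial_code_dual} gives $C(\Delta)\subseteq C(\Delta^\perp)=C(\Delta)^\perp$. Claim~\ref{claim:column_rank} (using $\Gamma\subseteq\Delta$) shows that the construction of Section~\ref{subsec:prelims_X_gen} applies with $k=|\Gamma|$. Proposition~\ref{prop:quantum_code_logical_properties} then gives $|\Gamma|$ logical qubits, and shows the distance equals $d_Z=\min_{f\in\mathcal{G}_0^\perp\setminus\mathcal{G}^\perp}|f|$. The transversality claim follows directly from Proposition~\ref{prop:quantum_code_transversality_from_divisibility} when $C(\Delta)$ is $8$-divisible.

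For the distance, note that $\mathcal{G}_0$ is $C(\Delta)$ shortened at $\Gamma$. Hence $\mathcal{G}_0^\perp=\punc_\Gamma(C(\Delta)^\perp)=C(\Delta^\perp,\Gamma)$. To apply Lemma~\ref{lem:monomial_puncture_distance} to the pair $(\Delta^\perp,\Gamma)$, we must check non-annihilation: for every $A\in\Delta^\perp$ we need $\bar A\notin\Gamma$. By Remark~\ref{rmk:equivalent_dual_downset_characterisation}, $A\in\Delta^\perp$ means $\bar A\notin\Delta$, and since $\Gamma\subseteq\Delta$ this gives $\bar A\notin\Gamma$. The lemma then yields
\[d(\mathcal{G}_0^\perp)=\min_{A\in\Delta^\perp}|\{V\subseteq\bar A:V\in\bar\Gamma\}|,\]
so $d_Z$ is at least this quantity.

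The main obstacle is the matching upper bound: a minimiser of the classical distance must be shown not to lie in $\mathcal{G}^\perp$. I would take $A\in\Delta^\perp$ attaining the minimum and use the codeword $f=\prod_{i\in A}(1+x_i)$ from the lemma's proof, restricted to $\bar\Gamma$. It suffices to show $f|_{\bar\Gamma}\notin\mathcal{G}^\perp$. Every row of $G$ is a codeword $c\in C(\Delta)$ restricted to $\bar\Gamma$. Since $f\in C(\Delta)^\perp$, we have $\langle f|_{\bar\Gamma},c|_{\bar\Gamma}\rangle=\langle f|_\Gamma,c|_\Gamma\rangle$. So I need some $c\in C(\Delta)$ with odd inner product against $f$ on $\Gamma$. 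Since $\emptyset\in\Gamma\subseteq\Delta$ and the columns of $\Gamma$ are independent (Claim~\ref{claim:column_rank}), the restriction map $C(\Delta)\to\mathbb{F}_2^\Gamma$ is surjective. Thus $c|_\Gamma$ can be chosen to be any vector in $\mathbb{F}_2^\Gamma$. It therefore suffices that $f|_\Gamma\neq 0$, and indeed $f(\emptyset)=1$ with $\emptyset\in\Gamma$. Hence $f|_{\bar\Gamma}\in\mathcal{G}_0^\perp\setminus\mathcal{G}^\perp$, which gives equality in Equation~\eqref{eq:quantum_code_distance}.
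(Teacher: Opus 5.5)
Your proposal is correct and follows essentially the same route as the paper. It gets self-orthogonality from $\Delta\subseteq\Delta^\perp$, identifies $\mathcal{G}_0^\perp = C(\Delta^\perp,\Gamma)$ with non-annihilation following from $\Gamma\subseteq\Delta$, and proves tightness by pairing $f=\prod_{i\in A}(1+x_i)$ against a codeword of $C(\Delta)$ whose restriction to $\Gamma$ has odd overlap with $f|_\Gamma$. The paper simply fixes that codeword to be the one whose restriction to $\Gamma$ is the indicator of $\emptyset$ (the row of $\hat{G}(\Delta)$ that yields a logical row of $G_1$), which is the special case of your surjectivity argument.
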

\begin{proof}
    $\Delta \subseteq \Delta^\perp$ implies that $C(\Delta) \subseteq C(\Delta^\perp) = C(\Delta)^\perp$, i.e., the code $C(\Delta)$ is self-orthogonal. The fact that the code has $|\Gamma|$ logical qubits, and its distance is equal to its $Z$-distance, follows from Proposition~\ref{prop:quantum_code_logical_properties}. Its $Z$-distance is then
    \begin{equation}
        d_Z = \min_{f \in \mathcal{G}_0(\Delta, \Gamma)^\perp \setminus \mathcal{G}(\Delta, \Gamma)^\perp}|f| \geq \min_{f \in \mathcal{G}_0(\Delta, \Gamma)^\perp \setminus\{0\}}|f|,
    \end{equation}
    the classical distance of the code $\mathcal{G}_0(\Delta, \Gamma)^\perp$. Since $\mathcal{G}_0(\Delta, \Gamma)$ is the classical code $C(\Delta)$ shortened at the points $\Gamma$, $\mathcal{G}_0(\Delta, \Gamma)^\perp$ is the classical code $C(\Delta)^\perp = C(\Delta^\perp)$ punctured at the points $\Gamma$, which we denoted $C(\Delta^\perp, \Gamma)$ in Definition~\ref{def:punc_monomial_codes}. Given $A \in \Delta^\perp$, we must have $\bar{A} \in \bar{\Gamma}$, since if $\bar{A} \in \Gamma$, then $\Gamma \subseteq \Delta$ implies $\bar{A} \in \Delta$: a contradiction. The fact that $d_Z$ is lower-bounded by the claimed expression then follows from Lemma~\ref{lem:monomial_puncture_distance}.

    Finally, let us show that the $Z$-distance is actually equal to the claimed expression, i.e., the inequality is an equality. Let $A \in \Delta^\perp$ be such that
    \begin{equation}
        |\{V \subseteq \bar{A}: V \in \bar{\Gamma}\}| = \min_{A' \in \Delta^\perp} |\{V \subseteq \bar{A'}:V \in \bar{\Gamma}\}|,
    \end{equation}
    and consider
    \begin{equation}
        f = \prod_{i \in A}(1+x_i),
    \end{equation}
    for which $\ev(f)$ is in $C(\Delta^\perp) = C(\Delta)^\perp$ since $\Delta^\perp$ is a downset. Its evaluation vector restricted to $\bar{\Gamma}$, $\ev(f|_{\bar{\Gamma}})$, has weight $|\{V \subseteq \bar{A}: V \in \bar{\Gamma}\}|$, and lies in $\punc_{\Gamma}C(\Delta^\perp) = \mathcal{G}_0(\Delta, \Gamma)^\perp$, and thus corresponds to an undetectable $Z$ operator. To see it is not a $Z$-stabiliser, we must show that $\ev(f|_{\bar{\Gamma}})$ does not lie in $\mathcal{G}(\Delta, \Gamma)^\perp$, which we do by showing that it is not orthogonal to some row of $G$. Given $\emptyset \in \Gamma$,\footnote{Since $\Gamma$ is a downset, saying $\emptyset \in \Gamma$ is equivalent to saying $\Gamma$ is non-empty.} looking at the form of $\hat{G}(\Delta)$ in Equation~\eqref{eq:semi_systematic_form}, there is a function $h$, whose evaluation over $\mathbb{F}_2^m$ is in $C(\Delta)$, whose support on $\bar{\Gamma}$ is equal to a row of $G_1(\Delta, \Gamma)$, and whose support on $\Gamma$ has weight one: supported exactly at $\emptyset$. Since $f(\emptyset) = 1$, we have
    \begin{align}
        0 &= f(\emptyset)\cdot h(\emptyset) + \ev(f|_{\bar{\Gamma}})\cdot\ev(h|_{\bar{\Gamma}})\\
        &= 1 + \ev(f|_{\bar{\Gamma}})\cdot\ev(h|_{\bar{\Gamma}}),
    \end{align}
    and so $1 = \ev(f|_{\bar{\Gamma}})\cdot\ev(h|_{\bar{\Gamma}})$, meaning that indeed $\ev(f|_{\bar{\Gamma}})$ is not orthogonal to some row of $G_1(\Delta, \Gamma)$.

    The transversality of the $T$ gate follows from Proposition~\ref{prop:quantum_code_transversality_from_divisibility}.
\end{proof}

\begin{example}
    The choices for the quantum code in Hastings and Haah~\cite{hastings2018distillation} can be reproduced with the choices of the downsets $\Delta$ and $\Gamma$ as in Example~\ref{example:hh_first}. It is easy to see that the dual downset to $\Delta$ is
    \begin{equation}
        \Delta^\perp = \{A \subseteq [m]: |A| \leq m-r-1\}.
    \end{equation}
    Then, as long as we choose $w < r$ (to get $\Gamma \subseteq \Delta$) and $2r < m$ (to get $r \leq m-r-1$, and thus $\Delta \subseteq \Delta^\perp$), Proposition~\ref{prop:downsets_to_quantum_code} gives us that the quantum code $\mathcal{Q}(\Delta, \Gamma)$ has a number of logical qubits
    \begin{equation}
        |\Gamma| = \begin{pmatrix}
            m\\ \leq w
        \end{pmatrix} \coloneq \sum_{i=0}^w\begin{pmatrix}
            m\\i
        \end{pmatrix},
    \end{equation}
    which reproduces~\cite{hastings2018distillation}.
    Moreover, the distance of $\mathcal{Q}(\Delta, \Gamma)$ is then
    \begin{equation}
        \min_{A \in \Delta^\perp} |\{V \subseteq \bar{A}: V \notin \Gamma\}|.
    \end{equation}
    Considering any $A \in \Delta^\perp$, $A$ is a set of size $|A| \leq m-r-1$, and $\bar{A}$ is a set of size $m-|A|$. The number of its subsets that are then not in $\Gamma$ is then
    \begin{equation}
        \begin{pmatrix}
            m-|A|\\>w
        \end{pmatrix} \coloneq \sum_{i=w+1}^{m-|A|}\begin{pmatrix}
            m-|A|\\i
        \end{pmatrix}.
    \end{equation}
    This is minimised by taking $|A| = m-r-1$, which reproduces the distance in~\cite{hastings2018distillation}. Moreover, if $3r<m$, then $\Delta$ satisfies the conditions of Proposition~\ref{prop:monomial_divisibility}, and $\mathcal{Q}(\Delta, \Gamma)$ has a transversal $T$ gate, also reproducing~\cite{hastings2018distillation}.
\end{example}

\section{Explicit Construction}\label{sec:explicit_construction}

\subsection{Weighting with One Heavy Variable}

We now consider a very particular choice of the above weights, and a deliberate choice of downsets.
\begin{lemma}\label{lem:weighted_params}
    Let $\boldsymbol{s}$ be a vector of length $m$, where
\begin{equation}
    \boldsymbol{s} = (\underbrace{1, 1, 1, \ldots, 1, 1}_{m-1}, h),
\end{equation}
for some positive integer $h$. In words, $\boldsymbol{s}$ has a $1$ in every entry, except the last, where it has entry $h$. We further consider positive integers $w,r$ for which $w,h \leq r$ and $2r \leq m-2+h$. With these choices, we consider the downsets 
\begin{align}
    \Delta &= \mathcal{W}_{\boldsymbol{s},r}\\
    \Gamma &= \{A \subseteq [m-1]: |A| \leq w\}.
\end{align}
In particular, $\Gamma$ is a subset of $2^{[m-1]}$, the set of subsets of $[m-1]$. Then, the corresponding quantum code arising from Proposition~\ref{prop:downsets_to_quantum_code} is well-defined, and has parameters $[[n,k,d]]$, where
\begin{align}
    n &= 2^m - k\\
    k &= \sum_{i=0}^{w}\begin{pmatrix}
        m-1\\i
    \end{pmatrix}\\
    d &= \min\left\{\begin{pmatrix}r+1\\>w\end{pmatrix}, 2^{r-h+1} + \begin{pmatrix}r-h+1\\>w\end{pmatrix}\right\},\label{eq:wrm_distance_minimum}
\end{align}
where $\left(\begin{smallmatrix}A \\ >B\end{smallmatrix}\right) \coloneq \sum_{i=B+1}^A \left(\begin{smallmatrix}A\\i\end{smallmatrix}\right)$, and the sum is taken to be zero if $B+1 > A$.

Moreover, if the three conditions
\begin{align}
    3r &< m-1+h,\label{eq:weighted_divis_condition_first}\\
    2r &< m-1,\\
    r &< m-2\label{eq:weighted_divis_condition_last}
\end{align}
all hold, then the corresponding quantum code supports a transversal $T$ gate.
\end{lemma}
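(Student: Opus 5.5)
The plan is to verify the hypotheses of Proposition~\ref{prop:downsets_to_quantum_code}, read off $k=|\Gamma|$, evaluate the distance formula~\eqref{eq:quantum_code_distance} by a two-case minimisation, and obtain transversality from Lemma~\ref{lem:divis_weighted_RM}. Throughout, let $S=\sum_i s_i = m-1+h$.

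\textbf{Hypotheses and $k$.} First, $\Gamma$ is clearly a downset containing $\emptyset$. Every $A\in\Gamma$ satisfies $A\subseteq[m-1]$, so $|A|_{\boldsymbol{s}}=|A|\le w\le r$, giving $\Gamma\subseteq\Delta$. For self-orthogonality, Proposition~\ref{prop:dual_weighted_rm} gives $\Delta^\perp=\mathcal{W}_{\boldsymbol{s},S-r-1}$; the assumption $S\ge r+1$ follows easily from $h\le r$ and $2r\le m-2+h$. The containment $\Delta\subseteq\Delta^\perp$ then holds iff $r\le m-2+h-r$, which is exactly the hypothesis $2r\le m-2+h$. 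So Proposition~\ref{prop:downsets_to_quantum_code} applies, with $k=|\Gamma|=\sum_{i\le w}\binom{m-1}{i}$ and $n=2^m-k$.

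\textbf{Distance.} Using Remark~\ref{rmk:equivalent_dual_downset_characterisation}, substitute $A\mapsto\bar A$ in~\eqref{eq:quantum_code_distance} to rewrite the distance as $d=\min_{A\notin\Delta}|2^A\setminus\Gamma|$. This quantity is monotone non-decreasing in $A$ under inclusion, so it suffices to minimise over small sets outside $\Delta$. I would split into two cases.
\begin{itemize}
\item \emph{Case $m\notin A$.} Then $A\subseteq[m-1]$ and $|A|\ge r+1$. The subsets of $A$ outside $\Gamma$ are exactly those of size $>w$, giving $\binom{|A|}{>w}$. This is minimised at $|A|=r+1$, which is achievable because $h\le r$ and $2r\le m-2+h$ together imply $r+1\le m-1$.
\item \emph{Case $m\in A$.} Write $A=A'\sqcup\{m\}$ with $A'\subseteq[m-1]$ and $|A'|\ge r-h+1$. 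All $2^{|A'|}$ subsets of $A$ containing $m$ lie outside $\Gamma$, since $\Gamma\subseteq 2^{[m-1]}$. The subsets of $A'$ contribute a further $\binom{|A'|}{>w}$. The total is increasing in $|A'|$, so it is minimised at $|A'|=r-h+1\ge1$, which is achievable.
\end{itemize}
Taking the smaller of the two cases gives~\eqref{eq:wrm_distance_minimum}, and equality is already guaranteed by Proposition~\ref{prop:downsets_to_quantum_code}.

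\textbf{Transversality.} Apply Lemma~\ref{lem:divis_weighted_RM} with $s^{(1)}=h$ and $s^{(2)}=1$ (if $h=1$, all weights are $1$ and the same values apply). Its three conditions become $3r<m-1+h$, $2r<m-1$ and $r<m-2$, which are exactly~\eqref{eq:weighted_divis_condition_first}--\eqref{eq:weighted_divis_condition_last}. Hence $C(\Delta)$ is $8$-divisible, and Proposition~\ref{prop:downsets_to_quantum_code} gives the transversal $T$ gate.

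The main care point is the distance minimisation: one must check that the monotonicity reduction is valid and that the minimising sizes $r+1$ and $r-h+1$ are actually achievable within $[m-1]$. Everything else is bookkeeping.
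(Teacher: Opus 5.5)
Your proposal is correct and follows essentially the same route as the paper: check $\emptyset\in\Gamma\subseteq\Delta$ and $\Delta\subseteq\Delta^\perp$ via Proposition~\ref{prop:dual_weighted_rm}, split the distance minimisation according to whether the heavy variable $m$ lies in the relevant set, and obtain transversality from Lemma~\ref{lem:divis_weighted_RM} with $s^{(1)}=h$, $s^{(2)}=1$. The differences are cosmetic. You index by $A\notin\Delta$ rather than by $A\in\Delta^\perp$ and its complement. You also explicitly verify $S\ge r+1$ and that the minimising sizes $r+1$ and $r-h+1$ are attainable (using $r\le m-2$), which the paper leaves implicit.
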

\begin{proof}
    The downset $\Delta$ contains two types of points: those that are subsets of $[m-1]$, and have size at most $r$, and those that contain the final variable $m$, and at most $r-h$ out of the variables in $[m-1]$. Thus, the requirement that $w \leq r$ gives us $\Gamma \subseteq \Delta$. By Proposition~\ref{prop:dual_weighted_rm}, we have $\Delta^\perp = \mathcal{W}_{\boldsymbol{s},m-2+h-r}$, and so the requirement that $2r \leq m-2+h$ gives $r \leq m-2+h-r$, and therefore $\Delta \subseteq \Delta^\perp$. Proposition~\ref{prop:downsets_to_quantum_code} gives a well-defined quantum code $\mathcal{Q}(\Delta,\Gamma)$ with $|\Gamma| = \left(\begin{smallmatrix}m-1\\\leq w\end{smallmatrix}\right) \coloneq \sum_{i=0}^w\left(\begin{smallmatrix}m-1\\i\end{smallmatrix}\right)$ logical qubits.

    Let us now establish the distance of the quantum code via Equation~\eqref{eq:quantum_code_distance}. Consider any $A \in \Delta^\perp = \mathcal{W}_{\boldsymbol{s},m-2+h-r}$. First, suppose that $A$ does not contain the last variable of weight $h$, that is, $m \notin A$. Then, $A$ is some subset of $\{1, 2, \ldots, m-1\}$ of size $|A|\leq m-2+h-r$. We then have
    \begin{equation}
        \bar{A} = A' \sqcup \{m\},
    \end{equation}
    where $A'$ is some subset of $\{1, 2, \ldots, m-1\}$ of size $|A'| = (m-1)-|A| \geq r-h+1$. Consider a subset $V\subseteq \bar{A}$ that does not contain $m$. We have $V \in \bar{\Gamma}$ if and only if $V$ has size greater than $w$. There are $\left(\begin{smallmatrix}|A'|\\>w\end{smallmatrix}\right) \coloneq \sum_{i= w +1}^{|A'|}\left(\begin{smallmatrix}|A'|\\i\end{smallmatrix}\right)$ such subsets $V$, where we define the sum to be zero if $w + 1 > |A'|$. On the other hand, a subset $V \subseteq \bar{A}$ that does contain $m$ is automatically in $\bar{\Gamma}$. There are $2^{|A'|}$ such subsets $V$. In total, we have the number of subsets $V \subseteq \bar{A}$ for which $V \in \bar{\Gamma}$ as $2^{|A'|} + \left(\begin{smallmatrix}|A'|\\>w\end{smallmatrix}\right)$, whose attained minimum value is $2^{r-h+1} + \left(\begin{smallmatrix}r-h+1\\>w\end{smallmatrix}\right)$.

    On the other hand, suppose that $A \in \Delta^\perp = \mathcal{W}_{\boldsymbol{s}, m-2+h-r}$ does contain the last variable of weight $h$, that is, $m \in A$. Then, $A$ contains at most $m-2-r$ of the variables $\{1, \ldots, m-1\}$, as well as $m$. We then have that $\bar{A} \subseteq \{1, \ldots, m-1\}$, and $|\bar{A}| \geq r+1$. We then consider a subset $V \subseteq \bar{A}$. $V$ does not contain $m$, and so $V \in \bar{\Gamma}$ if and only if it has size greater than $w$. There are $\left(\begin{smallmatrix}|\bar{A}|\\>w\end{smallmatrix}\right)$ such subsets $V$. Here, the attained minimum value is $\left(\begin{smallmatrix}r+1\\>w\end{smallmatrix}\right)$, and we have the claimed distance.

    The transversality of the $T$ gate follows from Lemma~\ref{lem:divis_weighted_RM}, since one notes that Equations~\eqref{eq:weighted_divis_condition_first} to~\eqref{eq:weighted_divis_condition_last} give the conditions of that lemma, with the given choice of $\boldsymbol{s}$.
\end{proof}
\subsection{Explicit Asymptotic Parameters: Sub-Constant-Rate Regime}\label{subsec:explicit_asymptotic_sub_constant_rate}

We will now prove our results on the asymptotic parameters of explicit families of quantum CSS codes with transversal $T$ gates in the sub-constant-rate regime by developing the codes in Lemma~\ref{lem:weighted_params} into asymptotic families. Specifically, in this subsection, we will prove Theorem~\ref{thm:sub_constant_explicit}. We consider families of codes in Lemma~\ref{lem:weighted_params} indexed by the variable $m$. A given family will then be specified by choices of $w$ and $h$ for each $m$, that is, functions $w(m)$ and $h(m)$. One could choose to independently vary the value of $r$ for each $m$, that is, the choice of $r(m)$. However, we always let $r(m)$ be the largest integer such that Equations~\eqref{eq:weighted_divis_condition_first} to~\eqref{eq:weighted_divis_condition_last} are satisfied. Notice that if those equations are satisfied, then $2r \leq m-2+h$ is automatic.

We begin by considering a setup to achieve the best asymptotic parameters we can with this construction for which the rate is sub-constant. $w$ and $h$ are chosen to have a linear relation with $m$. That is, we let $w(m)$ and $h(m)$ be positive integers such that
\begin{align}
    w(m) &= p\cdot m + \mathcal{O}(1),\\
    h(m) &= q\cdot m + \mathcal{O}(1),
\end{align}
for constants $p \in (0,1/2)$ and $q \in [0,1]$. In this setup, a given family is specified exactly by our choices of the constants $p$ and $q$. We also write $r(m)$ as a sequence of positive integers, where
\begin{equation}
    r(m) = q'\cdot m + \mathcal{O}(1),
\end{equation}
for $q'= \min\left\{\frac{1}{2}, \frac{1+q}{3}\right\}$, which ensures Equations~\eqref{eq:weighted_divis_condition_first} to~\eqref{eq:weighted_divis_condition_last} (by taking the $\mathcal{O}(1)$ term in $r(m)$ small enough). We require $p,q < q'$ to ensure $w,h \leq r$. Given that, we see that we must have $q < \frac{1}{2}$, and therefore
\begin{equation}
    q' = \frac{1+q}{3}.
\end{equation}
Turning to the quantum code parameters, we have the length of the code
\begin{equation}
    n = 2^m - k,
\end{equation}
where the dimension is
\begin{equation}
    k \geq \begin{pmatrix}
        m-1\\w
    \end{pmatrix} = 2^{m(H_2(p) + o(1))},
\end{equation}
where $H_2(\cdot)$ is the binary entropy function, and we use the standard estimate for the asymptotic scaling of the binomial coefficient. Since $n \leq 2^m$, we obtain the dimension scaling
\begin{equation}
    k = \Omega\left(n^{H_2(p) + o(1)}\right).
\end{equation}
For the distance, we consider the two terms in the minimum in Equation~\eqref{eq:wrm_distance_minimum}. First considering $\left(\begin{smallmatrix}r+1\\>w\end{smallmatrix}\right)$, we note that if $p > \frac{q'}{2}$, the sum does not contain the centre of the binomial distribution, and so we have $\left(\begin{smallmatrix}r+1\\>w\end{smallmatrix}\right) = 2^{m\left(q'H_2(p/q')+o(1)\right)}$. On the other hand, if $p \leq \frac{q'}{2}$, the sum does contain the centre of the binomial distribution, and we have $\left(\begin{smallmatrix}r+1\\>w\end{smallmatrix}\right) = 2^{m\left(q'+o(1)\right)}$. It is convenient to define the function $\Psi$ given by
\begin{equation}
    \Psi(x) \coloneq \begin{cases}
        1 &\text{ if } x \in [0,1/2]\\
        H_2(x) &\text{ if } x \in [1/2,1]
    \end{cases}.
\end{equation}
We then have
\begin{equation}
    \begin{pmatrix}
        r+1\\>w
    \end{pmatrix} = 2^{m\left(q'\Psi(p/q')+o(1)\right)}.
\end{equation}
For the other term, we have
\begin{equation}
    2^{r-h+1} + \left(\begin{smallmatrix}r-h+1\\>w\end{smallmatrix}\right) \geq 2^{r-h+1} = 2^{m\left(q'-q + o(1)\right)}.
\end{equation}
The aim then becomes, for a fixed $p$ (specifying the scaling of the dimension) to pick the $q$ giving the optimum distance scaling, that is, to choose $q$ maximising
\begin{equation}
    \min\left\{q'-q, q'\Psi(p/q')\right\} = \min\left\{\frac{1-2q}{3},\frac{1+q}{3}\Psi\left(\frac{3p}{1+q}\right)\right\}
\end{equation}
subject to
\begin{equation}\label{eq:wrm_subconstant_feasibility}
    0 \leq q < \frac{1}{2}\;\;\;\text{ and }\;\;\; 3p-1 < q.
\end{equation}
We denote by $q^*(p)$ the optimum choice of $q$ given a value of $p$.
Let
\begin{align}
    A(q) &\coloneq \frac{1-2q}{3}\\
    B(q) &\coloneq \frac{1+q}{3}\Psi\left(\frac{3p}{1+q}\right).
\end{align}
We have that $A(q)$ is a strictly decreasing function of $q$, whereas $B(q)$ is strictly increasing in $q$, for any $p$, and so the optimum choice of $q$ is that with $A(q) = B(q)$, as long as that solution satisfies Equation~\eqref{eq:wrm_subconstant_feasibility}. We will show that this is
\begin{equation}
    q^*(p) = \begin{cases}
        0 &\text{ if } 0 < p \leq 1/6\\
        \text{The unique positive solution to } 1-2q = (1+q)H_2\left(\frac{3p}{1+q}\right) &\text{ if } 1/6< p < 1/2
    \end{cases}.
\end{equation}
Indeed, for $p \in (0,1/6]$, $A(0) = B(0)$, and $q=0$ satisfies Equation~\eqref{eq:wrm_subconstant_feasibility}. Now suppose that $p > 1/6$. If, in addition, $p < \frac{1}{3}$, then $q=0$ is still a feasible solution, i.e., it satisfies Equation~\eqref{eq:wrm_subconstant_feasibility}. We have $A(0) = \frac{1}{3} > B(0)$. On the other hand, if $p \geq \frac{1}{3}$, then $0$ is not a feasible solution, but we may check that as $q$ tends to $3p-1$ from above, we have $A(q) \to 1-2p > 0$, and $B(q) \to 0$. For all $p \in (\frac{1}{6},\frac{1}{2})$, as $q$ tends to $\frac{1}{2}$ from below, $A(q)$ tends to $0$, and $B(q)$ tends to a positive value. Thus, $A(q) = B(q)$ for a unique positive value of $q$. Moreover, it must be at a value for which $\Psi(x) = H_2(x)$, since otherwise one would have $\frac{1-2q}{3} = \frac{1+q}{3}$, giving $q = 0$.

We conclude that we may obtain parameters
\begin{align}
    k &= \Omega\left(n^{\alpha + o(1)}\right)\\
    d &= \Omega\left(n^{\beta_{\text{explicit}}(\alpha)+o(1)}\right),
\end{align}
where
\begin{equation}
    \beta_{\text{explicit}}(\alpha) = \begin{cases}
        \frac{1}{3} &\text{ if } 0 < \alpha \leq H_2(1/6)\\
        \frac{1-2q^*(\alpha)}{3} &\text{ if } H_2(1/6) < \alpha < 1
    \end{cases},
\end{equation}
where
\begin{equation}
    q^*(\alpha)\text{ is the unique positive solution to }1-2q = (1+q)H_2\left(\frac{3H_2^{-1}(\alpha)}{1+q}\right),
\end{equation}
where $H_2^{-1}:(0,1) \to (0,1/2)$ is the inverse binary entropy function. This concludes the proof of Theorem~\ref{thm:sub_constant_explicit}.
\begin{remark}
    Note that the parameters obtained with this weighted Reed-Muller construction coincide with the Hastings-Haah construction~\cite{hastings2018distillation} when $q=0$. In our construction, this is found to be the optimal choice in the regime $\alpha \leq H_2(1/6) \approx 0.650$, but not for $\alpha > H_2(1/6)$. Furthermore, we will shortly see that choosing $q>0$ will allow us to obtain codes with constant rate and growing distance.
\end{remark}

\subsection{Explicit Asymptotic Parameters: Constant-Rate Regime}\label{subsec:explicit_constant_rate}

In this subsection, we will prove Theorem~\ref{thm:constant_explicit} on the parameters we can achieve with explicit families in the constant-rate regime. This is essentially a more careful handling of the edge case $p,q \to \frac{1}{2}$ in the previous subsection. 

Once again, we will specify a family indexed by $m$ with functions $w(m)$ and $h(m)$, since $r(m)$ will always be chosen to be the maximum choice for which the transversality conditions, Equations~\eqref{eq:weighted_divis_condition_first} to~\eqref{eq:weighted_divis_condition_last}, are satisfied. As in the previous case, $2r \leq m-2+h$ will be satisfied automatically.

Let us suppose that $R \in (0,1/3)$ is the target constant rate we wish for our quantum code to achieve. Let $\kappa_R \coloneq \frac{2R}{1+R}$ and $c_R \coloneq -\Phi^{-1}(\kappa_R)>0$, where $\Phi$ is the cumulative distribution function of the standard Gaussian, noting that $c_R > 0$. We let $w(m)$ and $h(m)$ be positive integers satisfying
\begin{align}
    w(m) &= \frac{m}{2} - \frac{c_R}{2}\sqrt{m-1} + \mathcal{O}(1),\\
    h(m) &= \frac{m}{2} - 3c_R\sqrt{m-1}\left(\frac{1}{2}-\frac{2}{\log_2(m-1)}\right)+\mathcal{O}(1).
\end{align}
On the other hand, in order to ensure Equations~\eqref{eq:weighted_divis_condition_first} to~\eqref{eq:weighted_divis_condition_last}, we may let $r(m)$ be positive integers satisfying
\begin{equation}
    r(m) = \frac{m}{2}-c_R\sqrt{m-1}\left(\frac{1}{2}-\frac{2}{\log_2(m-1)}\right) + \mathcal{O}(1).
\end{equation}
One can check that this gives the conditions of Equations~\eqref{eq:weighted_divis_condition_first} to~\eqref{eq:weighted_divis_condition_last}, as long as the $\mathcal{O}(1)$ term in $r(m)$ is taken to be small enough. Moreover, one can see that $w,h \leq r$ are automatic.

Turning to the code parameters, we have $n = 2^m-k$, where the dimension $k$ is
\begin{equation}
    \frac{k}{2^m} = \frac{1}{2^m}\sum_{i=0}^w\begin{pmatrix}
        m-1\\i
    \end{pmatrix} = \frac{\kappa_R}{2} + o(1),
\end{equation}
using the central limit theorem for the binomial distribution.
We then have
\begin{equation}
    \frac{k}{n} = \frac{k}{2^m-k} = \frac{1}{\frac{2^m}{k}-1} = \frac{1}{\frac{2}{\kappa_R}-1} + o(1) = R + o(1).
\end{equation}
On the other hand, for the distance, we must again separately consider the two terms in the minimum of Equation~\eqref{eq:wrm_distance_minimum}. We begin with the first term. We have
\begin{equation}
    \begin{pmatrix}
        r+1\\>w
    \end{pmatrix} = \sum_{i=0}^{r-w}\begin{pmatrix}
        r+1\\i
    \end{pmatrix} \geq \begin{pmatrix}
        r+1\\r-w
    \end{pmatrix}.
\end{equation}
Therefore,
\begin{equation}
    \log_2\begin{pmatrix}
        r+1\\>w
    \end{pmatrix}\geq \log_2\begin{pmatrix}
        r+1\\r-w
    \end{pmatrix} \geq (r-w)\log_2\left(\frac{r+1}{r-w}\right)=\frac{2c_R\sqrt{m-1}}{\log_2(m-1)}\left(\frac{1}{2}\log_2(m)+o(\log_2(m))\right)
\end{equation}
and so
\begin{equation}
    \log_2\begin{pmatrix}
        r+1\\>w
    \end{pmatrix}\geq\sqrt{m}\left(c_R+o(1)\right).
\end{equation}
The second term in the minimum is $2^{r-h+1}+\left(\begin{smallmatrix}r-h+1\\>w\end{smallmatrix}\right)$, where we note the second term in this expression is zero for all $m$ large enough. On the other hand, we have
\begin{equation}
    r-h+1 = \sqrt{m}\left(c_R+o(1)\right),
\end{equation}
and so in total we obtain the distance scaling
\begin{equation}
    d \geq 2^{\left(c_R+o(1)\right)\sqrt{m}}.
\end{equation}
We have $\log_2(n) = m + O(1)$, and so $\sqrt{m} = \sqrt{\log_2(n)} + o(1)$, which gives the desired distance scaling, and the proof of Theorem~\ref{thm:constant_explicit}.

\section{Randomised Construction}\label{sec:random_construction}

In this section, we will develop our randomised construction. In the last section, we developed our explicit constructions by taking the downset $\Delta$ to come from a weighted Reed-Muller code, where the weighted Reed-Muller code uses the weights
\begin{equation}
    \boldsymbol{s} = (\underbrace{1, 1, 1, \ldots, 1, 1}_{m-1}, h),
\end{equation}
for some integer $h \geq 1$. That is, the first $m-1$ variables all have weight $1$, whereas the last variable can be heavier. In particular, we chose $\Delta = \mathcal{W}_{\boldsymbol{s},r}$ in the notation of Definition~\ref{def:weighted_RM}. We then punctured at some downset $\Gamma \subseteq 2^{[m-1]}$: in particular
\begin{equation}
    \Gamma = \{A \subseteq [m-1]: |A| \leq w\}
\end{equation}
for some $w \leq r$ so that $\Gamma \subseteq \Delta$. In this section, we will use the same form for the downset $\Delta$, but show that a more careful choice of the form of $\Gamma$ can yield improved parameters.

\subsection{Hypergraphs and Protected Points}

\begin{definition}[Hypergraph protecting the $y$-sets]\label{def:hypergraph_protecting_y_sets}
    Let $\mathcal{H}$ be a $t$-uniform hypergraph on $m-1$ points for some positive integer $t$. Let $y\geq t$ be a positive integer. We say that $\mathcal{H}$ \textit{protects the $y$-sets} if every subset of $[m-1]$ of size $y$ contains some edge of $\mathcal{H}$. We let $|\mathcal{H}|$ denote the number of hyperedges in a hypergraph $\mathcal{H}$.
\end{definition}
\begin{lemma}[Existence of a hypergraph protecting the $y$-sets]\label{lem:hypergraph_existence}
    Given positive integers $m,y,t$ such that $m-1 \geq y \geq t$, there exists a $t$-uniform hypergraph on $m-1$ points protecting the $y$-sets for which
    \begin{equation}
        |\mathcal{H}| \leq\left\lceil\frac{\begin{pmatrix}
            m-1\\t
        \end{pmatrix}}{\begin{pmatrix}
            y\\t
        \end{pmatrix}}\left(\ln\begin{pmatrix}
            m-1\\y
        \end{pmatrix}+1\right)\right\rceil.
    \end{equation}
\end{lemma}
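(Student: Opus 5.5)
The bound has the shape of the classical covering bound: we must cover the family of $y$-subsets of $[m-1]$ with $t$-subsets, where a $t$-set $E$ "covers" a $y$-set $L$ if $E \subseteq L$. I would prove it with a random selection followed by alteration, the standard argument behind the Lovász--Stein greedy covering bound, since that argument produces exactly the factor $\ln\binom{m-1}{y}+1$.

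\textbf{Step 1 (one random edge).} Fix a $y$-set $L \subseteq [m-1]$ and draw a $t$-subset $E$ of $[m-1]$ uniformly at random. Then
\begin{equation}
\Pr[E\subseteq L]=\frac{\binom{y}{t}}{\binom{m-1}{t}} \eqqcolon \rho.
\end{equation}

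\textbf{Step 2 (random hypergraph).} Take $N \coloneqq \left\lceil \rho^{-1}\ln\binom{m-1}{y}\right\rceil$ independent uniform $t$-sets as hyperedges. The probability that $L$ contains none of them is $(1-\rho)^N \le e^{-\rho N} \le \binom{m-1}{y}^{-1}$. Hence the expected number of unprotected $y$-sets is at most $\binom{m-1}{y}\cdot\binom{m-1}{y}^{-1}=1$.

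\textbf{Step 3 (alteration).} Fix a realisation in which the number $U$ of unprotected $y$-sets is at most its expectation. For each unprotected $L$, add one arbitrary $t$-subset of $L$ as a hyperedge; this is possible since $y\ge t$. The resulting hypergraph protects all $y$-sets and has at most $N+1$ edges.

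\textbf{Step 4 (the main obstacle: matching the ceiling).} The count $N+1$ is at most one more than the stated bound $\lceil \rho^{-1}(\ln\binom{m-1}{y}+1)\rceil$, so the final step is to absorb this excess cleanly.

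The first thing I would try is the standard tightened version. Choose $N' = \lceil \rho^{-1}\ln(\rho\binom{m-1}{y})\rceil$ when $\rho\binom{m-1}{y}\ge 1$. Then the expected number of unprotected sets is at most $\rho^{-1}$, and the total is at most $N'+\rho^{-1} \le \rho^{-1}(\ln\binom{m-1}{y}+1+\ln\rho)+1$. Since $\ln\rho\le 0$, this is within the bound, apart from the rounding to be checked.

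Alternatively, I would fall back on the deterministic greedy argument (Lovász, Stein). Every $t$-set lies in the same number $\binom{m-1-t}{y-t}$ of $y$-sets, so this is a regular covering instance, and greedy achieves at most $\rho^{-1}(1+\ln\binom{y'}{\cdot})$ edges, which gives the stated bound directly. The degree-counting in greedy is routine. The only care needed is bookkeeping of the ceiling and the degenerate cases: $y=m-1$, where one edge suffices, and $\rho=1$.

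Since the bound is only used asymptotically later, an additive constant discrepancy would be harmless, but I would aim to match the stated form exactly via the greedy route.
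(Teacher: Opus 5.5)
Your Steps 1--3 are correct and follow the paper's argument: independent uniform $t$-sets, $\Pr[E\subseteq L]=\rho$, $(1-\rho)^N\le e^{-\rho N}$, and a union bound over the $\binom{m-1}{y}$ sets of size $y$, with a one-edge alteration added. These steps already prove the lemma, because the obstacle you raise in Step 4 does not exist. Since $\binom{y}{t}\le\binom{m-1}{t}$, we have $\rho^{-1}\ge 1$, and hence
\[
N+1=\left\lceil \rho^{-1}\ln\binom{m-1}{y}+1\right\rceil\le\left\lceil \rho^{-1}\left(\ln\binom{m-1}{y}+1\right)\right\rceil .
\]

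The paper's route is even more direct. It draws a number of random hyperedges equal to the stated ceiling, so the union bound gives failure probability at most $\binom{m-1}{y}e^{-(\ln\binom{m-1}{y}+1)}=e^{-1}<1$. Some draw therefore protects every $y$-set with no alteration at all; the ``$\le$'' in the statement only accounts for repeated draws.

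Your alternatives in Step 4 also work, but they are unnecessary. The case condition in the tightened route always holds, since $\rho\binom{m-1}{y}=\binom{m-1-t}{y-t}\ge 1$. The degree you left as $\binom{y'}{\cdot}$ in the greedy bound is this same $\binom{m-1-t}{y-t}$. Both routes then give the slightly sharper bound $\rho^{-1}\bigl(1+\ln\binom{m-1-t}{y-t}\bigr)$.
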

\begin{proof}
    Suppose we define $\mathcal{H}$ by picking $L$ hyperedges, each of size $t$, by choosing each uniformly from the $\left(\begin{smallmatrix}m-1\\t\end{smallmatrix}\right)$ possibilities. Now, fix some $Y \subseteq [m-1]$ of size $y$. $Y$ contains $\left(\begin{smallmatrix}y\\t\end{smallmatrix}\right)$ sets of size $t$. The probability that a given random subset of size $t$ lies inside $Y$ is
    \begin{equation}
        p \coloneq \frac{\begin{pmatrix}
            y\\t
        \end{pmatrix}}{\begin{pmatrix}
            m-1\\t
        \end{pmatrix}}.
    \end{equation}
    Given $L$ independently chosen hyperedges, the probability that $Y$ contains none of them is
    \begin{equation}
        (1-p)^L \leq e^{-pL}.
    \end{equation}
    Union bounding over all possible $Y$, the probability that some $Y$ contains none of the $L$ hyperedges is at most
    \begin{equation}
        \begin{pmatrix}
            m-1\\y
        \end{pmatrix}e^{-pL}.
    \end{equation}
    This quantity is less than $1$ if
    \begin{equation}
        L = \left\lceil\frac{\begin{pmatrix}
            m-1\\t
        \end{pmatrix}}{\begin{pmatrix}
            y\\t
        \end{pmatrix}}\left(\ln\begin{pmatrix}
            m-1\\y
        \end{pmatrix}+1\right)\right\rceil.
    \end{equation}
    The size of $|\mathcal{H}|$ in the statement of the lemma is an upper bound rather than an equality since the random sampling can sample the same hyperedge multiple times.
\end{proof}
We let $E \in \mathcal{H}$ denote the situation where $E$ is a hyperedge of $\mathcal{H}$.
\begin{definition}[Punctured set with hypergraph protection]
    Given a non-negative integer $w$ and a $t$-uniform hypergraph $\mathcal{H}$ on $m-1$ points, define
    \begin{equation}
        \Gamma_{w, \mathcal{H}} = \{A \subseteq [m-1]: |A| \leq w \text{ and } E \not\subseteq A \text{ for all } E \in \mathcal{H}\}.
    \end{equation}
\end{definition}
We will think of $\mathcal{H}$ as protecting various points, like $y$-sets as in Definition~\ref{def:hypergraph_protecting_y_sets}. Then, $\Gamma_{w,\mathcal{H}}$ will be used as a puncture set, which does not puncture protected points. Note that $\Gamma_{w, \mathcal{H}}$ is a downset by definition.
\begin{definition}[Subcubes of the Boolean Hypercube]
    Given a point $L \subseteq [m-1]$, the subcube defined by $L$ is denoted $2^L \subseteq 2^{[m-1]}$, and is the set of subsets of $L$. Given points $E,L$ for which $E \subseteq L \subseteq [m-1]$, the cube defined by $E$ and $L$ is $[E,L] \coloneq \{V:E \subseteq V \subseteq L\}$.
\end{definition}
We can see that $2^L$ and $[E,L]$ may be viewed geometrically as subcubes of the Boolean hypercube. $2^L$ has size $2^{|L|}$, and $[E,L]$ has size $2^{|L|-|E|}$.
\begin{lemma}\label{lem:protected_puncture_dist_and_dim}
    Let $\mathcal{H}$ be a $t$-uniform hypergraph on $m-1$ points and $y \geq t$ be an integer. Suppose $\mathcal{H}$ protects the $y$-sets, and $|\mathcal{H}|$ is as in Lemma~\ref{lem:hypergraph_existence}. Let $w < y$ be a positive integer.
    \begin{enumerate}
        \item Suppose $L \subseteq [m-1]$ has $|L| \geq y$. Then
        \begin{equation}
            \left|2^L \setminus \Gamma_{w,\mathcal{H}}\right| \geq 2^{|L|-t}.
        \end{equation}
        \item We have
        \begin{equation}
            \left|\Gamma_{w,\mathcal{H}}\right| \geq \sum_{i=0}^{w}\begin{pmatrix}
                m-1\\i
            \end{pmatrix}\left[1-\left(\ln\begin{pmatrix}m-1\\y\end{pmatrix}+2\right)\frac{\begin{pmatrix}w\\t\end{pmatrix}}{\begin{pmatrix}y\\t\end{pmatrix}}\right].
        \end{equation}
    \end{enumerate}
\end{lemma}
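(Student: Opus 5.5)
For part 1, I will use the protection property and look at the subcube $[E,L]$. Since $|L| \geq y$, choose any $Y \subseteq L$ with $|Y| = y$. Because $\mathcal{H}$ protects the $y$-sets, there is a hyperedge $E \in \mathcal{H}$ with $E \subseteq Y \subseteq L$. Every $V \in [E,L]$ contains $E$, so $V \notin \Gamma_{w,\mathcal{H}}$ by the definition of $\Gamma_{w,\mathcal{H}}$. This holds regardless of the size condition $|V| \leq w$. Hence $[E,L] \subseteq 2^L \setminus \Gamma_{w,\mathcal{H}}$. Since $|E| = t$, we get $|[E,L]| = 2^{|L|-t}$, which gives the bound. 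This step is immediate.

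For part 2, I will count how many points are removed from $\Gamma_0 \coloneq \{A \subseteq [m-1] : |A| \leq w\}$ by the protection. Note that $\Gamma_0 \setminus \Gamma_{w,\mathcal{H}} = \bigcup_{E \in \mathcal{H}} \{A \in \Gamma_0 : E \subseteq A\}$, and apply a union bound. For a fixed $E$ of size $t$, the number of $A \subseteq [m-1]$ with $|A| \leq w$ and $E \subseteq A$ is $\sum_{i=t}^{w}\binom{m-1-t}{i-t}$. I then need to compare this with $\sum_{i=0}^w \binom{m-1}{i}$ times $\binom{w}{t}/\binom{m-1}{t}$, and double counting does this. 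For each $i$, the number of pairs $(E,A)$ with $|E| = t$, $|A| = i$ and $E \subseteq A$ is $\binom{m-1}{i}\binom{i}{t} = \binom{m-1}{t}\binom{m-1-t}{i-t}$. So $\binom{m-1-t}{i-t} = \binom{m-1}{i}\binom{i}{t}/\binom{m-1}{t} \leq \binom{m-1}{i}\binom{w}{t}/\binom{m-1}{t}$ for $i \leq w$. Summing over $i$ gives that each hyperedge removes at most $\binom{w}{t}\binom{m-1}{t}^{-1}\sum_{i=0}^w\binom{m-1}{i}$ points.

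Multiplying by $|\mathcal{H}|$ finishes the count. Using the bound of Lemma~\ref{lem:hypergraph_existence} together with $\lceil x \rceil \leq x+1$ gives
\[
|\mathcal{H}| \cdot \frac{\binom{w}{t}}{\binom{m-1}{t}} \leq \frac{\binom{w}{t}}{\binom{y}{t}}\left(\ln\binom{m-1}{y}+1\right) + \frac{\binom{w}{t}}{\binom{m-1}{t}}.
\]
The only slightly delicate point is absorbing the ceiling's $+1$ into the stated constant $2$. For this I use $\binom{m-1}{t} \geq \binom{y}{t}$, which holds since $y \leq m-1$. Then the extra term is at most $\binom{w}{t}/\binom{y}{t}$, and the total is at most $\left(\ln\binom{m-1}{y}+2\right)\binom{w}{t}/\binom{y}{t}$. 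Finally, $|\Gamma_{w,\mathcal{H}}| \geq |\Gamma_0| - |\Gamma_0\setminus\Gamma_{w,\mathcal{H}}|$ yields the claimed inequality. I expect no real obstacle here; the main care is in the double-counting identity and this ceiling bookkeeping.
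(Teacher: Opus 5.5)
Your proposal is correct and follows the paper's proof essentially step for step. Part 1 uses the subcube $[E,L]$, and part 2 uses the per-edge count $\sum_{i=t}^{w}\binom{m-1-t}{i-t}$ followed by a union bound over $\mathcal{H}$. Beyond the paper, you supply the double-counting identity behind the per-edge fraction $\binom{w}{t}/\binom{m-1}{t}$, and you absorb the ceiling's $+1$ into the constant $2$ via $\binom{m-1}{t}\geq\binom{y}{t}$; the paper leaves both details implicit.
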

\begin{proof}
    On the first point, $\mathcal{H}$ contains an edge $E \subseteq L$ for which $|E| = t$. We then have $[E,L] \subseteq 2^L \setminus \Gamma_{w,\mathcal{H}}$, and $[E,L]$ has size $2^{|L|-t}$.

    On the second point, we have that $\{A \subseteq [m-1]: |A| \leq w\}$ has size $\sum_{i=0}^w\left(\begin{smallmatrix}m-1\\i\end{smallmatrix}\right)$. One edge $E \in \mathcal{H}$ excludes $\left(\begin{smallmatrix}(m-1)-t\\i-t\end{smallmatrix}\right)$ points from membership in $\Gamma_{w,\mathcal{H}}$ of size $i$, for each $i = t, \ldots, w$. This is true because there are this many sets $A\subseteq [m-1]$ of size $i$ that contain $E$. One edge $E \in \mathcal{H}$ on its own thus excludes
    \begin{equation}
        \sum_{i=t}^w\begin{pmatrix}
            (m-1)-t\\i-t
        \end{pmatrix}
    \end{equation}
    points from membership in $\Gamma_{w,\mathcal{H}}$. Then,
    \begin{align}
        \sum_{i=t}^w\begin{pmatrix}
            (m-1)-t\\i-t
        \end{pmatrix}&\leq\frac{\begin{pmatrix}
            w\\t
        \end{pmatrix}}{\begin{pmatrix}
            m-1\\t
        \end{pmatrix}}\sum_{i=t}^w\begin{pmatrix}
            m-1\\i
        \end{pmatrix}\\
        &\leq \frac{\begin{pmatrix}
            w\\t
        \end{pmatrix}}{\begin{pmatrix}
            m-1\\t
        \end{pmatrix}}\sum_{i=0}^w\begin{pmatrix}
            m-1\\i
        \end{pmatrix}
    \end{align}
    Noting that $\sum_{i=0}^w\left(\begin{smallmatrix}m-1\\i\end{smallmatrix}\right)$ is the size of the puncture set if no points are protected, one edge $E \in \mathcal{H}$ on its own thus excludes a fraction at most $\frac{\left(\begin{smallmatrix}w\\t\end{smallmatrix}\right)}{\left(\begin{smallmatrix}m-1\\t\end{smallmatrix}\right)}$ from membership in $\Gamma_{w,\mathcal{H}}$. By a union bound, we thus have
    \begin{equation}
        |\Gamma_{w,\mathcal{H}}| \geq \sum_{i=0}^w\begin{pmatrix}
            m-1\\i
        \end{pmatrix}\left[1-|\mathcal{H}|\frac{\begin{pmatrix}
            w\\t
        \end{pmatrix}}{\begin{pmatrix}
            m-1\\t
        \end{pmatrix}}\right],
    \end{equation}
    and substituting in $|\mathcal{H}|$, we derive the claimed bound.
\end{proof}
In the next two subsections, we will construct quantum codes with transversal $T$ gates by choosing a downset $\Delta$ to define the original code, and then choosing the puncture set to be $\Gamma_{w,\mathcal{H}}$ for a given choice of $w$ and $\mathcal{H}$. Just as in the previous subsection, there will be a parameter range in which we want to consider $\Delta$ to be a downset associated with a typical Reed-Muller code, that is, $h=1$ in the weighted Reed-Muller construction, and there will be a parameter range in which we want to take a truly weighted Reed-Muller code, that is, $h>1$. Whereas these were considered in the same analysis in the explicit construction (see Subsection~\ref{subsec:explicit_asymptotic_sub_constant_rate}), we will consider them separately now. The two points in Lemma~\ref{lem:protected_puncture_dist_and_dim} will be used to establish the distance and dimension of the quantum codes in question. Indeed, since the number of logical qubits in our code construction (see Proposition~\ref{prop:downsets_to_quantum_code}) is equal to the size of $\Gamma_{w,\mathcal{H}}$, it is clear that the second point of Lemma~\ref{lem:protected_puncture_dist_and_dim} will help us there. On the other hand, referring back to Proposition~\ref{prop:downsets_to_quantum_code}, the expression for the distance of the quantum code may be succinctly rewritten
\begin{equation}
    d = \min_{L \notin \Delta}\left|2^L\setminus \Gamma\right|,
\end{equation}
and so the first point of Lemma~\ref{lem:protected_puncture_dist_and_dim} will be used to bound the distance. Before moving to the actual constructions, the following Lemma will be useful in bounding the size of $|\Gamma_{w,\mathcal{H}}|$, showing that it has essentially the same size as the unprotected set $\{A \subseteq [m-1]: |A| \leq w\}$, in a sensible asymptotic regime.
\begin{lemma}\label{lem:puncture_set_scaling}
    For positive integers $m$ tending to infinity, let $t(m), w(m), y(m)$ be sequences of positive integers satisfying
    \begin{equation}
        1 \leq t(m) \leq w(m) < y(m) < m-1.
    \end{equation}
    Let $\mathcal{H}(m)$ be a $t(m)$-uniform hypergraph on $m-1$ points protecting the $y(m)$-sets of size as in Lemma~\ref{lem:hypergraph_existence}. Suppose
    \begin{equation}\label{eq:puncture_size_condition}
        t(m)\ln\left(\frac{y(m)}{w(m)}\right) - \ln(m) \to \infty\text{ as } m \to \infty.
    \end{equation}
    Then,
    \begin{equation}
        |\Gamma_{w(m),\mathcal{H}(m)}| = (1-o(1))\sum_{i=0}^{w(m)}\begin{pmatrix}
            m-1\\i
        \end{pmatrix}.
    \end{equation}
    A sufficient condition for Equation~\eqref{eq:puncture_size_condition} is
    \begin{equation}\label{eq:puncture_size_sufficient}
        \frac{t(m)(y(m)-w(m))}{m} - \ln m \to \infty.
    \end{equation}
\end{lemma}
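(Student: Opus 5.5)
The plan is to invoke the second point of Lemma~\ref{lem:protected_puncture_dist_and_dim}, which already gives
\begin{equation*}
    |\Gamma_{w,\mathcal{H}}| \geq \sum_{i=0}^{w}\binom{m-1}{i}\left[1-\epsilon(m)\right],\qquad \epsilon(m) \coloneq \left(\ln\binom{m-1}{y}+2\right)\frac{\binom{w}{t}}{\binom{y}{t}}.
\end{equation*}
Since $\Gamma_{w,\mathcal{H}}$ is contained in $\{A\subseteq[m-1]:|A|\leq w\}$, the matching upper bound is trivial. It therefore suffices to show $\epsilon(m)\to 0$ under Equation~\eqref{eq:puncture_size_condition}.

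For the ratio of binomials, I would write
\begin{equation*}
    \frac{\binom{w}{t}}{\binom{y}{t}}=\prod_{j=0}^{t-1}\frac{w-j}{y-j}\leq\left(\frac{w}{y}\right)^{t}=\exp\left(-t\ln\frac{y}{w}\right),
\end{equation*}
using $(w-j)/(y-j)\leq w/y$ because $w<y$. For the logarithmic prefactor, note that $\binom{m-1}{y}\leq 2^{m-1}$, so $\ln\binom{m-1}{y}+2\leq m\ln 2+2\leq 2m$ for large $m$.

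Combining these gives
\begin{equation*}
    \epsilon(m)\leq 2\exp\left(\ln m - t\ln\frac{y}{w}\right),
\end{equation*}
which tends to $0$ by Equation~\eqref{eq:puncture_size_condition}. This proves the main claim.

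For the sufficient condition~\eqref{eq:puncture_size_sufficient}, I would use $\ln(y/w)=-\ln(1-(y-w)/y)\geq (y-w)/y\geq (y-w)/m$, since $y<m$. Then $t\ln(y/w)-\ln m\geq t(y-w)/m-\ln m\to\infty$.

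There is no real obstacle here. The only care needed is in the prefactor: the crude bound $\ln\binom{m-1}{y}=O(m)$ is exactly what produces the $\ln m$ term in the hypothesis.
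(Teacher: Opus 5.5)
Your proposal is correct and follows the paper's proof essentially step for step. It uses the containment in $\{A\subseteq[m-1]:|A|\leq w\}$ for the upper bound and the second point of Lemma~\ref{lem:protected_puncture_dist_and_dim} for the lower bound. It then applies the product bound giving $(w/y)^t$ together with an $\mathcal{O}(m)$ bound on the logarithmic prefactor, and uses $-\ln(1-x)\geq x$ with $y<m$ for the sufficient condition. Your explicit estimate $\epsilon(m)\leq 2\exp\left(\ln m - t\ln(y/w)\right)$ just makes the paper's ``$o(1/m)$'' step slightly more concrete.
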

\begin{proof}
    We drop the $m$-dependence in this proof for brevity.
    
    $\Gamma_{w,\mathcal{H}}$ is contained in $\{A \subseteq [m-1]: |A| \leq w\}$ whose size is $\sum_{i=0}^w\left(\begin{smallmatrix}m-1\\i\end{smallmatrix}\right)$, and so it suffices to show that the term
    \begin{equation}
        \left(\ln\begin{pmatrix}
            m-1\\y
        \end{pmatrix}+2\right)\frac{\begin{pmatrix}
            w\\t
        \end{pmatrix}}{\begin{pmatrix}
            y\\t
        \end{pmatrix}}
    \end{equation}
    in Lemma~\ref{lem:protected_puncture_dist_and_dim} is $o(1)$. We have $\ln\left(\begin{smallmatrix}m-1\\y\end{smallmatrix}\right) + 2 \leq m + \mathcal{O}(1)$, and so it suffices to show that $\frac{\left(\begin{smallmatrix}w\\t\end{smallmatrix}\right)}{\left(\begin{smallmatrix}y\\t\end{smallmatrix}\right)}$ is $o(1/m)$. We have
    \begin{equation}
        \frac{\begin{pmatrix}
            w\\t
        \end{pmatrix}}{\begin{pmatrix}
            y\\t
        \end{pmatrix}} = \prod_{i=0}^{t-1}\frac{w-i}{y-i}\leq\left(\frac{w}{y}\right)^t.
    \end{equation}
    Equation~\eqref{eq:puncture_size_condition} says that the right-hand side is $o(1/m)$.

    Finally, we show that Equation~\eqref{eq:puncture_size_sufficient} implies~\eqref{eq:puncture_size_condition}. Indeed,
    \begin{equation}
        \ln\left(\frac{y}{w}\right) = -\ln\left(1-\frac{y-w}{y}\right) \geq \frac{y-w}{y} \geq \frac{y-w}{m},
    \end{equation}
    and therefore
    \begin{equation}
        t\ln\left(\frac{y}{w}\right) - \ln(m) \geq \frac{t(y-w)}{m} - \ln(m) \to \infty.
    \end{equation}
\end{proof}

\subsection{Puncturing the Reed-Muller Code with Protected Points}\label{subsec:rm_protect}

We again consider an asymptotic family of quantum codes indexed by $m$. In this subsection, we take the downset defining the un-punctured monomial code to be
\begin{align}
    \Delta(m) &\coloneq \{A \subseteq [m] : |A| \leq r(m)\}\\
    r(m) &\coloneq \left\lfloor \frac{m-1}{3}\right\rfloor,
\end{align}
that is, the un-punctured monomial code is the maximal $8$-divisible Reed-Muller code. Now fix some constant $p \in (0,1/3]$. Let $t(m), w(m), y(m)$ be sequences of positive integers satisfying
\begin{align}
    t(m) &= m^{1/2} + \mathcal{O}(1),\\
    w(m) &= p\cdot m - m^{3/4} + \mathcal{O}(1),\\
    y(m) &= r(m) + 1.
\end{align}
We let $\mathcal{H}(m)$ be a $t(m)$-uniform hypergraph protecting the $y(m)$-sets, with size as given in Lemma~\ref{lem:hypergraph_existence}; we will puncture the set $\Gamma_{w(m), \mathcal{H}(m)}$. Turning to Proposition~\ref{prop:downsets_to_quantum_code}, we note that we have $\Gamma \subseteq \Delta$, $\Delta \subseteq \Delta^\perp$, and $C(\Delta)$ is $8$-divisible. The code thus has $|\Gamma_{w(m),\mathcal{H}(m)}|$ logical qubits. Noting that Equation~\eqref{eq:puncture_size_sufficient} holds, and we have $1 \leq t(m) \leq w(m) < y(m) < m-1$, we have
\begin{equation}
    |\Gamma_{w(m), \mathcal{H}(m)}| = \Omega\left(2^{(H_2(p)+o(1))m}\right)
\end{equation}
by Lemma~\ref{lem:puncture_set_scaling}. As discussed in the previous subsection, an alternative expression for the distance of the quantum code from Proposition~\ref{prop:downsets_to_quantum_code} is
\begin{equation}
    d = \min_{L \notin\Delta}\left|2^L\setminus \Gamma\right|.
\end{equation}
Since $y(m) = r(m)+1$, we know that every $L \notin\Delta(m)$ has $|L|\geq y(m)$. If $L \subseteq [m-1]$, by the first point of Lemma~\ref{lem:protected_puncture_dist_and_dim}, we have
\begin{equation}
    \left|2^L\setminus\Gamma_{w(m), \mathcal{H}(m)}\right| \geq 2^{y(m)-t(m)}.
\end{equation}
On the other hand, if $m \in L$, then every element in $2^L$ containing $m$ is not in $\Gamma_{w(m), \mathcal{H}(m)}$, and there are at least $2^{y(m)-1}$ such subsets. We thus find that the distance of the quantum code is
\begin{equation}
    d \geq 2^{y(m)-t(m)} = 2^{m\left(\frac{1}{3}+o(1)\right)}.
\end{equation}
As always, $n \leq 2^m$, and so we obtain the asymptotic parameters of the quantum code
\begin{align}
    k &= \Omega\left(n^{H_2(p)+o(1)}\right),\\
    d &= \Omega\left(n^{\frac{1}{3}+o(1)}\right).
\end{align}
\subsection{Puncturing the Weighted Reed-Muller Code with Protected Points}\label{subsec:wrm_protect}

Whereas in the last subsection we punctured the ordinary Reed-Muller code with protected points to consider the dimension scaling $k = \Omega\left(n^{\alpha + o(1)}\right)$ for $\alpha \in (0, H_2(1/3)]$, here we will consider puncturing the weighted Reed-Muller code ($h > 1$) to consider $\alpha \in (H_2(1/3),1)$. To this end, we fix some constant $p \in (1/3,1/2)$, and let $h(m)$ and $r(m)$ be sequences of positive integers satisfying
\begin{align}
    r(m) &= p\cdot m + m^{3/4} + \mathcal{O}(1),\\
    h(m) &= (3p-1)\cdot m + 3\cdot m^{3/4} + \mathcal{O}(1).
\end{align}
As always, we consider $m$ weights
\begin{equation}
    \boldsymbol{s} = (\underbrace{1, 1, \ldots, 1}_{m-1}, h)
\end{equation}
and the corresponding choice of $\Delta$ is $\mathcal{W}_{\boldsymbol{s},r}$, as in Definition~\ref{def:weighted_RM} where we have dropped $m$-dependence here for clarity. We can see that $C(\Delta)$ is $8$-divisible from Lemma~\ref{lem:divis_weighted_RM}, as long as we take the $\mathcal{O}(1)$ term in $r(m)$ small enough (we also have $\Delta \subseteq \Delta^\perp$).

Next, we let $t(m)$ and $y(m)$ be sequences of positive integers satisfying
\begin{align}
    t(m) &= m^{1/2} + \mathcal{O}(1),\\
    y(m) &= r(m) + 1.
\end{align}
We then let $\mathcal{H}(m)$ be a $t(m)$-uniform hypergraph protecting the $y(m)$-sets, whose size is as in Lemma~\ref{lem:hypergraph_existence}. We also choose a sequence of positive integers $w(m)$ satisfying
\begin{equation}
    w(m) = p\cdot m + \mathcal{O}(1),
\end{equation}
and we will puncture the set $\Gamma_{w(m), \mathcal{H}(m)} \subseteq \Delta(m)$. Noting that $1 \leq t(m) \leq w(m) < y(m) < m-1$, we note that Equation~\eqref{eq:puncture_size_sufficient} holds, and using Lemma~\ref{lem:puncture_set_scaling}, we obtain
\begin{equation}
    k = \Omega\left(n^{H_2(p)+o(1)}\right),
\end{equation}
just as in the last subsection. Again, we use the expression for the distance $d = \min_{L \notin \Delta}\left|2^L\setminus \Gamma\right|$, dropping $m$-dependence for clarity. To compute this, we must consider two types of points $L$ not in $\Delta(m)$. First, if $L$ does not contain the last variable of weight $h(m)$, then it is a subset of $[m-1]$ of size at least $r+1$, and the first point of Lemma~\ref{lem:protected_puncture_dist_and_dim} gives us $\left|2^L\setminus \Gamma_{w(m), \mathcal{H}(m)}\right| \geq 2^{r(m)-t(m)+1}$. On the other hand, suppose that $L$ does contain the last variable of weight $h(m)$. Then its support on $[m-1]$ has size at least $r(m)-h(m)+1$, and since $\Gamma \subseteq 2^{[m-1]}$, we can guarantee that $\left|2^L\setminus \Gamma_{w(m), \mathcal{H}(m)}\right| \geq 2^{r(m)-h(m)+1}$ in this case. Since $h(m) > t(m)$, we obtain a distance lower bound $2^{r(m)-h(m)+1}$, i.e.,
\begin{equation}
    d = \Omega\left(2^{m\left(1-2p+o(1)\right)}\right).
\end{equation}
In conclusion, we obtain the distance scaling
\begin{equation}
    d = \Omega\left(n^{1-2H_2^{-1}(\alpha)+o(1)}\right),
\end{equation}
where $H_2^{-1}:(0,1) \to (0,1/2)$ is the inverse binary entropy function.

\subsection{Concatenating with Haah's Divisible Tower}

In subsections~\ref{subsec:rm_protect} and~\ref{subsec:wrm_protect}, we established the existence of asymptotic families of quantum codes with transversal $T$ gates and parameters
\begin{equation}
    \left[\left[n,\Omega\left(n^{\alpha + o(1)}\right),\Omega\left(n^{\beta_{\text{exist, WRM}}(\alpha)+o(1)}\right)\right]\right],
\end{equation}
where
\begin{equation}\label{eq:beta_exist_wrm}
    \beta_{\text{exist, WRM}}(\alpha) = \min\left\{\frac{1}{3}, 1-2H_2^{-1}(\alpha)\right\},
\end{equation}
and $H_2^{-1}:(0,1)\to(0,1/2)$ is the inverse binary entropy function. In this subsection, we will complete the proof of Theorem~\ref{thm:sub_constant_exist} by concatenating our construction with Haah's construction~\cite{haah2018towers}.

Haah demonstrates the existence of quantum CSS codes with transversal $T$ gates with parameters $[[n,\Omega(n^{1/2}), \Omega(n^{1/2})]]$. In that work, the notion of $T$ gate transversality is as follows: for each physical qubit $i \in [n]$, the gate $T^{a_i}$ is executed on qubit $i$ for some integer $a_i \in \{1,3,5,7\}$ and (without Clifford corrections), the logical $T$ gate is executed separately on each logical qubit. From this, one may construct a code with the same notion of $T$ gate transversality that we consider in this paper, without changing the asymptotic parameters up to constants. Indeed, we may ``repeat'' the $i$'th qubit $a_i$ times in each of its codestates. Slightly more formally, we may encode the $i$'th physical qubit of the code into the repetition code of length $a_i$ in the $Z$-basis. The resulting code has a transversal $T$ gate in the sense of this paper, that is, physical $T$ executed on each physical qubit separately leads to logical $T$ on each logical qubit separately, without Clifford corrections.

Now, if we have two quantum CSS codes with parameters $[[n_1, k_1, d_1]]$ and $[[n_2, k_2, d_2]]$, both with transversal $T$ gates, we may concatenate them to form quantum CSS codes with parameters $[[n_1n_2, k_1k_2, \geq d_1d_2]]$, with a transversal $T$ gate. Suppose that the first two codes come from asymptotic families where $k_i = n_i^{\alpha_i}$ and $d_i = n_i^{\beta_i}$ for $i = 1,2$ and for some constants $\alpha_i, \beta_i \in (0,1)$. Moreover, for some constant $\theta \in (0,1)$, suppose we choose $n_1 = N^\theta$ and $n_2 = N^{1-\theta}$. One checks that the resulting concatenated code has parameters $[[N,N^\alpha,N^\beta]]$, where
\begin{align}
    \alpha &= \theta \alpha_1 + (1-\theta)\alpha_2\\
    \beta &\geq \theta\beta_1 + (1-\theta)\beta_2.
\end{align}
That is, considering achievable points in the $(\alpha,\beta)$-plane, concatenation allows us to draw line segments between them to construct more achievable points. 

Haah's code gives an achievable point at $(1/2,1/2)$, and we may therefore achieve the exponents $(\alpha,1/2)$ for any $\alpha < 1/2$ by hard-coding logical qubits to fixed states to reduce the dimension. We may also consider the boundary of achievable exponent pairs in Equation~\eqref{eq:beta_exist_wrm} above. With this, and the achievable point $(1/2,1/2)$, the boundary of achievable exponent pairs that we may achieve is formed by joining the point $(1/2,1/2)$ to $(H_2(1/3), 1/3)$, giving the expression $\beta_{\text{exist}}(\alpha)$ in Theorem~\ref{thm:sub_constant_exist}.

\section*{Acknowledgements}

\addcontentsline{toc}{section}{Acknowledgements}

The author is grateful for the encouragement of Isaac Chuang in pursuing this project. The author acknowledges Anqi Gong for pointing them to the footprint bound, which led to the results in this work. 

The author acknowledges support from the MIT Department of Physics, from the MIT-IBM Watson AI Lab, and from NSF grant PHY-2325080. This preprint is assigned number MIT-CTP/6093.

\subsection*{Statement on AI Usage} All quantum code constructions were conceived by the author. The author acknowledges the use of ChatGPT 5.6 Sol for assistance in analysing the randomised construction in Section~\ref{sec:random_construction}, for generating code to make Figures~\ref{fig:sub_constant_achievable} and~\ref{fig:constant_achievable_parameters}, and for reviewing the paper. The author takes responsibility for all content.

\printbibliography

\end{document}